\documentclass[11pt]{article}

\usepackage[a4paper]{geometry}
\geometry{verbose,tmargin=3cm,bmargin=3cm,lmargin=2cm,rmargin=2cm,headheight=1cm,headsep=1cm,footskip=1cm}

\usepackage{subcaption}

\usepackage{amssymb}
\usepackage{amsmath,amsfonts,amscd,amsthm,xspace}
\usepackage{authblk}
\usepackage{epsfig}
\usepackage{wrapfig}
\usepackage{enumitem}
\usepackage{dblfloatfix}
\usepackage{array}
\usepackage{mathtools}
\usepackage{subcaption}
\usepackage{fixltx2e}
\usepackage{color}
\usepackage[utf8]{inputenc}
\usepackage[ruled]{algorithm2e}
\usepackage[T1]{fontenc}
\usepackage{url}
\usepackage{booktabs}
\usepackage{amsfonts}
\usepackage{nicefrac}
\usepackage{microtype}   
\usepackage{graphicx}
\graphicspath{{Figs/}}

\usepackage{newtxtext}


\newtheorem{theorem}{Theorem}
\numberwithin{theorem}{section}
\newtheorem{lemma}{Lemma}
\newtheorem*{nonumbertheorem}{Theorem}

\numberwithin{corollary}{section}

\theoremstyle{definition}
\newtheorem{definition}{Definition}
\numberwithin{definition}{section}


\newcommand{\nn}{n}
\newcommand{\nm}{m}

\newcommand{\nt}{T}

\DeclareMathOperator*{\argmin}{arg\,min}

\makeatletter
\newcommand{\printfnsymbol}[1]{%
  \textsuperscript{\@fnsymbol{#1}}%
}
\makeatother
  
\date{}

\begin{document}

\title{Robustness and Consistency in Linear Quadratic Control with Untrusted Predictions}

\author[1]{
  Tongxin~Li\thanks{denotes equal contribution}}
\author[2]{Ruixiao Yang \printfnsymbol{1}} \author[3]{Guannan Qu} \author[1]{Guanya Shi} \author[2]{Chenkai Yu} \author[1]{Adam Wierman} \author[1]{Steven Low}
\affil[1]{California Institute of Technology}
\affil[2]{Tsinghua University}
\affil[3]{Carnegie Mellon University}

\maketitle
\begin{abstract}
    We study the problem of learning-augmented predictive linear quadratic control. Our goal is to design a controller that balances \textit{``consistency''}, which measures the competitive ratio when predictions are accurate, and \textit{``robustness''}, which bounds the competitive ratio when predictions are inaccurate. 
    We propose a novel $\lambda$-confident policy and provide a competitive ratio upper bound
    that depends on a trust parameter $\lambda\in [0,1]$ set based on the confidence in the predictions and some prediction error $\varepsilon$.
 Motivated by online learning methods, we design a self-tuning policy that adaptively learns the trust parameter $\lambda$ with a competitive ratio that depends on $\varepsilon$ and the variation of system perturbations and predictions. We show that its competitive ratio is  bounded from above by $ 1+{O(\varepsilon)}/({{\Theta(1)+\Theta(\varepsilon)}})+O(\mu_{\mathsf{Var}})$ where $\mu_\mathsf{Var}$ measures the variation of perturbations and predictions. It implies that when the variations of perturbations and predictions are small, by automatically adjusting the trust parameter online, the self-tuning scheme ensures a competitive ratio that does not scale up with the prediction error $\varepsilon$.
\end{abstract}

We study a classical online linear quadratic control problem where the controller has access to untrusted predictions/advice during each round, potentially from a black-box AI tool. 

One consequence of the success of machine learning is that accurate predictions are available for many online decision and control problems. For example, the development of deep learning has enabled generic prediction models in various domains, e.g. weather, demand forecast, and user behaviors. Such predictions are powerful because future information plays a significant role in optimizing the current control decision. The availability of accurate future predictions can potentially lead to order-of-magnitude performance improvement in decision and control problems, where one can simply plug-in the predictions and achieve near optimal performance when compared to the best control actions in hindsight, a.k.a., \emph{consistency}.
However, an important caveat is that the predictions are helpful only when they are accurate, which is not guaranteed in many scenarios. Since many predictions are obtained from black box AI models like neural networks, there is no uncertainty quantification and it is unclear whether the predictions are accurate.
In the case when the predictions are not accurate, the consequences can be catastrophic, leading to unbounded worst-case performance, e.g., an unbounded competitive ratio. The possibility of such worst-case unbounded competitive ratio prevents the use of ML predictions in safety-critical applications that are adverse to potential risks. 

The use of predictions described above is a sharp contrast to the approaches developed by the online algorithm community, where the algorithms have access to no future prediction, yet can be \emph{robust} to all future variations and achieve a finite competitive ratio. While such algorithms miss out on the improvements possible when accurate predictions are available, their robustness properties are necessary in safety-critical settings. Therefore, a natural question arises: 
\begin{center}
    \emph{Can such adversarial guarantees be provided for control policies that use black-box AI predictions?}
\end{center}

To provide adversarial guarantees necessarily means not precisely following the black-box AI predictions.  Thus, there must be a trade-off between the performance in the typical case (consistency) and the quality of the adversarial guarantee (robustness).  Trade-offs between consistency and robustness have received considerable attention in recent years in the online algorithms community, starting with the work of \cite{lykouris2018competitive}, but our work represents the first work in the context of control.

\textbf{Contributions.}  In this paper, we answer the question above in the affirmative, in the context of linear quadratic control, providing an novel algorithm that trades off consistency and robustness to provide adversarial gaurantees on the use of untrusted predictions.

Our first result provides a novel online control algorithm, termed \textit{$\lambda$-confident control}, that provides a competitive ratio of 
$1+\min\{O(\lambda^2\varepsilon)+ O(1-\lambda)^2,O(1)+O(\lambda^2)\}$, where $\lambda\in [0,1]$ is a \textit{trust parameter} set based on the confidence in the predictions, and $\varepsilon$ is the prediction error (Theorem~\ref{thm:upper_bound}). When the predictions are accurate ($\varepsilon\approx 0$), setting $\lambda$ close to $1$ will obtain a competitive ratio close to $1$, and hence the power of the predictions is fully utilized; on the other hand, when the predictions are inaccurate ($\varepsilon$ very large), setting $\lambda\approx 0$ will still guarantee a constant competitive ratio, meaning the algorithm will still have good robustness guarantees when the predictions turn out to be bad. Therefore, our approach can get the best of both worlds, effectively using black-box predictions but still guaranteeing robustness.  

The above discussion highlights that the optimal choice of $\lambda$ depends on the prediction error, which may not be known a priori.  
Therefore, we further provide an adaptive, self-tuning learning policy (Algorithm~\ref{alg:self}) that selects $\lambda$ so as to learn the optimal parameter for the actual prediction error; thus selecting the optimal balance between robustness and consistency.
Our main result proves that the self-tuning policy maintains a competitive ratio bound that is always bounded regardless of the prediction error $\varepsilon$ (Theorem~\ref{thm:competitive_self_tuning}). This result is informally presented below. 
\begin{nonumbertheorem}[Informal]
\label{thm:competitive_self_tuning_informal}
Under our model assumptions, there is a  self-tuning online control algorithm that selects some $\lambda_t\in [0,1]$ for all $t=0,\ldots,\nt-1$ and achieves a
competitive ratio
\begin{align*}
\mathsf{CR}(\varepsilon) \leq 1+\frac{O(\varepsilon)}{{\Theta(1)+\Theta(\varepsilon)}}+O(\mu_{\mathsf{Var}})
\end{align*} as a function of the prediction error $\varepsilon$ where $\mu_\mathsf{Var}$ measures the variation of perturbations and predictions.
\end{nonumbertheorem}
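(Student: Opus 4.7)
The plan is to combine the per-instance competitive bound for $\lambda$-confident control from Theorem~\ref{thm:upper_bound} with a dynamic-regret guarantee for an online learning update on the trust parameter. I would proceed in three stages: a static oracle analysis that identifies the right target $\lambda^\star$, an online learning reduction that tracks this target on the fly, and a stitching step that turns the regret bound into the stated competitive ratio.

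First I would isolate the optimal static choice of $\lambda$. Writing the first branch of the Theorem~\ref{thm:upper_bound} bound as $f(\lambda) = a\lambda^2 \varepsilon + b(1-\lambda)^2$ for instance-dependent constants $a,b>0$, the first-order condition gives $\lambda^\star = b/(a\varepsilon+b)$ and hence $f(\lambda^\star) = ab\varepsilon/(a\varepsilon+b) = O(\varepsilon)/(\Theta(1)+\Theta(\varepsilon))$, matching the dominant term of the claimed bound. An oracle that knew $\varepsilon$ in advance could therefore attain the target competitive ratio by committing to this $\lambda^\star$, and the second branch of Theorem~\ref{thm:upper_bound} guarantees the whole expression stays bounded even for very large $\varepsilon$.

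Second, since $\varepsilon$ is not known a priori, I would cast the selection of $\lambda_t$ as an online convex optimization problem. The incremental cost of the $\lambda_t$-confident controller at round $t$ admits a strongly convex surrogate of the form $g_t(\lambda) = a_t\lambda^2\varepsilon_t + b_t(1-\lambda)^2$, with $a_t,b_t$ depending on the current state, perturbation, and prediction. Running projected online gradient descent on $\lambda_t \in [0,1]$ against $\{g_t\}$ with a step size tuned to the strong-convexity parameter yields dynamic regret $O(P_\nt)$ against the sequence of per-round minimizers, where $P_\nt = \sum_t |\lambda_t^\star - \lambda_{t-1}^\star|$ is their path length. The path length is in turn controlled by the variation of $(a_t,b_t,\varepsilon_t)$, which can be bounded in terms of $\mu_{\mathsf{Var}}$ by Lipschitz-type estimates on how $a_t,b_t$ depend on the perturbation and prediction at step $t$.

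The main obstacle, and the step I expect to require the most care, is the reduction to a per-round convex surrogate: because of the state dynamics, the cost at time $t$ depends on past choices $\lambda_{0:t-1}$ through the realized state, so rounds are not independent. I would handle this by invoking the stability analysis underlying Theorem~\ref{thm:upper_bound} to show that the state stays in a bounded tube around the offline-optimal trajectory for any admissible sequence of $\lambda_t$, so that the cross-time coupling contributes only lower-order terms absorbable into the $O(\mu_{\mathsf{Var}})$ slack. Once the OCO reduction is justified, dividing the cumulative surrogate bound by the $\Theta(\nt)$ offline optimal cost (under the standard controllability assumptions) yields exactly the additive decomposition $1 + O(\varepsilon)/(\Theta(1)+\Theta(\varepsilon)) + O(\mu_{\mathsf{Var}})$ claimed in the theorem.
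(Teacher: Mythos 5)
Your overall architecture (optimize the Theorem~\ref{thm:upper_bound} bound over $\lambda$ to get the $O(\varepsilon)/(\Theta(1)+\Theta(\varepsilon))$ term, then add a regret term for learning $\lambda$ online) matches the paper, and your Stage~1 computation of $\lambda^\star=b/(a\varepsilon+b)$ and $f(\lambda^\star)=ab\varepsilon/(a\varepsilon+b)$ is exactly the paper's step~\eqref{eq:4.31}. However, Stages~2--3 contain two genuine gaps. First, the obstacle you flag as the hardest step --- the cross-time coupling through the state --- is resolved in the paper not by a ``bounded tube'' stability argument with lower-order error terms, but by an \emph{exact} decoupling: Lemma~\ref{lemma:alg_opt} (Lemma~10 of the cited competitive-MPC work) shows $\mathsf{ALG}-\mathsf{OPT}=\sum_{t}\psi_t^\top H\psi_t$ where $\psi_t=\sum_{\tau\ge t}(F^\top)^{\tau-t}P(w_\tau-\lambda_t\widehat w_\tau)$ depends only on $\lambda_t$ and not on $\lambda_{0:t-1}$ or the realized state. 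The coupling does not contribute ``lower-order absorbable terms''; it vanishes identically, and without this lemma your claim that the coupling is negligible is unsupported --- indeed the paper stresses that naive coupling can be exponentially large for unstable $A$.

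Second, your OCO reduction is not implementable as stated: the per-round loss $\psi_t^\top H\psi_t$ depends on the \emph{future} tail $(w_\tau)_{\tau\ge t}$, so neither $g_t$ nor its gradient is observable at time $t$, and projected OGD against $\{g_t\}$ cannot be run. The paper instead uses Follow-The-Leader on the \emph{observed past}: $\lambda_t$ is the closed-form minimizer of $\sum_{s<t}\psi_s^\top H\psi_s$ restricted to data up to $t-1$, the optimal comparator is shown to be $\lambda^*=\lambda_{\nt}$ (Lemma~\ref{lemma:optimal_tuning}), and the variation hypothesis enters through the convergence bound $|\lambda_t-\lambda^*|=O\bigl((\mu_{\mathsf{VAR}}(\mathbf w)+\mu_{\mathsf{VAR}}(\widehat{\mathbf w}))/t\bigr)$ (Lemma~\ref{lemma:lambda}), after which $\mathsf{Regret}\le\|H\|\sum_t|\lambda_t-\lambda^*|^2\bigl\|\sum_\tau (F^\top)^{\tau-t}P\widehat w_\tau\bigr\|^2=O(\mu_{\mathsf{VAR}}^2)\sum_t t^{-2}=O(\mu_{\mathsf{VAR}}^2)$. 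Your path-length dynamic-regret route targets a comparator sequence you never need and, more importantly, does not address how the learner accesses the losses; to repair your proof you would need to replace OGD with a past-data-driven update and supply the variation-to-convergence argument that the paper carries out in Lemmas~\ref{lemma:bound_eta} and~\ref{lemma:series_1}--\ref{lemma:series_2}.
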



This result provides a worst-case performance bound for the use of untrusted predictions, e.g., the predictions from a black-box AI tool, regardless of the accuracy of the predictions.   The second term in the competitive ratio upper bound indicates a nontrivial non-linear dependency of $\mathsf{CR}(\varepsilon)$ and the prediction error $\varepsilon$, matching our experimental results shown in Section~\ref{sec:experiments}. The third term measures the variation of perturbations and predictions. Such a term is common in regret analysis based on the ``Follow The Leader'' (FTL) approach~\cite{hannan20164,kalai2005efficient}. For example, the regret analysis of the Follow the Optimal Steady State (FOSS) method in~\cite{li2019online} contains a similar  ``path length'' term that captures the variation of the state trajectory.


Proving our main result is complex due to the fact that, different from classical online learning models, the cost function in our problem depends on previous actions via a linear dynamical system (see~\eqref{eq:linear_dynamic}). The time coupling can even be exponentially large if the dynamical system is unstable. To tackle this time-coupling structure, we develop a new proof technique that relates the regret and competitive ratio with the convergence rate of the trust parameter. 

Finally, in Section \ref{sec:experiments} we demonstrate the effectiveness of our self-tuning approach using three examples: a robotic tracking problem, an adaptive battery-buffered EV charging problem and the Cart-Pole problem. For the robotic tracking and adaptive battery-buffered EV charging cases, we illustrate that the competitive ratio of the self-tuning policy performs nearly as well as the lower envelope formed by picking multiple trust parameters optimally offline. We also validate the practicality of our self-tuning policy by showing that it not only works well for linear quadratic control problems; it also performs well in the nonlinear Cart-Pole problem.



\textbf{Related Work.} Our work contributes to the growing literature on learning-augmented online algorithm design. There has been significant interest in the goal of trading-off consistency and robustness in order to ensure worst-case performance bounds for black-box AI tools in online problems. As discussed earlier, prediction based algorithms can achieve consistency, while online algorithms can have robustness. These two classes of algorithms can be viewed as two extremes, and a number of works attempt to develop algorithms that balance between consistency and robustness in settings like online caching \cite{lykouris2018competitive}, ski-rental \cite{angelopoulos2019online,purohit2018improving,wei2020optimal,bamas2020primal}, online set cover \cite{bamas2020primal}, secretary
and online matching \cite{antoniadis2020secretary}, and metric task systems \cite{antoniadis2020online}. For example, in the ski rental problem, \cite{purohit2018improving} proposes an algorithm that achieves $1+\lambda$ consistency and $1 + \frac{1}{\lambda}$ robustness for a tuning parameter $\lambda\in(0,1)$. 
Compared to these works, our setting is fundamentally more challenging because of the existence of dynamics in the control problem couples all decision points, and a mistake at one time can be magnified and propagated to all future time steps. 

Our work is also closely related to a broad literature on regret and competitive ratio analysis for Linear Quadratic Control (LQC) and Linear Quadratic Regulator (LQR) systems with predictions. In~\cite{yu2020power}, LQR regret analysis for Model Predictive Control (MPC) is given, assuming accurate predictions of perturbations. Inaccurate predictions are considered in~\cite{yu2020competitive}, with competitive results provided. It is proven in~\cite{yu2020power,yu2020competitive} that the action generated by MPC can be explicitly written as the action of optimal linear control plus a linear combination of inaccurate predictions. The competitive analysis of the consistent and robust control scheme in this work makes use of this fact in the analysis of the more challenging case of untrusted predictions. Other related regret and competitive ratio results for MPC include~\cite{NEURIPS2020_ed46558a,li2021information,zhang2021regret}.

While our work is the first to study learning-augmented control via the lens of robustness and consistency, there are two classical communities in control that are related to the goals of our work:  robust control and adaptive control.

Robust control is a large area that concerns the design of controllers with performance guarantees that are robust against model uncertainty or adversarial disturbances \cite{dullerud2013course}. Tools of robust control include $H_\infty$ synthesis \cite{doyle1988state,zhou1996robust} and robust MPC \cite{bemporad1999robust}. 
Like the robust control literature, our work also considers robustness, but our main focus is on balancing between robustness and consistency in a predictive control setting. Consistency is not a focus of the robust control literature.  Further, we focus on the metrics of competitive ratio and regret, which is different from the typical performance measures in the robust control literature, which focus on measures such as system norms \cite{zhou1996robust}. 

The design of our self-tuning control in Section \ref{sec:self-tuning-control} falls into the category of adaptive control. There is a rich body of literature studying Lyapunov stability and asymptotic convergence in adaptive control theory \cite{slotine1991applied}. Recently, there has been increasing interest in studying adaptive control with non-asymptotic metrics from learning theory. Typical results guarantee convergence in finite time horizons using measures such as regret \cite{abbasi2011regret,simchowitz2020naive,dean2019sample,agarwal2019online}, dynamic regret \cite{zhang2021regret,yu2020power,li2019online}, and competitive ratio \cite{NEURIPS2020_ed46558a,yu2020competitive}. Different from these works, this paper deploys an adaptive policy with the goal of balancing robustness and consistency.  Additionally, such results do not focus on incorporation of untrusted predictions.


    
    
    


\section{Preliminaries}
\label{sec:model}

We consider a Linear Quadratic Control (LQC) model. Throughout this paper, $\|\cdot\|$ denotes the $\ell_2$-norm for vectors and the matrix norm induced by the $\ell_2$-norm.  Denote by $x_t\in\mathbb{R}^{\nn}$ and $u_t\in\mathbb{R}^{\nm}$ the system state and action at each time $t$. We consider a linear dynamic system with adversarial perturbations,
\begin{align}
\label{eq:linear_dynamic}
    x_{t+1} = Ax_t + B u_t + w_t, \text{ for } t = 0,\ldots,\nt-1,
\end{align}
where $A\in\mathbb{R}^{\nn\times\nn}$ and $B\in\mathbb{R}^{\nn\times\nm}$, and $w_t\in\mathbb{R}^{\nn}$ denotes some unknown perturbation chosen adversarially. We make the standard assumption that the pair $(A,B)$ is stabilizable.
Without loss of generality, we also assume the system is initialized with some fixed $x_0\in\mathbb{R}^{n}$. The goal of control is to minimize the following quadratic costs given matrices $A,B,Q,R$ :
\begin{align*}
    J \coloneqq  \sum_{t=0}^{\nt-1} (x_t^\top Q x_t + u_t^\top R u_t) + x_{\nt} P x_{\nt},
\end{align*}
where $Q,R\succ 0$ are positive definite matrices, and $P$ is the solution of the following discrete algebraic Riccati equation (DARE), which must exist because $(A,B)$ is stabilizable and $Q,R\succ 0$~\cite{dullerud2013course}. 
\begin{align*}
    P=Q+A^\top P A - A^\top PB (R+B^\top P B)^{-1} B^\top PA.
\end{align*}
Given $P$,  we can define $K \coloneqq (R+B^\top P B)^{-1} B^\top P A$ as the optimal LQC controller in the case of no disturbance ($w_t =0$). Further, let $F\coloneqq A-BK$ be the closed-loop system matrix when using $u_t = -K x_t$ as the controller. By \cite{dullerud2013course}, $F$ must have a spectral radius $\rho(F)$ less than $1$. Therefore, Gelfand’s formula implies that there must exist a constant $C>0$, $\rho\in(0,1)$ s.t. $\left\Vert F^t\right\Vert\leq C\rho^t,\forall t\geq 0$.


Our model is a classical control model~\cite{khalil1996robust} and has broad applicability across various engineering fields. In the following, we introduce ML/AI predictions into the classical model and study the trade-off between consistency and robustness in this classical model for the first time. 

\subsection{Untrusted Predictions}

Our focus is on predictive control and we assume that, at the beginning of the control process, a sequence of predictions of the disturbances $\left(\widehat{w}_0,\ldots,\widehat{w}_{\nt-1}\right)$ is given to the decision maker. At time $t$, the decision maker observes $x_t, w_{t-1}$ and picks a decision $u_t$. Then, the environment picks $w_t$, and the system transitions to the next step according to \eqref{eq:linear_dynamic}. We emphasize that, at time $t$, the decision maker has no access to $(w_t,\ldots, w_T)$ and their values may be different from the predictions $(\widehat{w}_t,\ldots,\widehat{w}_T)$. Also, note that $w_t$ can be adversarially chosen at each time $t$, adaptively.

The assumption that a sequence of predictions available is $(\widehat{w}_t,\ldots, \widehat{w}_T)$ is not as strong as it may first appear, nor as strong as other similar assumptions made in literature, e.g.,~\cite{yu2020power,li2019online}, because we allow for prediction error. If there are no predictions or only a subset of predictions are available, we can simply set the unknown predictions to be zero and this does not affect our theoretical results and algorithms. 

In our model, there are two types of uncertainty. The first is caused by the \textit{perturbations} because the future perturbations $(w_t,\ldots,w_{\nt-1})$ are unknown to the controller at time $t$. The second is the \textit{prediction error} due to the mismatch $e_t\coloneqq  \widehat{w}_t-w_t$ between the perturbation $w_t$ and the prediction $\widehat{w}_t$ at each time. Formally, we define the prediction error as 
\begin{align}
\label{eq:prediction_error}
    \varepsilon\left(F,P,e_0,\ldots,e_{\nt-1}\right) & \coloneqq  \sum_{t=0}^{\nt-1}\left\|\sum_{\tau=t}^{\nt-1}\left(F^\top\right)^{\tau-t}P e_t\right\|^2.
\end{align}

Notice that the prediction error is not defined as a form of classical mean squared error for our problem. The reason is because the mismatch $e_t$ at each time has different impact on the system. Writing the prediction error as in~\eqref{eq:prediction_error} simplifies our analysis.  {In  fact, if we define $u_t$ and $\widehat{u}_t$ as two actions given by an optimal linear controller (formally defined in Section~\ref{sec:0_confident}) as if the true perturbations are $w_0,\ldots,w_{\nt-1}$ and $\widehat{w}_0,\ldots,\widehat{w}_{\nt-1}$, respectively, then it can be verified that $ \varepsilon=\sum_{t=0}^{\nt-1}\|u_t-\widehat{u}_t\|$, which is the accumulated action mismatch for an optimal linear controllers provided with different estimates of perturbations.} In Section~\ref{sec:experiments}, using experiments, we show that the competitive ratios (with a fixed ``trust parameter'', defined in~\ref{sec:confident_control}) grow linearly in the prediction error $\varepsilon$ defined in~\eqref{eq:prediction_error}.
Finally, we assume that the perturbations $\left(w_0,\ldots,w_{\nt-1}\right)$ and predictions $\left(\widehat{w}_0,\ldots,\widehat{w}_{\nt-1}\right)$ are uniformly bounded, i.e., there exist $\overline{w}>0$ and $\widehat{w}>0$ such that $\left\|w_t\right\|\leq \overline{w}$ and $\left\|\widehat{w}_t\right\|\leq \widehat{w}$ for all $0\leq t\leq \nt-1$. 
{In summary, Figure~\ref{fig:system} demonstrates the system model considered in this paper.}

\begin{figure}[ht]
\centering
\includegraphics[scale=0.47]{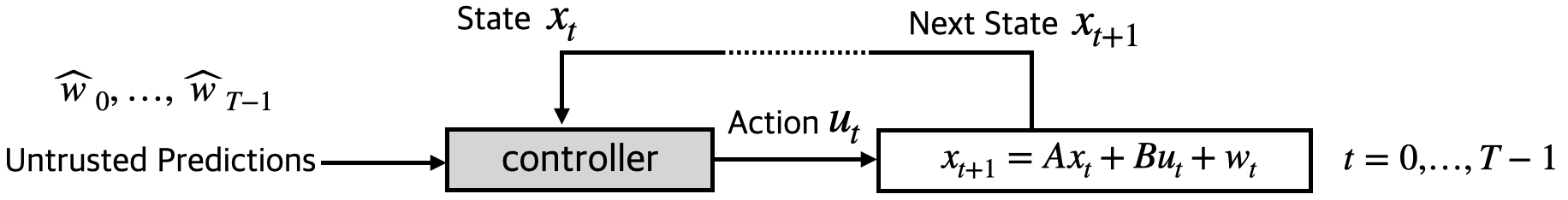}
\caption{System model of linear quadratic control (LQC) with untrusted predictions.}
\label{fig:system}
\end{figure}

\subsection{Defining Consistency and Robustness}
\label{sec:cr}

As discussed in the in introduction, while predictions can be helpful, inaccurate predictions can lead to unbounded competitive ratio.  Our goal is to utilize predictions to achieve good performance (consistency) while still providing adversarial worst-case guarantees (robustness). In this subsection, we formally define the notions of consistency and robustness we study.  These notions have received increasing attention recently in the area of online algorithms with untrusted advice, e.g., \cite{angelopoulos2019online,purohit2018improving,wei2020optimal,bamas2020primal,antoniadis2020secretary,antoniadis2020online}.

We use the competitive ratio to measure the performance of an online control policy and quantify its robustness and consistency. Specifically, let $\mathsf{OPT}$ be the offline optimal cost when all the disturbances $(w_t)_{t=0}^{\nt}$ are known, and $\mathsf{ALG}$ be the cost achieved by an online algorithm. Throughout this paper we assume $\mathsf{OPT}>0$. We define the competitive ratio for a given bound on the prediction error $\varepsilon$, as follows.

\begin{definition} \emph{
The \textbf{competitive ratio} for a given  prediction error $\varepsilon$, $\mathsf{CR}(\varepsilon)$, is defined as the smallest constant $C\geq 1$ such that $\mathsf{ALG}\leq C \cdot \mathsf{OPT}$ for fixed $A,B,Q,R$ and any adversarially and adaptively chosen perturbations $(w_0,\ldots,w_{\nt-1})$ and predictions $(\widehat w_0,\ldots,\widehat w_{\nt-1})$.} 
\end{definition}

Building on the definition of competitive ratio, we define robustness and consistency as follows.

\begin{definition} \emph{
An online algorithm is said to be \textbf{$\gamma$-robust} if,
for any prediction error $\varepsilon>0$, the competitive ratio satisfies $\mathsf{CR}(\varepsilon)\leq \gamma$, and an algorithm is said to be \textbf{$\beta$-consistent} if
the competitive ratio satisfies $\mathsf{CR}(0)\leq \beta$. }
\end{definition}





\subsection{Background: Existing Algorithms}

Before proceeding to our algorithm and its analysis, we first introduce two extreme algorithm choices that have been studied previously: a myopic policy that we refer to as $1$-confident control, which places full trust in the predictions, and a pure online strategy that we refer to as $0$-confident control, which places no trust in the predictions.  These represent algorithms that can achieve consistency and robustness \emph{individually}, but cannot achieve consistency and robustness \emph{simultaneously}.  The key challenge of this paper is to understand how to integrate ideas such as what follows into an algorithm that achieves consistency and robustness simultaneously.

\subsubsection{A Consistent Algorithm: $1$-Confident Control} A simple way to achieve consistency is to put full faith in the untrusted predictions.  In particular, if the algorithm trusts the untrusted predictions and follows them, the performance will always be optimal if the predictions are accurate.  We refer to this as the $1$-confident policy, which is defined by a finite-time optimal control problem that trusts that $(\widehat{w}_0,\ldots,\widehat{w}_{\nt-1})$ are the true disturbances. Formally, at time step $t$, the actions $(u_t,\ldots,u_{\nt})$ are computed via
\begin{align}
    \label{eq:MPC}
    &\argmin_{(u_t,\ldots,u_{\nt-1})} \left(\sum_{\tau=t}^{\nt-1} (x_\tau^\top Q x_\tau + u_\tau^\top R u_\tau) + x_\nt P x_\nt\right) \  \ \text{s.t.  }~\eqref{eq:linear_dynamic} \text{ for all } \tau=t,\ldots,\nt-1.
\end{align}
With the obtained solution $(u_t,\ldots,u_{\nt-1})$, the control action $u_t$ at time $t$ is fixed to be $u_t$ and the other actions $(u_{t+1},\ldots,u_{\nt-1})$ are discarded. 

We highlight the following result (Theorem 3.2 in~\cite{yu2020power}) that provides an explicit expression of the algorithm in~\eqref{eq:MPC}, which can be viewed as a form of Model Predictive Control (MPC).

\begin{theorem}[Theorem 3.2 in~\cite{yu2020power}]
With predictions $(\widehat{w}_0,\ldots,\widehat{w}_{\nt-1})$ fixed, the solution $u_{t}$ of the algorithm in~\eqref{eq:MPC} can be expressed as
\begin{align}
\label{eq:MPC_explicit}
    u_{t} =  -(R+B^{\top}PB)^{-1}B^{\top}\left(PAx_t+\sum_{\tau=t}^{\nt-1}\left(F^\top\right)^{\tau-t} P\widehat{w}_{\tau}\right)
\end{align}
where $F\coloneqq A-B(R+B^\top P B)^{-1} B^\top P A=A-BK$.
\end{theorem}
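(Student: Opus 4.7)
The plan is to solve the MPC optimization in~\eqref{eq:MPC} by backward dynamic programming with the ansatz that the cost-to-go at each time $\tau$ is quadratic-plus-linear-plus-constant in the state. The crucial structural observation is that because the terminal cost matrix is exactly the DARE solution $P$, the quadratic part of the cost-to-go is preserved along the recursion, and only the affine part accumulates the contribution from the (treated-as-known) disturbances $\widehat{w}_\tau$.

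First, I write the Bellman recursion $V_\tau(x) = \min_u\{x^\top Q x + u^\top R u + V_{\tau+1}(Ax+Bu+\widehat{w}_\tau)\}$ with terminal condition $V_\nt(x) = x^\top P x$, and posit $V_\tau(x) = x^\top P x + 2p_\tau^\top x + c_\tau$, to be verified by induction. Minimizing in $u$ is an unconstrained quadratic problem whose first-order condition immediately yields
\begin{align*}
u_\tau^\star = -Kx - (R+B^\top P B)^{-1}B^\top\bigl(P\widehat{w}_\tau + p_{\tau+1}\bigr).
\end{align*}

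Next, I substitute this optimizer back into $V_\tau$ and read off the quadratic, linear, and constant parts separately to obtain their recursions. The quadratic-in-$x$ part is $Q+K^\top R K + F^\top P F$, which equals $P$ by the DARE and hence justifies the ansatz. The linear-in-$x$ part gives $p_\tau = (F^\top P B - K^\top R)\,v_\tau + F^\top P\widehat{w}_\tau + F^\top p_{\tau+1}$, where $v_\tau$ denotes the $x$-independent piece of $u_\tau^\star$. The definition $K = (R+B^\top PB)^{-1}B^\top PA$ implies $K^\top(R+B^\top PB) = A^\top PB$, and with this identity the prefactor $F^\top P B - K^\top R = A^\top P B - K^\top B^\top P B - K^\top R = A^\top P B - K^\top(R+B^\top PB)$ vanishes, so the recursion collapses to $p_\tau = F^\top p_{\tau+1} + F^\top P\widehat{w}_\tau$ with $p_\nt = 0$.

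Finally, unrolling this linear recursion in $\tau$ gives $p_{t+1} = \sum_{\tau=t+1}^{\nt-1}(F^\top)^{\tau-t} P\widehat{w}_\tau$, and inserting this into the formula for $u_t^\star$ telescopes $P\widehat{w}_t + p_{t+1}$ into $\sum_{\tau=t}^{\nt-1}(F^\top)^{\tau-t}P\widehat{w}_\tau$, which is exactly~\eqref{eq:MPC_explicit}. The main obstacle I anticipate is the algebraic bookkeeping around the two cancellations: one has to verify both that the quadratic-in-state coefficient is preserved by the DARE and that the coefficient $F^\top P B - K^\top R$ of the affine piece $v_\tau$ vanishes. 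Both of these reductions rest on the single identity $K^\top(R + B^\top PB) = A^\top P B$ that follows from the definition of $K$; once that identity is in hand, the rest of the derivation is essentially routine substitution.
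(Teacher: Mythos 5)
Your derivation is correct and complete: the affine-value-function ansatz $V_\tau(x)=x^\top Px+2p_\tau^\top x+c_\tau$ is preserved precisely because the terminal weight is the DARE solution $P$, the identity $K^\top(R+B^\top PB)=A^\top PB$ does kill the coefficient of $v_\tau$ in the recursion for $p_\tau$, and unrolling $p_\tau=F^\top p_{\tau+1}+F^\top P\widehat{w}_\tau$ from $p_\nt=0$ gives exactly \eqref{eq:MPC_explicit}. Note that this paper does not prove the statement itself --- it imports it as Theorem~3.2 of \cite{yu2020power} --- and your backward dynamic-programming argument is essentially the standard derivation used there, so there is nothing to flag beyond the routine observation that the inner minimization is strictly convex (Hessian $R+B^\top PB\succ 0$), which justifies reading the optimizer off the first-order condition.
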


It is clear that this controller \eqref{eq:MPC} (or equivalently \eqref{eq:MPC_explicit}) achieves $1$-consistency because, when the prediction errors are $0$, the control action from \eqref{eq:MPC} (and the state trajectory) will be exactly the same as the offline optimal. However, this approach is not robust, and one can show that prediction errors can lead to unbounded competitive ratios. In the next subsection, we introduce a robust (but non consistent) controller.  

\subsubsection{A Robust Algorithm: $0$-Confident Control.} \label{sec:0_confident} On the other extreme, a natural way to be robust is to ignore the untrusted predictions entirely, i.e., place no confidence in the predictions.  The $0$-confident policy does exactly this.  It places no trust in the predictions and synthesizes the controller by assuming $w_t=0$. Formally, the policy is given by
\begin{align}
\label{eq:online_explicit}
u_t=-Kx_t.
\end{align}

This recovers the optimal pure online policy in classical linear control theory~\cite{anderson2012optimal}. As shown by \cite{yu2020power}, this controller has a constant competitive ratio and therefore is $O(1)$-robust. However, this approach is not consistent as it does not utilize the predictions at all. In the next section, we discuss our proposed approach, which achieves both consistency and robustness.

\section{Consistent and Robust Control}

The goal of this paper is to develop a controller that performs near-optimally when predictions are accurate (consistency) and meanwhile is robust when the prediction error is large. As discussed in the previous section, a myopic, $1$-confident controller that puts full trust into the predictions is consistent, but not robust.  On the other hand, any purely online $0$-confident policy that ignores predictions is robust but not consistent.  

The algorithms we present establish a trade-off between these extremes by including a  ``confidence/trust level'' for the predictions.  The algorithm design challenge is to determine the right way to balance these extremes.  In the first (warmup) algorithm, the policy starts out confident in the predictions, but when a threshold of error is observed, the policy loses confidence and begins to ignore predictions.  This simple threshold-based policy highlights that it is possible for a policy to be both robust and consistent.  However, the result also highlights the weakness of the standard notions of robustness and consistency since the policy cannot make use of intermediate quality predictions and only performs well in the extreme cases when predictions are either perfect or poor. 

Thus, we move to considering a different approach, which we term  \textit{$\lambda$-confident control}. This algorithm selects a confidence level $\lambda$ that serves as a weight for a linear combination between purely myopic $1$-confident control and purely online $0$-confident control.  Our main result shows that this policy provides a smooth trade-off between robustness and consistency and, further, in Section \ref{sec:self-tuning-control}, we show that the confidence level $\lambda$ can be learned online adaptively so as to achieve consistency and robustness without exogenously specifying a trust level. 

\subsection{Warmup: Threshold-Based Control}

We begin by presenting a simple threshold-based algorithm that can be both robust and consistent, though it does not perform well for predictions of intermediate quality.  This distinction highlights that looking beyond the classical narrow definitions of robustness and consistency is important when evaluating algorithms. 


\begin{algorithm}[t]
\SetAlgoLined
Initialize $\delta = 0$\\
\For{$t=0,\ldots,\nt-1$}{
\eIf{$\delta< \sigma$}{
$u_t=-(R+B^{\top}PB)^{-1}B^{\top}\left(PAx_t+\sum_{\tau=t}^{T-1}\left(F^{\top}\right)^{\tau-t}P\widehat{w}_\tau\right)$
}{
Compute $u_t$ with the best myopic online algorithm $\mathcal{A}_{\mathrm{Online}}$ without predictions
}
Update $x_{t+1}=A x_t+B u_t + w_t$ and $\delta \leftarrow \delta + \|\widehat{w}_t-w_t\|$
}
\caption{Threshold-Based Control}
\label{alg:threshold}
\end{algorithm}

The threshold-based algorithm is described in Algorithm \ref{alg:threshold}. It works by trusting predictions (using $1$-confident control update \eqref{eq:MPC_explicit}) until a certain error threshold $\sigma>0$ is crossed and then ignoring predictions (using an online algorithm $\mathcal{A}_{\mathrm{Online}}$ that attains a (minimal) competitive ratio $C_{\min}$\footnote{Note that $C_{\min}$ is guaranteed to exist, as setting $\lambda=0$ in Theorem~\ref{thm:upper_bound} gives a constant $1+\|H\|/\lambda_{\min}(G)$ competitive ratio bound for the $0$-confident control update~\eqref{eq:online_explicit}, therefore $1\leq C_{\min}\leq 1+\|H\|/\lambda_{\min}(G)$.} for all online algorithms that do not use predictions). 
The following result shows that, with a small enough threshold, this algorithm is both robust and consistent because, if predictions are perfect, it trusts them entirely, but if there is an error, it immediately begins to ignore predictions and matches the $0$-confident controller performance, which is optimal. A proof can be found in Appendix~\ref{appendix:proof_of_threshold}.

\begin{theorem} 
\label{thm:threshold}
There exists a threshold parameter $\sigma>0$ such that Algorithm~\ref{alg:threshold} is $1$-consistent and $(C_{\mathrm{min}}+o(1))$-robust, where $C_{\mathrm{min}}$ is the minimal competitive ratio of any pure online algorithm. 
\end{theorem}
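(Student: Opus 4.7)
The plan is to analyze Algorithm~\ref{alg:threshold} via the switching time $t^\star\in\{0,1,\ldots,\nt\}$, defined as the first $t$ at which the running sum $\delta$ reaches $\sigma$ (with $t^\star=\nt$ if no switch occurs). Consistency is immediate: when $\varepsilon=0$, the definition in~\eqref{eq:prediction_error} together with the invertibility of $P$ forces $e_t=0$ for every $t$ by a backward induction (the $t=\nt-1$ summand gives $\|Pe_{\nt-1}\|=0$, then the $t=\nt-2$ summand gives $\|Pe_{\nt-2}\|=0$, and so on). Hence $\delta$ stays at $0<\sigma$, Algorithm~\ref{alg:threshold} applies the $1$-confident update at every step, and by the cited Theorem~3.2 of~\cite{yu2020power} this coincides with the offline optimum, so $\mathsf{CR}(0)\leq 1$.

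For robustness, I would split on $t^\star$. In Case~A ($t^\star=\nt$) the cumulative error satisfies $\sum_{t}\|e_t\|<\sigma$, so by the geometric bound $\|F^k\|\leq C\rho^k$ the quantity in~\eqref{eq:prediction_error} is $O(\sigma^2)$; a direct perturbation analysis of~\eqref{eq:MPC_explicit} (or Theorem~\ref{thm:upper_bound} specialized to $\lambda=1$) then yields $\mathsf{ALG}-\mathsf{OPT}=O(\sigma)$, and combined with $\mathsf{OPT}>0$ this gives $\mathsf{CR}(\varepsilon)\leq 1+o_\sigma(1)\leq C_{\min}+o_\sigma(1)$. In Case~B ($t^\star<\nt$) I split $\mathsf{ALG}=\mathsf{ALG}_1+\mathsf{ALG}_2$ across $[0,t^\star-1]$ and $[t^\star,\nt]$. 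The competitive guarantee of $\mathcal{A}_{\mathrm{Online}}$ gives $\mathsf{ALG}_2\leq C_{\min}\cdot J^\star(x_{t^\star})$, where $J^\star(\cdot)$ is the offline cost-to-go with full knowledge of the remaining $w_\tau$. Because the pre-switch cumulative error is bounded by $\sigma$, a state-perturbation bound controls $\|x_{t^\star}-x_{t^\star}^{\mathrm{OPT}}\|$, and Lipschitzness of $J^\star$ in the state compares $J^\star(x_{t^\star})$ to the phase-$2$ portion of $\mathsf{OPT}$. Bounding $\mathsf{ALG}_1$ against the phase-$1$ portion of $\mathsf{OPT}$ similarly and summing yields $\mathsf{ALG}\leq(C_{\min}+o_\sigma(1))\mathsf{OPT}$.

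The hard part will be the phase-$1$ bound in Case~B: the $1$-confident action at each $t<t^\star$ depends via~\eqref{eq:MPC_explicit} on \emph{all} future predictions $\widehat w_\tau$ for $\tau\geq t$, including those at times $\tau\geq t^\star$ whose errors are not controlled by the $\delta$-threshold. The key leverage is the geometric decay of $\|F^k\|$ combined with the uniform bounds $\|w_t\|\leq\overline w$ and $\|\widehat w_t\|\leq\widehat w$: a future error $e_\tau$ enters the phase-$1$ action at time $t$ only with discount factor $C\rho^{\tau-t}$, so its total contribution to the summed phase-$1$ state deviation is bounded by a geometric series whose magnitude is independent of how large $\sum_{\tau\geq t^\star}\|e_\tau\|$ may be. A secondary subtlety is converting absolute cost gaps into the multiplicative $o(1)$ bound in the theorem; I would handle this by interpreting $o(1)$ in the $\sigma\to 0$ limit and invoking a uniform lower bound on $\mathsf{OPT}$ (e.g., $\mathsf{OPT}\geq x_0^\top Q x_0$ when $x_0\neq 0$, with an analogous argument otherwise).
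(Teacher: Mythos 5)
Your overall architecture is the same as the paper's (Appendix~\ref{appendix:proof_of_threshold}): split the horizon at the switch time, compare the pre-switch cost to the corresponding portion of the optimal trajectory via a state-perturbation bound, apply the $C_{\min}$ guarantee of $\mathcal{A}_{\mathrm{Online}}$ to the optimal cost-to-go from the switched state, and use a Lipschitz-type comparison of cost-to-go functions in the post-switch phase. Your identification of the ``hard part'' --- pre-switch actions being contaminated through~\eqref{eq:MPC_explicit} by post-switch prediction errors that the threshold does not control --- and your fix (geometric decay of $\|F^k\|$ combined with the uniform bounds $\overline{w},\widehat{w}$) is exactly the mechanism used in the paper. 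The consistency argument also matches.

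The genuine gap is in your final step, the conversion of the additive cost gap into the multiplicative $o(1)$. You propose to read $o(1)$ as $o_\sigma(1)$ and divide by a constant lower bound on $\mathsf{OPT}$. This cannot work in your Case~B: the error $e_{t^\star-1}$ that pushes $\delta$ over the threshold is incurred while the controller is still in $1$-confident mode (the check $\delta<\sigma$ precedes the update of $\delta$), and it is bounded only by $\overline{w}+\widehat{w}$, not by $\sigma$; likewise, your own geometric-series bound on the leakage of post-switch errors into pre-switch actions yields a contribution of order $(\overline{w}+\widehat{w})/(1-\rho)$, independent of $\sigma$. Hence the additive gap is $\Theta(1)$ no matter how small $\sigma$ is, and dividing by a fixed lower bound on $\mathsf{OPT}$ gives a constant, not something vanishing as $\sigma\to 0$. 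The paper resolves this differently: it establishes $\widehat{J}\leq C_{\min}J+O(1)$ with an absolute $O(1)$, and the $o(1)$ in Theorem~\ref{thm:threshold} is explicitly taken as $\nt\to\infty$, the relative excess vanishing because $\mathsf{OPT}$ grows with the horizon. Replacing your $\sigma\to 0$ limit with this $\nt\to\infty$ reading (together with the implicit assumption that $\mathsf{OPT}=\omega(1)$ in $\nt$) repairs the argument; the rest of your plan goes through.
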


The term $o(1)$ in Theorem~\ref{thm:threshold} converges to $0$ as $\nt\rightarrow\infty$.
While Algorithm \ref{alg:threshold} is optimally robust and consistent, it is unsatisfying because it does not improve over the online algorithm unless predictions are perfect since in the proof, we set the threshold parameter $\sigma>0$ arbitrarily small to make the algorithm robust and $1$-consistent and the definition of consistency and robustness only captures the behavior of the competitive ratio $\mathsf{CR}(\varepsilon)$ for either $\varepsilon=0$ or $\varepsilon$ is large. As a result, in the remainder of the paper we look beyond the extreme cases and prove results that apply for arbitrary prediction error quality.  In particular, we prove competitive ratio bounds that hold for arbitrary $\varepsilon$, of which consistency and robustness are then special cases.


\subsection{$\lambda$-Confident Control}
\label{sec:confident_control}

We now present our main results, which focus on a policy that, like Algorithm \ref{alg:threshold}, looks to find a balance between the two extreme cases of $1$-confident and $0$-confident control.  However, instead of using a threshold to decide when to swap between them, the $\lambda$-confident controller considers a linear combination of the two. 
\begin{algorithm}[t]
\SetAlgoLined
\For{$t=0,\ldots,\nt-1$}{
\begin{align}
\nonumber
   \text{Take } u_t & =-(R+B^{\top}PB)^{-1}B^{\top}\left(PAx_t+\lambda\sum_{\tau=t}^{\nt-1}\left(F^\top\right)^{\tau-t} P\widehat{w}_{\tau}\right)\\
    \nonumber
   \text{Update } x_{t+1} &=Ax_t+Bu_t+w_t
\end{align}
}
\caption{$\lambda$-Confident Control}
\label{alg:lambda-confident}
\end{algorithm}

Specifically, the policy presented in Algorithm~\ref{alg:lambda-confident} works as follows.  Given a \textit{trust parameter} $0\leq \lambda\leq 1$, it implements a linear combination of \eqref{eq:MPC_explicit} and \eqref{eq:online_explicit}. Intuitively, the selection of $\lambda$ allows a trade-off between consistency and robustness based on the extent to which the predictions are trusted.  
Our main result shows a competitive ratio bound that is consistent with this intuition. A proof is given in Appendix \ref{appendix: Competitive Analysis}.


\begin{theorem}
\label{thm:upper_bound}
Under our model assumptions, with a fixed trust parameter $\lambda>0$, 
the $\lambda$-confident control in Algorithm~\ref{alg:lambda-confident} has a worst-case competitive ratio of at most 
\begin{align}
\label{eq:competitive_bound_fix_lambda}
\mathsf{CR}(\varepsilon)\leq  1+ 2\|H\|\min\left\{\left(\frac{\lambda^2}{\mathsf{OPT}}\varepsilon+\frac{(1-\lambda)^2}{C}\right),\left(\frac{1}{C}+\frac{\lambda^2 }{\mathsf{OPT}}\overline{W}\right)\right\}
\end{align} where $H\coloneqq B(R+B^\top P B)^{-1} B^\top$,
$\mathsf{OPT}$ denotes the optimal cost, $C>0$ is a constant that depends on $A,B,Q,R$ and
\begin{align}
\label{eq:varepsilon}
\varepsilon\left(F,P,e_0,\ldots,e_{\nt-1}\right) & \coloneqq  \sum_{t=0}^{\nt-1}\left\|\sum_{\tau=t}^{\nt-1}\left(F^\top\right)^{\tau-t}P\left(w_{\tau}-\widehat{w}_\tau\right)\right\|^2,\\
\nonumber
\overline{W}\left(F,P,\widehat{w}_0,\ldots,\widehat{w}_{\nt-1}\right) & \coloneqq \sum_{t=0}^{\nt-1}\left\|\sum_{\tau=t}^{\nt-1}\left(F^\top\right)^{\tau-t}P \widehat{w}_\tau\right\|^2.
\end{align}
\end{theorem}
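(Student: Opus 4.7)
The plan is to reduce the competitive-ratio bound to the familiar LQ \emph{cost-decomposition identity} that expresses any policy's cost as $\mathsf{OPT}$ plus a sum of squared action mismatches in the $R+B^\top P B$ norm, then to bound those mismatches with two different triangle-inequality splits and to convert one residual term into a constant via a lower bound on $\mathsf{OPT}$.

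First I would set up the offline value function assuming full knowledge of the true disturbances. Taking the ansatz $V_t(x) = x^\top P x + 2 p_t^\top x + s_t$ with terminal condition $V_\nt(x) = x^\top P x$ and applying the DARE, a completion of squares in the Bellman recursion verifies that the pointwise optimal action is $u_t^*(x) = -Kx - (R+B^\top PB)^{-1} B^\top \sum_{\tau=t}^{\nt-1}(F^\top)^{\tau-t} P w_\tau$, matching Theorem~3.2 of \cite{yu2020power} applied to the true disturbances, and that for every input $u$,
\begin{equation*}
x^\top Q x + u^\top R u + V_{t+1}(Ax+Bu+w_t) = V_t(x) + (u - u_t^*(x))^\top (R+B^\top P B)(u - u_t^*(x)).
\end{equation*}
Telescoping this identity along the trajectory $(x_t^\lambda)_t$ of Algorithm~\ref{alg:lambda-confident} and using $V_\nt(x_\nt^\lambda) = x_\nt^{\lambda\top} P x_\nt^\lambda$ yields the clean identity $\mathsf{ALG} - \mathsf{OPT} = \sum_{t=0}^{\nt-1} \|u_t^\lambda - u_t^*(x_t^\lambda)\|_{R+B^\top PB}^2$.

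Next I would compute the mismatch. Subtracting the two closed-form expressions at the common state $x_t^\lambda$ cancels the feedback term $-Kx_t^\lambda$ and leaves $u_t^\lambda - u_t^*(x_t^\lambda) = -(R+B^\top PB)^{-1}B^\top a_t$ with $a_t \coloneqq \sum_{\tau=t}^{\nt-1}(F^\top)^{\tau-t} P (\lambda\widehat{w}_\tau - w_\tau)$, so $\|u_t^\lambda - u_t^*(x_t^\lambda)\|_{R+B^\top PB}^2 = a_t^\top H a_t \leq \|H\|\,\|a_t\|^2$ with $H = B(R+B^\top PB)^{-1}B^\top$. To produce the two arguments of the minimum in~\eqref{eq:competitive_bound_fix_lambda} I would split $\lambda\widehat{w}_\tau - w_\tau$ in two ways and apply $\|y+z\|^2 \leq 2\|y\|^2 + 2\|z\|^2$ under the outer norm: writing it as $\lambda(\widehat{w}_\tau - w_\tau) - (1-\lambda)w_\tau$ gives $\sum_t\|a_t\|^2 \leq 2\lambda^2\varepsilon + 2(1-\lambda)^2 W^*$, where $W^* \coloneqq \sum_t \|\sum_{\tau=t}^{\nt-1}(F^\top)^{\tau-t}Pw_\tau\|^2$; writing it instead as $\lambda\widehat{w}_\tau + (-w_\tau)$ gives $\sum_t\|a_t\|^2 \leq 2\lambda^2\overline{W} + 2W^*$.

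The remaining ingredient, which I expect to be the main obstacle, is a lower bound of the form $W^* \leq \mathsf{OPT}/C$ for a constant $C>0$ depending only on $A,B,Q,R$. To obtain it I would substitute $u_t^*$ back into the Bellman recursion to express $\mathsf{OPT} = V_0(x_0)$ as $x_0^\top P x_0$ plus a positive quadratic form in $(w_0,\ldots,w_{\nt-1})$; identifying its Gramian with the matrix $G$ referenced in the footnote after Theorem~\ref{thm:threshold} and bounding it below by $\lambda_{\min}(G)$ gives $C = 2\lambda_{\min}(G)$, consistent with the $\lambda=0$ specialization that recovers the $1 + \|H\|/\lambda_{\min}(G)$ bound for $0$-confident control. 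Dividing both split bounds from the previous step by $\mathsf{OPT}$, using $W^*/\mathsf{OPT} \leq 1/C$, and taking the minimum of the two resulting expressions yields exactly~\eqref{eq:competitive_bound_fix_lambda}.
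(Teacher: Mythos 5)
Your route coincides with the paper's for its first two stages: the cost-gap identity $\mathsf{ALG}-\mathsf{OPT}=\sum_{t}a_t^\top H a_t$ (the paper imports this as Lemma~10 of \cite{yu2020competitive} rather than re-deriving it, but your value-function/completion-of-squares argument is exactly the standard proof of that lemma, and your algebra for the action mismatch and the bound $a_t^\top Ha_t\le\|H\|\,\|a_t\|^2$ is correct), followed by the same two triangle-inequality splits $\lambda\widehat{w}_\tau-w_\tau=\lambda(\widehat{w}_\tau-w_\tau)-(1-\lambda)w_\tau$ and $\lambda\widehat{w}_\tau-w_\tau=\lambda\widehat{w}_\tau+(-w_\tau)$, producing the two arguments of the minimum. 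These steps are sound and match the paper.

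The gap is in the final stage, which you rightly flag as the main obstacle but then sketch in a way that does not close. Lower-bounding the Gramian of $\mathsf{OPT}$'s quadratic form yields at best $\mathsf{OPT}\ge \lambda_{\min}(G)\sum_t\|w_t\|^2$; this alone does not give $W^*\le\mathsf{OPT}/C$, because $W^*=\sum_t\bigl\|\sum_{\tau\ge t}(F^\top)^{\tau-t}Pw_\tau\bigr\|^2$ is a \emph{different} quadratic form in $\mathbf{w}$ whose weights are not controlled by the identity. You still need the complementary estimate $W^*\le \frac{D_1^2\|P\|^2}{(1-\rho)^2}\sum_t\|w_t\|^2$ with constants independent of the horizon, which the paper obtains from Gelfand's formula ($\|F^\tau\|\le D_1\rho^\tau$ with $\rho<1$) via an exchange-of-summation and geometric-series computation; only the combination of the two estimates produces the constant $C$. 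Relatedly, ``the quadratic form is positive'' is not sufficient: the Gramian is a $\nt n\times\nt n$ matrix, so you must argue its smallest eigenvalue (equivalently, the paper's constant $D_0$, extracted by trading part of $\lambda_{\min}(Q)\|x_t^*\|^2+\lambda_{\min}(R)\|u_t^*\|^2$ against $\|x_{t+1}^*-w_t\|^2=\|Ax_t^*+Bu_t^*\|^2$) stays bounded away from zero uniformly in $\nt$. With those two pieces supplied, your argument delivers \eqref{eq:competitive_bound_fix_lambda}.
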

From this result, we see that $\lambda$-confident control is guaranteed to be ${\big(1+\|H\|\frac{(1-\lambda)^2}{C}\big)}$-consistent and $\smash{\big(1+\|H\|\big(\frac{1}{C}+\frac{\lambda^2 }{\mathsf{OPT}} \overline{W}\big)\big)}$-robust. This highlights a trade-off between consistency and robustness such that if a large $\lambda$ is used (i.e., predictions are trusted), then consistency decreases to $1$, while the robustness  increases unboundedly.  In contrast, when a small $\lambda$ is used (i.e., predictions are distrusted), the robustness of the policy converges to the optimal value, but the consistency does not improve on the robustness value. Due to the time-coupling structure in the control system, the mismatches $e_t=\widehat{w}_t-w_t$ at different times contribute unequally to the system. As a result, the prediction error $\varepsilon$ in~\eqref{eq:prediction_error} and~\eqref{eq:varepsilon} is defined as a weighed quadratic sum of $\left(e_0,\ldots,e_{\nt-1}\right)$. {Moreover, the term $\mathsf{OPT}$ in~\eqref{eq:competitive_bound_fix_lambda} is common in the robustness and consistency analysis of online algorithms, such as~\cite{purohit2018improving,lykouris2018competitive,bamas2020primal,antoniadis2020online}.}



\section{Self-Tuning $\lambda$-Confident Control}
\label{sec:self-tuning-control}

While the $\lambda$-confident control finds a balance between consistency and robustness, selecting the optimal $\lambda$ parameter requires exogenous knowledge of the quality of the predictions $\varepsilon$, which is often not possible.  For example, black-box AI tools typically do not allow uncertainty quantification. In this section, we develop a self-tuning $\lambda$-confident control approach that learns to tune $\lambda$ in an online manner. We provide an upper bound on the regret of the self-tuning $\lambda$-confident control, compared with using the best possible $\lambda$ in hindsight, and a competitive ratio for the complete self-tuning algorithm.  These results provide the first worst-case guarantees for the integration of black-box AI tools into linear quadratic control. 

Our policy is described in Algorithm~\ref{alg:self} and is a ``follow the leader'' approach~\cite{hannan20164,kalai2005efficient}. At each time $t=0,\ldots,\nt-1$, it selects a $\lambda_t$ in order to minimize the gap between $\mathsf{ALG}$ and $\mathsf{OPT}$ in the previous $t$ rounds and chooses an action using the trust parameter $\lambda_t$. Then the state $x_t$ is updated to $x_{t+1}$ using the linear system dynamic in~\eqref{eq:linear_dynamic} and this process repeats. Note that the denominator of $\lambda_t$ is zero if and only if $\eta\left(\widehat{w};s,t-1\right)=0$ for all $s$. To make $\lambda_t$ well-defined, we set $\lambda=1$ for this case.

The key to the algorithm is the update rule for $\lambda_t$. Given previously observed perturbations and predictions, the goal of the algorithm is to find a greedy $\lambda_t$ that minimizes the gap between the algorithmic and optimal costs so that $\lambda_t
\coloneqq \min_{\lambda}\sum_{s=0}^{t-1}\psi_s^\top H \psi_s$ where $\psi_s\coloneqq      \sum_{\tau=s}^{t-1}\left(F^\top\right)^{\tau-s} P \left(w_\tau -  \lambda\widehat{w}_\tau\right).$  This can be equivalently written as
\begin{align}
\label{eq:4.1body}
    \lambda_t
    = \argmin_{\lambda}\sum_{s=0}^{t-1}\left[\left(\sum_{\tau=s}^{t-1}\left(F^\top\right)^{\tau-s} P (w_\tau-\lambda \widehat{w}_\tau) \right)^\top H \left(\sum_{\tau=s}^{t-1}\left(F^\top\right)^{\tau-s} P (w_\tau-\lambda \widehat{w}_\tau) \right)\right],
\end{align}
which is a quadratic function of $\lambda$. Rearranging the terms in~\eqref{eq:4.1body} yields the choice of $\lambda_t$ in the self-tuning control scheme.


\begin{algorithm}[t]
\caption{Self-Tuning $\lambda$-Confident Control}
\For{$t=0,\ldots,\nt-1$}{

\If{$t=0$ or $t=1$}{Initialize and choose $\lambda_0$}

\Else{
Compute a trust parameter $\lambda_t$
\begin{align*}
  \lambda_t = & \frac{\sum_{s=0}^{t-1}\left(\eta(w;s,t-1)\right)^{\top} H \left(\eta(\widehat{w};s,t-1)\right)}{\sum_{s=0}^{t-1}\left(\eta(\widehat{w};s,t-1)\right)^{\top} H \left(\eta(\widehat{w};s,t-1)\right)} \\
  &\text{ where }   \eta(w;s,t)\coloneqq \sum_{\tau=s}^{t}\left(F^\top\right)^{\tau-s} P w_\tau
\end{align*}}

Generate an action $u_t$ using $\lambda_t$-confident control (Algorithm~\ref{alg:lambda-confident})

Update
$
    x_{t+1} = Ax_t + Bu_t + w_t
$

}

\label{alg:self}
\end{algorithm}


{Algorithm~\ref{alg:self} is efficient since, in each time step, updating the $\eta$ values only  requires adding one more term. This means that the total computational complexity of $\lambda_t$ is $O(\nt^2 n^{\alpha})$, where $\alpha<2.373$, which is polynomial in both the time horizon length $\nt$ and state dimension $n$. According to the expression of $\lambda_t$ in Algorithm~\ref{alg:self}, at each time $t$, the terms $\eta(w;s,t-2)$ and $\eta(\widehat{w};s,t-2)$ can be pre-computed for all $s=0,\ldots,t-1$. Therefore, the recursive formula $\eta(w;s,t)\coloneqq \sum_{\tau=s}^{t}\left(F^\top\right)^{\tau-s} P w_\tau = \eta(w;s,t-1)+\left(F^\top\right)^{t-s} P w_t$ implies the update rule of the terms $\{\eta(w;s,t-1):s=0,\ldots,t-1\}$ in the expression of $\lambda_t$. This gives that, at each time $t$, it takes no  more than $O(\nt n^{\alpha})$ steps to compute $\lambda_t$  where $\alpha<2.373$ and $O(n^{\alpha})$ is the computational complexity of matrix multiplication.}

\subsection{Convergence}

We now move to the analysis of Algorithm \ref{alg:self}.  First, we study the convergence of $\lambda_t$, which depends on the variation of the predictions $\widehat{\mathbf{w}}\coloneqq (\widehat{w}_0,\ldots,\widehat{w}_{\nt-1})$ and the true perturbations $\mathbf{w}\coloneqq(w_0,\ldots,w_{\nt-1})$, where we use a boldface letter to represent a sequence of vectors. Specifically, our results are in terms of the variation of the predictions and perturbations, which we define as follows. The \emph{self-variation} $\mu_{\mathsf{VAR}}(\mathbf{y})$ of a sequence $\mathbf{y}\coloneqq (y_0,\ldots,y_{\nt-1})$ is defined as 
\begin{align*}
    \mu_{\mathsf{VAR}}(\mathbf{y})\coloneqq \sum_{s=1}^{\nt-1}\max_{\tau=0,\ldots,s-1}\left\|y_{\tau} - y_{\tau+\nt-s}\right\|.
\end{align*} 

The goal of the self-tuning algorithm is to converge to the optimal trust parameter $\lambda^*$ for the problem instance. To specify this formally, let $\mathsf{ALG}(\lambda_0,\ldots,\lambda_{\nt-1})$ be the algorithmic cost with adaptively chosen trust parameters $\lambda_0,\ldots,\lambda_{\nt-1}$ and denote by $\mathsf{ALG}(\lambda)$ the cost with a fixed trust parameter $\lambda$. Then, $\lambda^*$ is defined as
$    \lambda^*\coloneqq \min_{\lambda\in\mathbb{R}} \mathsf{ALG}(\lambda).
$ Further, let $W(t)\coloneqq \sum_{s=0}^{t}\eta(\widehat{w};s,t)^\top H {\eta}(\widehat{w};s,t)$.

We can now state a bound on the convergence rate of $\lambda_t$ to $\lambda^*$ under Algorithm \ref{alg:self}. The bound highlights that if the variation of the system perturbations and predictions is small, then the trust parameter $\lambda_t$ converges quickly to $\lambda^*$.  A proof can be found in Appendix~\ref{appendix: proof_of_lemma:lambda}.

\begin{lemma}
\label{lemma:lambda}
Assume $W(\nt)=\Omega(\nt)$ and $\lambda_t\in [0,1]$ for all $t=0,\ldots,\nt-1$. Under our model assumptions, the adaptively chosen trust parameters $\left(\lambda_0,\ldots,\lambda_{\nt}\right)$ by self-tuning control satisfy that for any $1< t\leq \nt$,
\begin{align*}
    \left|\lambda_t - \lambda^* \right| = O\left(\left(\mu_{\mathsf{VAR}}(\mathbf{w})+\mu_{\mathsf{VAR}}(\mathbf{\widehat{w}})\right)/{t}
\right).
\end{align*}
\end{lemma}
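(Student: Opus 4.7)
The plan is to identify $\lambda_t$ and $\lambda^*$ with closed-form ratios coming from the first-order conditions of their defining quadratic problems, and then bound the deviation using a coupling argument that converts the exponential decay $\|F^j\|\le C\rho^j$ and the total-variation parameters $\mu_{\mathsf{VAR}}$ into a per-round improvement. First, the FTL update in~\eqref{eq:4.1body} gives $\lambda_t=N_t/D_t$ with
\begin{align*}
N_t := \sum_{s=0}^{t-1}\eta(w;s,t-1)^{\top}H\,\eta(\widehat{w};s,t-1),\qquad D_t := \sum_{s=0}^{t-1}\eta(\widehat{w};s,t-1)^{\top}H\,\eta(\widehat{w};s,t-1),
\end{align*}
and, by the cost decomposition used inside the proof of Theorem~\ref{thm:upper_bound}, $\mathsf{ALG}(\lambda)-\mathsf{OPT}$ is the same quadratic in $\lambda$ but with $s$ ranging over $0,\ldots,\nt-1$ and endpoint $\nt-1$, so $\lambda^*=N^*/D^*$ with $D^*=W(\nt)$. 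Subtracting yields
\begin{align*}
\lambda_t-\lambda^* = \frac{N_t\bigl(D^*-\tfrac{\nt}{t}D_t\bigr) - D_t\bigl(N^*-\tfrac{\nt}{t}N_t\bigr)}{D_t\,D^*}.
\end{align*}

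For the denominator I use $W(\nt)=\Omega(\nt)$ to get $D^*=\Omega(\nt)$, and apply the coupling argument below on the purely-$\widehat{w}$ sums to obtain $|D_t/t-D^*/\nt|=O(\mu_{\mathsf{VAR}}(\widehat{\mathbf{w}})/t)$, which combined with $D^*/\nt=\Omega(1)$ yields $D_t=\Omega(t)$ for $t$ not too small, hence $D_tD^*=\Omega(t\nt)$. The uniform bound $\|\eta(w;s,t-1)\|,\|\eta(\widehat{w};s,t-1)\|\le\tfrac{C\|P\|\overline{w}}{1-\rho}$ (from $\|F^j\|\le C\rho^j$ together with the boundedness of $w_\tau,\widehat{w}_\tau$) then gives $N_t,D_t=O(t)$. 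It therefore suffices to show $|tN^*-\nt N_t|=O((\mu_{\mathsf{VAR}}(\mathbf{w})+\mu_{\mathsf{VAR}}(\widehat{\mathbf{w}}))\,\nt)$ and analogously for $D$; each of the two pieces in the numerator is then $O(\nt(\mu_{\mathsf{VAR}}(\mathbf{w})+\mu_{\mathsf{VAR}}(\widehat{\mathbf{w}})))$, and dividing by $\Omega(t\nt)$ gives the claimed rate.

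The engine for these comparisons is the shift-stability inequality
\begin{align*}
\bigl\|\eta(w;s+\delta,t+\delta-1) - \eta(w;s,t-1)\bigr\| \le \tfrac{C\|P\|}{1-\rho}\max_{\tau}\|w_{\tau}-w_{\tau+\delta}\|,
\end{align*}
obtained by re-indexing inside $\eta$ and using the geometric bound on $\|F^j\|$, plus its $\widehat{w}$-analogue. Writing $tN^*-\nt N_t=\sum_{s=0}^{\nt-1}\sum_{s'=0}^{t-1}\bigl(\phi_s^{(\nt)}-\phi_{s'}^{(t)}\bigr)$ with $\phi_s^{(t)}:=\eta(w;s,t-1)^{\top}H\,\eta(\widehat{w};s,t-1)$, applying the standard factorization of a difference of bilinear forms $a^{\top}Hb-c^{\top}Hd=(a-c)^{\top}Hb+c^{\top}H(b-d)$ with the uniform $\eta$-bound, and pairing the $t\nt$ slots so that the effective shift $\delta$ of each pair visits each value in $\{1,\ldots,\nt-1\}$ a bounded number of times, the shift-stability inequality makes the accumulated error telescope into $\mu_{\mathsf{VAR}}(\mathbf{w})+\mu_{\mathsf{VAR}}(\widehat{\mathbf{w}})$, with the ``unpaired'' boundary indices $s<\nt-t$ absorbed via the geometric decay of $F^j$. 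The main obstacle is exactly this combinatorial coupling step: designing the pairing so that the shifts are spread across $\{1,\ldots,\nt-1\}$ with $O(1)$ multiplicity and controlling the boundary contributions through the geometric decay; once that is done, assembly of the previous steps (dividing the numerator bound $O(\nt\mu)$ by the denominator $\Omega(t\nt)$) is immediate.
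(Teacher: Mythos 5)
Your skeleton matches the paper's: identify $\lambda^*$ with $\lambda_{\nt}$ (the same FTL objective with endpoint $\nt$), write $\lambda_t-\lambda^*$ as a difference of quotients, reduce to showing the normalized numerators and denominators $N_t/t$, $D_t/t$ are within $O(\mu_{\mathsf{VAR}}/t)$ of $N^*/\nt$, $D^*/\nt$, and drive everything with the geometric decay $\|F^j\|\le C\rho^j$ plus a shift-stability bound on $\eta$. All of that is sound and is what the paper does (its Lemmas on quotient and Ces\`aro convergence, plus the bound on $\overline{\eta}$ differences).

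The problem is that the one step you flag as ``the main obstacle'' is genuinely the whole content of the lemma, and the mechanism you sketch for it does not work as described. You want $|tN^*-\nt N_t|=O(\nt\,\mu_{\mathsf{VAR}})$ from the double sum $\sum_{s=0}^{\nt-1}\sum_{s'=0}^{t-1}\bigl(\phi_s^{(\nt)}-\phi_{s'}^{(t)}\bigr)$ by pairing the $t\nt$ slots so that each shift in $\{1,\ldots,\nt-1\}$ is used $O(1)$ times. That accounting cannot close: there are $t\nt$ pairs but only $\nt-1$ available shifts, so bounded multiplicity leaves almost all pairs uncontrolled; moreover a generic pair $(s,s')$ compares windows of \emph{different lengths with different right endpoints} ($\nt-s$ ending at $\nt-1$ versus $t-s'$ ending at $t-1$), to which your shift-stability inequality (same length, both endpoints translated by $\delta$) does not directly apply. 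The way out — and this is precisely the paper's argument — is not a pairing at all but a comparison of every term to a single common reference: after re-indexing by window length, each $\phi_s^{(\nt)}$ is compared to the full-length term $\overline{\eta}(\cdot;\nt,\nt)^\top H\,\overline{\eta}(\cdot;\nt,\nt)$, so index $s$ contributes the shift $\nt-s$ exactly once and the sum over $s$ telescopes into $\mu_{\mathsf{VAR}}(\mathbf{w})+\mu_{\mathsf{VAR}}(\widehat{\mathbf{w}})$ plus a convergent geometric series (the paper's term (a)); the mismatch between endpoint $t-1$ and endpoint $\nt-1$ is then handled separately, since $\eta(\cdot;s,\nt)-\eta(\cdot;s,t)$ has norm $O(\rho^{t-s})$ and sums to $O(1)$ over $s$ (the paper's term (b)). Until you replace the pairing idea with this common-reference decomposition (or something equivalent), the proof is not complete. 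A secondary loose end: your claim $D_t=\Omega(t)$ ``for $t$ not too small'' leaves small $t$ unhandled; the paper closes this using the hypothesis $\lambda_t\in[0,1]$ together with a uniform bound on $V(t)/t$.
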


\subsection{Regret and Competitiveness}

Building on the convergence analysis, we now prove bounds on the regret and competitive ratio of Algorithm \ref{alg:self}.  These are the main results of the paper and represent the performance of an algorithm that adaptively determines the optimal trade-off between robustness and consistency.  

\paragraph{Regret.} We first study the regret as compared with the best, fixed trust parameter in hindsight, i.e., $\lambda^*$, whose corresponding worst-case competitive ratio satisfies the upper bound given in Theorem~\ref{thm:upper_bound}. 

Denote by $\mathsf{Regret}\coloneqq\mathsf{ALG}(\lambda_0,\ldots,\lambda_{\nt-1})-\mathsf{ALG}(\lambda^*)$ the \textit{regret} we consider where $(\lambda_0,\ldots,\lambda_{\nt-1})$ are the trust parameters selected by the self-tuning control scheme. Our main result is the following variation-based regret bound, which is proven in Appendix~\ref{appendix: proof_of_lemma:regret}. Define the denominator of $\lambda_t$ in Algorithm~\ref{alg:lambda-confident} by $W(t)\coloneqq \sum_{s=0}^{t}\eta(\widehat{w};s,t)^\top H {\eta}(\widehat{w};s,t)$.

\begin{lemma}
\label{lemma:regret}
Assume $W(t)=\Omega(\nt)$ and $\lambda_t\in [0,1]$ for all $t=0,\ldots,\nt-1$.
Under our model assumptions, for any $\mathbf{w}$ and $\mathbf{\widehat{w}}$, the regret of Algorithm \ref{alg:self} is bounded by
$$
        \mathsf{Regret} =  O\Big(\big(\mu_{\mathsf{VAR}}(\mathbf{w})+\mu_{\mathsf{VAR}}({\mathbf{\widehat{w}}})\big)\log\nt\Big).
$$
\end{lemma}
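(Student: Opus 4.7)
The plan is to compare the state trajectory generated by Algorithm~\ref{alg:self} against the one produced by the fixed $\lambda^*$-confident controller, show that the regret is controlled by $\sum_t (\lambda_t - \lambda^*)^2$, and then invoke Lemma~\ref{lemma:lambda} to conclude.

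First, I would carry out a closed-loop decomposition of the two trajectories. Writing $\{x_t\}$ for the states under Algorithm~\ref{alg:self} and $\{x_t^*\}$ for the states under the fixed $\lambda^*$-confident control, the explicit update of Algorithm~\ref{alg:lambda-confident} shows that both satisfy $x_{t+1} = F x_t + w_t + g(\lambda_t,t)$, where $g(\lambda,t) \coloneqq -\lambda B(R+B^\top P B)^{-1} B^\top \eta(\widehat{w};t,\nt-1)$. Subtracting the two recursions and unrolling gives $x_t - x_t^* = -\sum_{s=0}^{t-1} F^{t-1-s} B(R+B^\top P B)^{-1} B^\top (\lambda_s - \lambda^*)\,\eta(\widehat{w};s,\nt-1)$, so the spectral bound $\|F^j\|\leq C\rho^j$ together with the boundedness of $\widehat{w}$ yields $\|x_t - x_t^*\| \leq C'\sum_{s<t}\rho^{t-1-s} |\lambda_s - \lambda^*|$, with an analogous bound for $\|u_t - u_t^*\|$.

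Next, I would convert these deviations into a regret bound. Expanding the per-stage cost difference via $a^\top M a - b^\top M b = (a-b)^\top M (a-b) + 2 b^\top M (a-b)$, the regret decomposes into a pure quadratic piece and a cross-term piece. Applying Cauchy--Schwarz to the convolution bound above and then summing the geometric series $\sum_{j\geq 0}\rho^j = O(1)$ yields $\sum_t \|x_t - x_t^*\|^2 + \sum_t \|u_t - u_t^*\|^2 = O\!\left(\sum_s (\lambda_s - \lambda^*)^2\right)$, which controls the quadratic piece. The cross terms are handled by appealing to the scalar first-order optimality condition $\frac{d}{d\lambda}\mathsf{ALG}(\lambda)\big|_{\lambda=\lambda^*} = 0$ together with the explicit quadratic form of $\mathsf{ALG}(\lambda)-\mathsf{OPT}$ identified in the proof of Theorem~\ref{thm:upper_bound}; these let the leading linear contributions cancel and leave a residual that is again $O(\sum_s (\lambda_s - \lambda^*)^2)$.

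Finally, Lemma~\ref{lemma:lambda} gives $|\lambda_s - \lambda^*|^2 = O\!\left((\mu_{\mathsf{VAR}}(\mathbf{w}) + \mu_{\mathsf{VAR}}(\mathbf{\widehat{w}}))^2 / s^2\right)$, so summing over $s$ and using $\sum_{s\geq 1} s^{-2} = O(1)$ delivers the stated $O\!\left((\mu_{\mathsf{VAR}}(\mathbf{w}) + \mu_{\mathsf{VAR}}(\mathbf{\widehat{w}}))^2\right)$ regret bound. The hardest step in this plan is the cross-term handling: a naive bound only gives $O(\sum_s |\lambda_s - \lambda^*|) = O(\mu_{\mathsf{VAR}}\log \nt)$, loose by a logarithmic factor. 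The resolution is to exploit the first-order optimality of $\lambda^*$ so that the leading linear contributions cancel when summed in the right order, relying on the hypothesis $W(t)=\Omega(\nt)$ (already used in Lemma~\ref{lemma:lambda}) to transfer strong convexity in the $\lambda$-variable into quadratic control of the cost difference.
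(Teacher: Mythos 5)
Your final step (Lemma~\ref{lemma:lambda} plus $\sum_s s^{-2}=O(1)$) matches the paper's, but your route to the intermediate bound $\mathsf{Regret}=O\big(\sum_t(\lambda_t-\lambda^*)^2\big)$ is genuinely different and, as written, has a gap. The paper never compares trajectories: it invokes the exact identity of Lemma~\ref{lemma:alg_opt}, $\mathsf{ALG}-\mathsf{OPT}=\sum_t\psi_t^\top H\psi_t$ with $\psi_t=\eta(w;t,\nt-1)-\lambda_t\,\eta(\widehat w;t,\nt-1)$, so the regret becomes a sum of \emph{decoupled scalar quadratics} $\sum_t[f_t(\lambda_t)-f_t(\lambda^*)]$, $f_t(\lambda)\coloneqq\psi_t(\lambda)^\top H\psi_t(\lambda)$, and no closed-loop unrolling is needed (this is Lemma~\ref{lemma:regret_bound}). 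Your trajectory-perturbation computation is correct as far as it goes --- the recursion $x_{t+1}=Fx_t+w_t-\lambda_t H\eta(\widehat w;t,\nt-1)$ and the convolution/Cauchy--Schwarz bound do give $\sum_t\|x_t-x^*_t\|^2+\|u_t-u^*_t\|^2=O\big(\sum_s(\lambda_s-\lambda^*)^2\big)$ --- but it re-derives by hand what Lemma~\ref{lemma:alg_opt} hands you for free, and it makes the remaining difficulty harder to isolate.

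That remaining difficulty is the cross term, which you correctly identify as the hard step but do not resolve. First-order optimality gives $\sum_t f_t'(\lambda^*)=0$ (equivalently, the linear coefficients of your cross term sum to zero over $t$), but the cross term itself is the \emph{weighted} sum $\sum_t f_t'(\lambda^*)(\lambda_t-\lambda^*)$ with time-varying weights; a zero-sum sequence paired with weights decaying like $1/t$ does not cancel, and Abel summation with partial sums of size $O(t)$ returns you to the $O(\mu_{\mathsf{VAR}}\log\nt)$ bound you were trying to beat. Strong convexity via $W(\nt)=\Omega(\nt)$ controls $\mathsf{ALG}(\lambda)-\mathsf{ALG}(\lambda^*)$ only for a \emph{uniform} perturbation of $\lambda$, not for per-step perturbations, so it does not rescue the argument either. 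To be fair, the paper's own Lemma~\ref{lemma:regret_bound} asserts the purely quadratic bound $\mathsf{Regret}\le\|H\|\sum_t(\lambda_t-\lambda_\nt)^2\|\eta(\widehat w;t,\nt-1)\|^2$ with essentially no justification, i.e., it silently drops the same cross term $2\sum_t(\lambda_t-\lambda^*)(\lambda^*c_t-b_t)$ where $b_t=\eta(w;t,\nt-1)^\top H\eta(\widehat w;t,\nt-1)$ and $c_t=\eta(\widehat w;t,\nt-1)^\top H\eta(\widehat w;t,\nt-1)$; but in the decoupled form the missing claim is at least explicit and isolated, whereas in your decomposition it is buried inside the dynamics. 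Either way, ``the leading linear contributions cancel when summed in the right order'' is not a proof, and this is precisely the step that needs a real argument.
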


Note that the baseline we evaluate against in $\mathsf{Regret}=\mathsf{ALG}(\lambda_0,\ldots,\lambda_{\nt-1})-\mathsf{ALG}(\lambda^*)$ is stronger than baselines in previous \textit{static regret} analysis for LQR, such as~\cite{agarwal2019online,cohen2018online} where online controllers are compared with a linear control policy $u_t=-Kx_t$ with a strongly stable $K$. The baseline policy considered in our regret analysis is the $\lambda$-confident scheme (Algorithm~\ref{alg:lambda-confident}) with
\begin{align*}
    u_t =& -(R+B^{\top}PB)^{-1}B^{\top}\left(PAx_t+\lambda\sum_{\tau=t}^{\nt-1}\left(F^\top\right)^{\tau-t} P\widehat{w}_{\tau}\right) \\
    =& -Kx_t - \lambda(R+B^{\top}PB)^{-1}B^{\top}\sum_{\tau=t}^{\nt-1}\left(F^\top\right)^{\tau-t} P\widehat{w}_{\tau},
\end{align*}
which contains the class of strongly stable linear controllers as a special case.  Moreover, the regret bound in Lemma~\ref{lemma:regret} holds for any predictions $\widehat{w}_0,\ldots,\widehat{w}_{\nt-1}$. Taking $\widehat{w}_t=w_t$ for all $t=0,\ldots,\nt-1$, our regret directly compares $\mathsf{ALG}(\lambda_0,\ldots,\lambda_{\nt-1})$ with the optimal cost $\mathsf{OPT}$, and therefore, our regret also involves the \textit{dynamic regret} considered in~\cite{li2019online,yu2020power,zhang2021regret} for LQR as a special case.

To interpret this lemma, suppose the sequences of perturbations and predictions satisfy: 
\begin{align*}
\|\widehat{w}_{\tau} - \widehat{w}_{\tau+\nt-s}\| & \leq \rho(s),\\
\|w_{\tau} - w_{\tau+\nt-s}\| & \leq
\rho(s), \text{ for any }  s\geq 0, 0\leq \tau\leq s.
\end{align*}
These bounds correspond to an assumption of smooth variation in the disturbances and the predictions.  Note that it is natural for the disturbances to vary smoothly in applications such as tracking problems where the disturbances correspond to the trajectory and in such situations one would expect the predictions to also vary smoothly. For example, machine learning algorithms are often regularized to provide smooth predictions.

Given these smoothness bounds, we have that
\begin{align*}
    \mu_{\mathsf{VAR}}(\mathbf{w})+\mu_{\mathsf{VAR}}({\mathbf{\widehat{w}}})\leq \sum_{s=0}^{\nt-1}2\rho(s).
\end{align*}

Note that, as long as $\sum_{s=0}^{\nt-1}\rho(s)= o(\nt/\log\nt)$, the regret bound is sub-linear in $\nt$.
To understand how this bound may look in particular applications, suppose we have $\rho(s)=O(1/s)$.  In this case, regret is poly-logarithmic, i.e., $\mathsf{Regret}=O((\log\nt)^2)$.  If $\rho(s)$ is exponential the regret is even smaller, i.e., if $\rho(s) = O\left(r^s\right)$ for some $0<r<1$, then $\mathsf{Regret}=O(1)$.  The regret bound in Lemma~\ref{lemma:regret} depends on the variation of perturbations and predictions. Note that such a term commonly exists in regret analysis based on the ``follow the leader'' approach~\cite{hannan20164,kalai2005efficient}. For example, the regret analysis of the follow the optimal steady state (FOSS) method in~\cite{li2019online} contains a similar  ``path length'' term that captures the state variation and there is a fundamental limit on regret that depends on the variation budget (c.f. Theorem 3 in~\cite{li2019online}). There is a similar variation budget of the predictions or prediction errors in Theorem $1$ of~\cite{zhang2021regret}. In many robotics applications (e.g., the trajectory tracking and EV charging experiments in this paper shown in Section~\ref{sec:experiments}), each $w_t$ is from some desired smooth trajectory.

\paragraph{Competitive Ratio.} We are now ready to present our main result, which provides an upper bound on the competitive ratio of self-tuning control (Algorithm \ref{alg:self}). { Recall that, in Lemma~\ref{lemma:regret}, we bound the regret $\mathsf{Regret}\coloneqq\mathsf{ALG}(\lambda_0,\ldots,\lambda_{\nt-1})-\mathsf{ALG}(\lambda^*)$ and, in  Theorem~\ref{thm:upper_bound}, a competitive ratio bound is provided for the $\lambda$-confident control scheme, including $\mathsf{ALG}(\lambda^*)/\mathsf{OPT}$. Therefore, combining Lemma~\ref{lemma:regret} and Theorem~\ref{thm:upper_bound} leads to a novel competitive ratio bound for the self-tuning scheme (Algorithm~\ref{alg:self}). Note that compared with  Theorem~\ref{thm:upper_bound}, which also provides a competitive ratio bound for $\lambda$-confident control, Theorem~\ref{thm:competitive_self_tuning} below considers a competitive ratio bound for the self-tuning scheme in Algorithm~\ref{alg:self} where,  at each time $t$, a trust parameter $\lambda_t$ is determined by online learning and may be time-varying. }

\begin{theorem}
\label{thm:competitive_self_tuning}
Assume $W(\nt)=\Omega(\nt)$ and $\lambda_t\in [0,1]$ for all $t=0,\ldots,\nt-1$.
Under our model assumptions, the competitive ratio of Algorithm \ref{alg:self} is bounded by
\begin{align*}
\mathsf{CR}(\varepsilon)\leq 1+ 2\|H\|\frac{\varepsilon}{\mathsf{OPT}+C\varepsilon} + O\left(\frac{\left(\mu_{\mathsf{VAR}}(\mathbf{w})+\mu_{\mathsf{VAR}}({\mathbf{\widehat{w}}})\right)}{\mathsf{OPT}}\log\nt\right),
\end{align*} where $H$, $C$ $\mathsf{OPT}$ and $\varepsilon$ are defined in Theorem~\ref{thm:upper_bound}.
\end{theorem}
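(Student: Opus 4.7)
The plan is to derive the competitive ratio bound by decomposing the self-tuning cost against the best fixed trust parameter in hindsight, and then bounding each piece with the results already in hand. Concretely, I will start from the identity
\begin{align*}
\mathsf{ALG}(\lambda_0,\ldots,\lambda_{\nt-1}) \;=\; \mathsf{ALG}(\lambda^*) + \mathsf{Regret},
\end{align*}
divide through by $\mathsf{OPT}$, and treat the two terms separately. The first term, $\mathsf{ALG}(\lambda^*)/\mathsf{OPT}$, is the worst-case competitive ratio of the best fixed-$\lambda$ policy, and Theorem~\ref{thm:upper_bound} already bounds $\mathsf{ALG}(\lambda)/\mathsf{OPT}$ for every fixed $\lambda \in [0,1]$. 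The second term, $\mathsf{Regret}/\mathsf{OPT}$, is handled directly by Lemma~\ref{lemma:regret}, which gives $\mathsf{Regret} = O\bigl((\mu_{\mathsf{VAR}}(\mathbf{w}) + \mu_{\mathsf{VAR}}(\mathbf{\widehat w}))^2\bigr)$.

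The key computation is extracting the quantity $\varepsilon/(\mathsf{OPT} + C\varepsilon)$ from Theorem~\ref{thm:upper_bound}. Since $\lambda^* = \arg\min_\lambda \mathsf{ALG}(\lambda)$, we have $\mathsf{ALG}(\lambda^*) \leq \mathsf{ALG}(\widetilde\lambda)$ for any $\widetilde\lambda \in [0,1]$, so I get to pick the $\widetilde\lambda$ that minimizes the right-hand side of~\eqref{eq:competitive_bound_fix_lambda}. Focusing on the first branch of the $\min$, I minimize
\begin{align*}
f(\lambda) \;=\; \tfrac{\lambda^2}{\mathsf{OPT}}\varepsilon \;+\; \tfrac{(1-\lambda)^2}{C}
\end{align*}
over $\lambda$. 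Setting $f'(\lambda) = 0$ yields $\widetilde\lambda = \mathsf{OPT}/(\mathsf{OPT} + C\varepsilon) \in [0,1]$, and substituting back gives $f(\widetilde\lambda) = \varepsilon/(\mathsf{OPT} + C\varepsilon)$ after algebraic simplification. Hence $\mathsf{ALG}(\lambda^*)/\mathsf{OPT} \leq 1 + 2\|H\|\varepsilon/(\mathsf{OPT} + C\varepsilon)$, which is exactly the first two terms of the claimed bound.

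Putting the two pieces together, I obtain
\begin{align*}
\mathsf{CR}(\varepsilon) \;=\; \frac{\mathsf{ALG}(\lambda^*)}{\mathsf{OPT}} + \frac{\mathsf{Regret}}{\mathsf{OPT}} \;\leq\; 1 + 2\|H\|\frac{\varepsilon}{\mathsf{OPT} + C\varepsilon} + O\!\left(\frac{(\mu_{\mathsf{VAR}}(\mathbf{w}) + \mu_{\mathsf{VAR}}(\mathbf{\widehat w}))^2}{\mathsf{OPT}}\right),
\end{align*}
which is the statement of Theorem~\ref{thm:competitive_self_tuning}. The hypotheses $W(\nt) = \Omega(\nt)$ and $\lambda_t \in [0,1]$ are inherited from Lemma~\ref{lemma:regret}, so nothing new needs to be verified on that front.

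The main obstacle is essentially bookkeeping rather than a new idea: I need to confirm that the minimizer $\widetilde\lambda = \mathsf{OPT}/(\mathsf{OPT}+C\varepsilon)$ indeed lies in $[0,1]$ (it does, since $\mathsf{OPT}, C, \varepsilon \geq 0$), that this $\widetilde\lambda$ is an admissible choice for the fixed-parameter bound in Theorem~\ref{thm:upper_bound}, and that comparing $\mathsf{ALG}(\lambda^*)$ to $\mathsf{ALG}(\widetilde\lambda)$ is legitimate given how $\lambda^*$ is defined. No additional tools beyond Theorem~\ref{thm:upper_bound} and Lemma~\ref{lemma:regret} are needed; the result is essentially a corollary obtained by optimally tuning the fixed-$\lambda$ bound and then paying the online learning regret on top.
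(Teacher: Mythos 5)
Your proposal is correct and follows essentially the same route as the paper's proof: decompose the self-tuning cost into $\mathsf{ALG}(\lambda^*)$ plus the regret, bound the regret via Lemma~\ref{lemma:regret}, and bound $\mathsf{ALG}(\lambda^*)/\mathsf{OPT}$ by optimizing the fixed-$\lambda$ bound of Theorem~\ref{thm:upper_bound} over $\lambda$, which yields exactly $1+2\|H\|\varepsilon/(\mathsf{OPT}+C\varepsilon)$ at $\widetilde\lambda=\mathsf{OPT}/(\mathsf{OPT}+C\varepsilon)$. The only cosmetic difference is that the paper also minimizes the second branch of the $\min$ and observes it is dominated, whereas you work directly with the first branch; the resulting bound is identical.
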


In contrast to the regret bound, Theorem~\ref{thm:competitive_self_tuning} states an upper bound on the competitive ratio $\mathsf{CR}(\varepsilon)$ defined in Section~\ref{sec:cr}, which indicates that $\mathsf{CR}(\varepsilon)$ scales as $1+{O(\varepsilon)}/\left({{\Theta(1)+\Theta(\varepsilon)}}\right)$ as a function of $\varepsilon$. As a comparison, the $\lambda$-confident control in Algorithm~\ref{alg:lambda-confident} has a competitive ratio upper bound that is linear in the prediction error $\varepsilon$ (Theorem~\ref{thm:upper_bound}). This improved dependency highlights the importance of learning the trust parameter adaptively.  

Our experimental results in the next section verify the implications of Theorem~\ref{thm:upper_bound} and Theorem~\ref{thm:competitive_self_tuning}. Specifically, the simulated competitive ratio of the self-tuning control (Algorithm~\ref{alg:self}) is a non-linear envelope of the simulated competitive ratios for $\lambda$-confident control with fixed trust parameters { and as a function of prediction error $\varepsilon$, it matches the implied competitive ratio upper bound $1+{O(\varepsilon)}/\left({{\Theta(1)+\Theta(\varepsilon)}}\right)$.}

Theorem~\ref{thm:competitive_self_tuning} is proven by combining Lemma~\ref{lemma:regret} with Theorem~\ref{thm:upper_bound}, which bounds the competitive ratios for fixed trust parameters.
\begin{proof}[Proof of Theorem~\ref{thm:competitive_self_tuning}]


Denote by $\mathsf{ALG}(\lambda_0,\ldots,\lambda_{\nt-1})$ the algorithmic cost of the self-tuning control scheme. We have
\begin{align}
\label{eq:4.30}
    \frac{\mathsf{ALG}(\lambda_0,\ldots,\lambda_{\nt-1})}{\mathsf{OPT}}\leq  \frac{\left|\mathsf{ALG}(\lambda_0,\ldots,\lambda_{\nt-1})-\mathsf{ALG}(\lambda^*)\right|}{\mathsf{OPT}} + \frac{\mathsf{ALG}(\lambda^*)}{\mathsf{OPT}}.
\end{align}

Using Theorem~\ref{thm:upper_bound},
\begin{align}
\nonumber
   \frac{\mathsf{ALG}(\lambda^*)}{\mathsf{OPT}}\leq & 1+2\|H\| \min\left\{\min_{\lambda}\left(\frac{\lambda^2}{\mathsf{OPT}}\varepsilon+\frac{(1-\lambda)^2}{C}\right),\min_{\lambda}\left(\frac{1}{C}+\frac{\lambda^2 }{\mathsf{OPT}}\overline{W}\right)\right\}\\
   \label{eq:4.31}
   = & 1+2\|H\|\min\left\{\frac{\varepsilon}{\mathsf{OPT}+\varepsilon C},\frac{1}{C}\right\} = 1+2\|H\|\frac{\varepsilon}{\mathsf{OPT}+\varepsilon C}.
\end{align}
Moreover, the regret bound in Lemma~\ref{lemma:regret} implies
\begin{align*}
  \frac{\left|\mathsf{ALG}(\lambda_0,\ldots,\lambda_{\nt-1})-\mathsf{ALG}(\lambda^*)\right|}{\mathsf{OPT}} =  O\left(\frac{\left(\mu_{\mathsf{VAR}}(\mathbf{w})+\mu_{\mathsf{VAR}}({\mathbf{\widehat{w}}})\right)\log\nt}{\mathsf{OPT}}\right),
\end{align*}
combing which with~\eqref{eq:4.31},~\eqref{eq:4.30} gives the results.
\end{proof}

\section{Case Studies}
\label{sec:experiments}

We now illustrate our main results using numerical examples and case studies to highlight the impact of the trust parameter $\lambda$ in $\lambda$-confident control and demonstrate the ability of the self-tuning control algorithm to learn the appropriate trust parameter $\lambda$. We consider three applications. The first is a robot tracking example where a robot is asked to follow locations of an unknown trajectory and the desired location is only revealed the time immediately before the robot makes a decision to modify its velocity. Predictions about the trajectory are available. However, the predictions can be untrustworthy so that they may contain large errors. The second is an adaptive battery-buffered electric vehicle (EV) charging problem where a battery-buffered charging station adaptively supplies energy demands of arriving EVs while maintaining the state of charge of the batteries as close to a nominal level as possible. Our third application considers a non-linear control problem--the Cart-Pole problem. Our $\lambda$-confident and self-tuning control schemes use a linearized model while the algorithms are tested with the non-linear environment. We use the third application to demonstrate the practicality of our algorithms by showing that they  do not only work for LQC problems, but also non-linear systems.

To illustrate the impact of randomness in prediction errors in our case studies,  the three applications all use different forms of random error models. For each selected distribution of $\mathbf{\widehat{w}}-\mathbf{w}$, we repeat the experiments multiple times and report the worst case with the highest algorithmic cost; see Appendix~\ref{appendix:experiments} for details.

\subsection{Application 1: Robot Tracking}

\textbf{Problem description.} The first example we consider is a two-dimensional robot tracking application~\cite{li2019online,yu2020competitive}. There is a robot controller following a fixed but unknown cloud-shaped trajectory (see Figures~\ref{fig:binomial_convergence_trajectory} and \ref{fig:gaussian_convergence_trajectory}), which is 
$$
    y_t\coloneqq \begin{bmatrix}
     2\cos(\pi t/30) + \cos(\pi t/5)\\
    2\sin(\pi t/30) + \sin(\pi t/5)
    \end{bmatrix}, \ \ t=0,\ldots,\nt-1.
$$
The robot controller's location at time $t+1$, denoted by $p_{t+1}\in\mathbb{R}^2$, depends on its previous location and its velocity $v_t\in\mathbb{R}^2$ such that $p_{t+1}=p_t+0.2v_t$ and at each time $t+1$, the controller is able to apply an adjustment $u_t$ to modify its velocity such that $v_{t+1}=v_t+0.2u_t$. Together, letting $x_t\coloneqq p_t-y_t$, this system can be recast in the canonical form in~\eqref{eq:linear_dynamic} as 
$$
    \begin{bmatrix}
    x_{t+1}\\
    v_{t+1}
    \end{bmatrix} = A    \begin{bmatrix}
    x_{t}\\
    v_{t}
    \end{bmatrix} + B u_t +w_t, \text{ with}
$$
$$
A\coloneqq
    \begin{bmatrix}
    1 & 0 & 0.2 & 0\\
    0 & 1 & 0 & 0.2\\
    0 & 0 & 1 & 0 \\
    0 & 0 & 0 & 1
    \end{bmatrix}, \ 
B\coloneqq
    \begin{bmatrix}
    0 & 0 \\
    0 & 0 \\
    0.2 & 0 \\
    0 & 0.2
    \end{bmatrix},
\text{ and } w_t \coloneqq Ay_t- y_{t+1}.
$$
To track the trajectory, the controller sets
$$
 Q\coloneqq
    \begin{bmatrix}
    1 & 0 & 0 & 0\\
    0 & 1 & 0 & 0\\
    0 & 0 & 0 & 0 \\
    0 & 0 & 0 & 0
    \end{bmatrix}   \text{ and } R\coloneqq
    \begin{bmatrix}
    10^{-2} & 0 \\
    0 & 10^{-2} 
    \end{bmatrix}.
$$
\begin{figure}[h!]
    \centering
    \begin{subfigure}[t]{1\textwidth}
    \centering
    \includegraphics[width=0.7\linewidth]{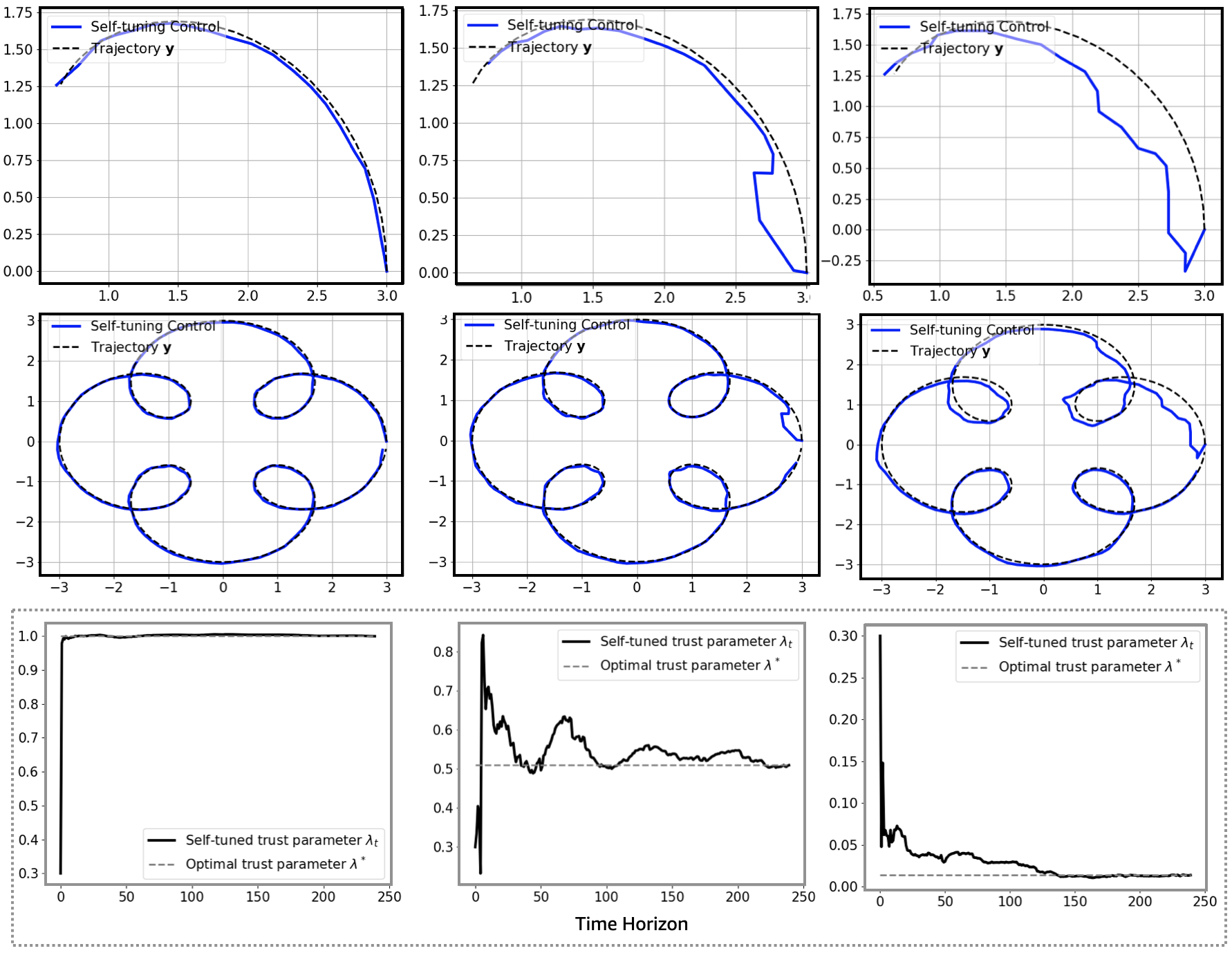}
          \caption{Left: low binomial prediction error with $c=10^{-2}$; middle: medium prediction error with $c= 10^{-1}$; right: high prediction error with $c=1$ where $c$ is a tuning parameter defined in Appendix~\ref{appendix:experiments}. }
\label{fig:binomial_convergence_trajectory}
\end{subfigure}
\\
\begin{subfigure}[t]{1\textwidth}
\includegraphics[width=0.7\linewidth]{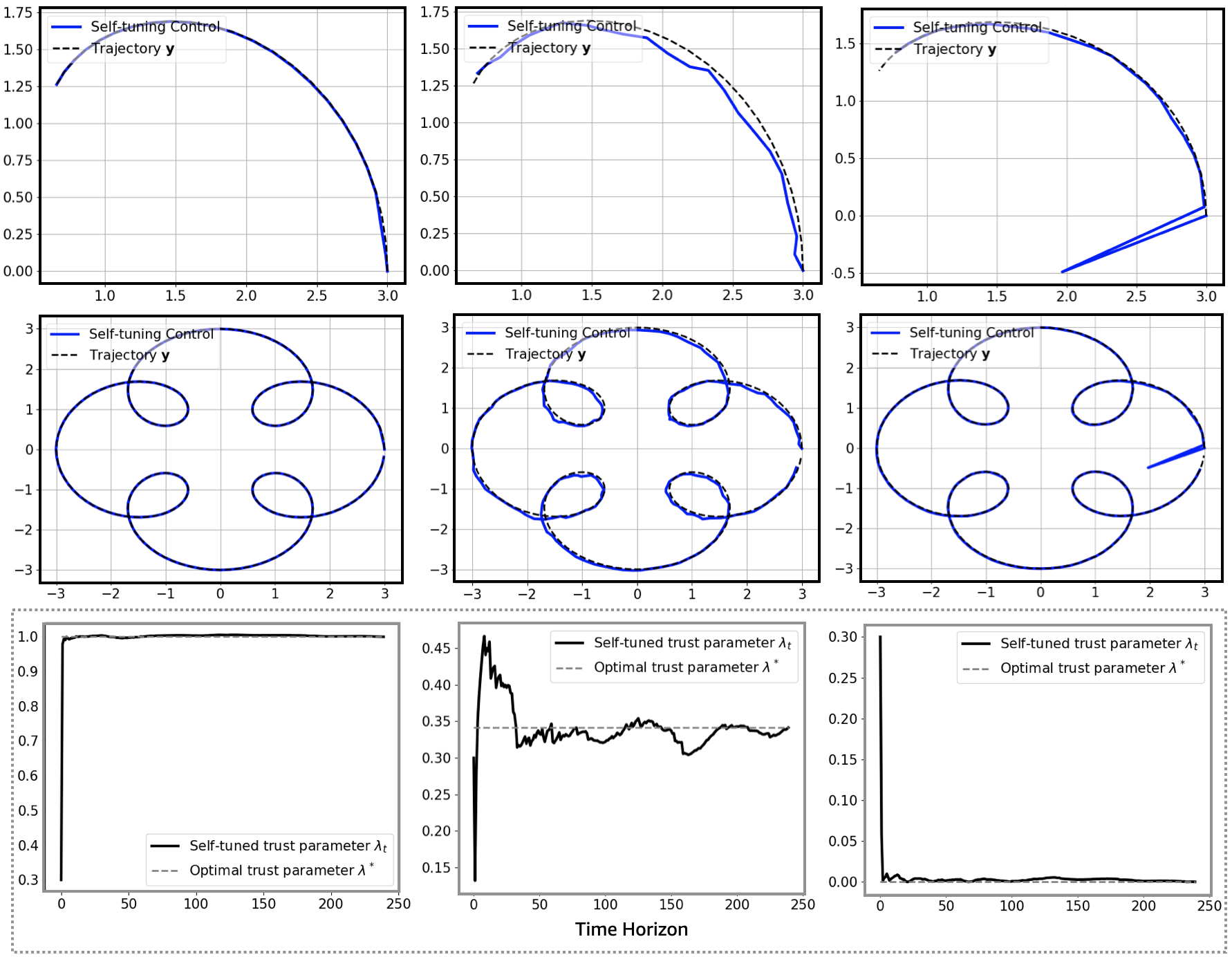}
\centering
          \caption{Left: low Gaussian prediction error with variance $\sigma^2=10^{-2}$; middle: medium Gaussian prediction error with variance $\sigma^2= 10^{-1}$; right: high Gaussian prediction error with variance $\sigma^2=1$. }
\label{fig:gaussian_convergence_trajectory}
\end{subfigure}
\caption{Tracking trajectories and trust parameters $(\lambda_0,\ldots,\lambda_{\nt-1})$ of the self-tuning control scheme. { The x-axis and y-axis in the top $6$ figures are locations of the robot. The y-axis in the bottom $3$ figures denotes the value of the trust parameter.}}
\end{figure}

\begin{figure}[h]
    \centering
    \includegraphics[scale=0.4]{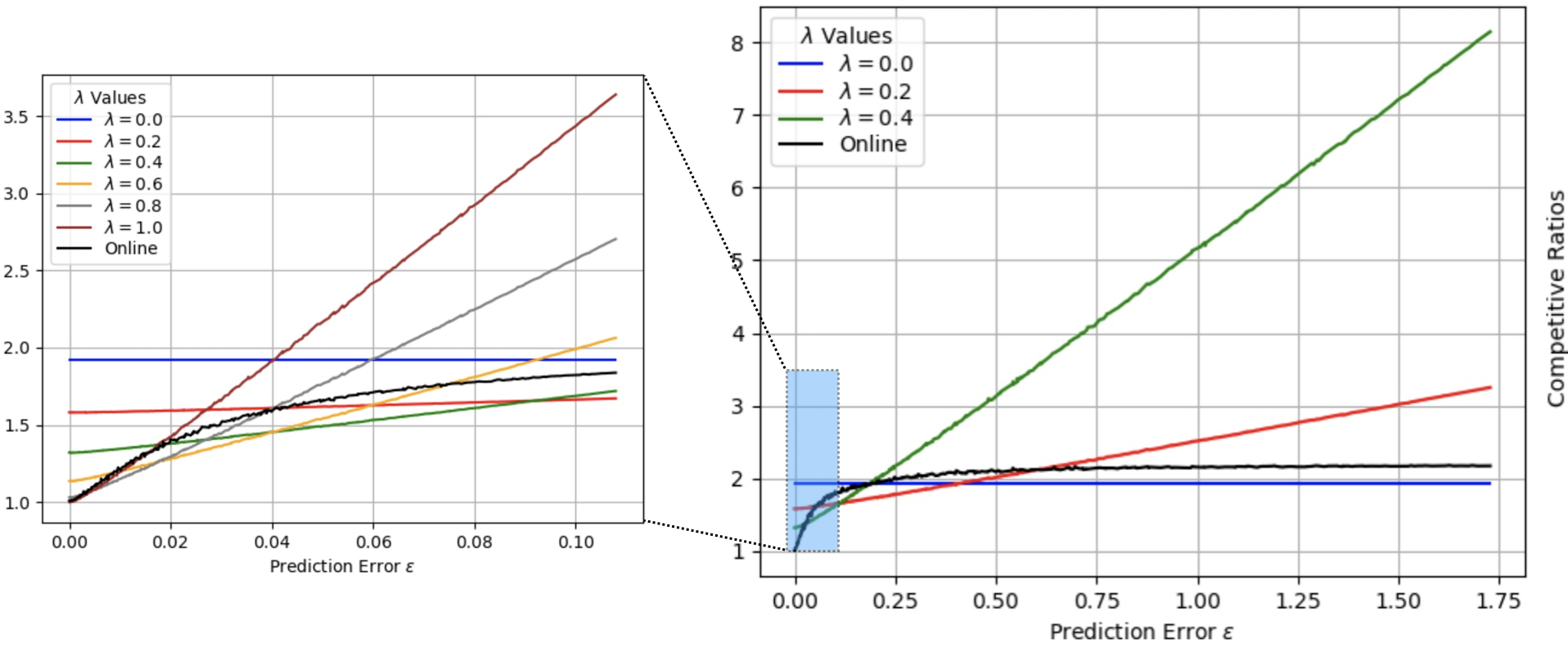}
          \caption{Impact of trust parameters and performance of self-tuning control for robot tracking.}
\label{fig:impact_tracking}
\end{figure}

\begin{figure}[h]
    \centering
    \includegraphics[scale=0.4]{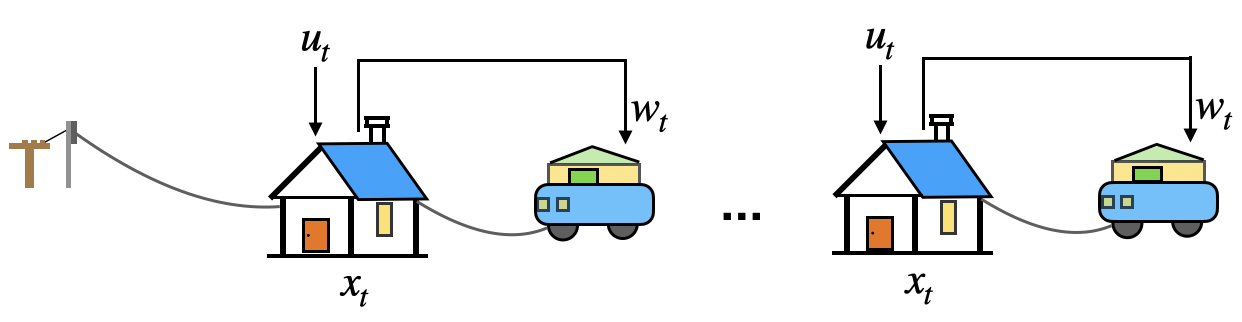}
          \caption{Adaptive battery-buffered EV charging modelled as a linear quadratic  control problem.}
\label{fig:impact_charging}
\end{figure}

\begin{figure}[h]
    \centering
    \includegraphics[scale=0.3]{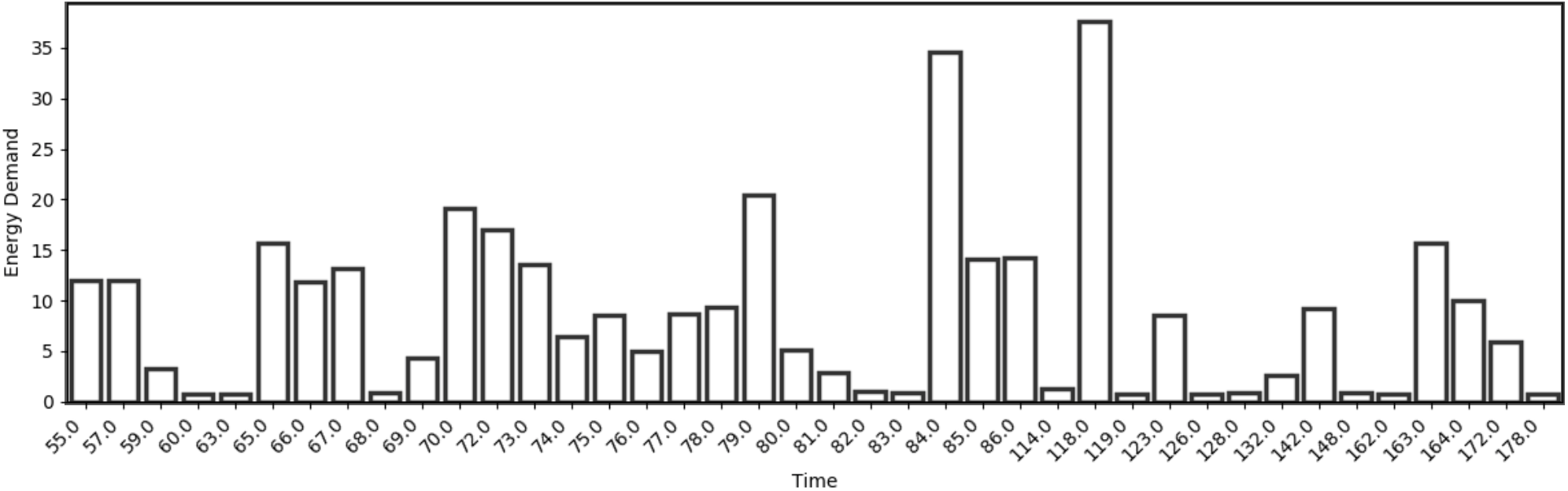}
          \caption{An example of the daily charging demands in ACN-Data~\cite{lee2019acn} on Nov 1st, $2018$.}
\label{fig:EV_data}
\end{figure}

\textbf{Experimental results.} In our first experiment, we demonstrate the convergence of the self-tuning scheme in Algorithm~\ref{alg:self}. To mimic the worst-case error, a random prediction error $e_t=\widehat{w}_t - w_t$ at each time $t$ is used. We then sample prediction error and implement our algorithm with several error instances and choose the one the worst competitive ratio. The details of settings can be found in Appendix~\ref{appendix:experiments}. To better simulate the task of tracking a trajectory and make it easier to observe the tracking accuracy, we ignore the cost of increasing velocity by setting $R$ as a zero matrix for Figure~\ref{fig:binomial_convergence_trajectory} and~Figure~\ref{fig:gaussian_convergence_trajectory}.

In Figure~\ref{fig:binomial_convergence_trajectory}, we observe that the tracking trajectory generated by the self-tuning scheme converges to the unknown trajectory $(y_1,\ldots,y_{\nt})$, regardless of the level of prediction error. 
We plot the tracking trajectories every $60$ time steps
with a scaling parameter (defined in Appendix~\ref{appendix:experiments}) $c=10^{-2}$ (left), $c=10^{-1}$ (mid) and $c=1$ (right), respectively. In all cases, we observe convergence of the trust parameters. Moreover, for a wide range of prediction error levels, without knowing the prediction error level in advance, the scheme is able to automatically switch its mode and become both consistent and robust by choosing an appropriate trust parameter $\lambda_t$ to accurately track the unknown trajectory. In Figure~\ref{fig:gaussian_convergence_trajectory}, we observe similar behavior when the prediction error is generated from Gaussian distributions.

Next, we demonstrate the performance of self-tuning control and the impact of trust parameters.
In Figure~\ref{fig:impact_tracking}, we depict the competitive ratios of the $\lambda$-confident control algorithm described in Section~\ref{sec:confident_control} with varying trust parameters, together with the competitive ratios of the self-tuning control scheme described in Algorithm~\ref{alg:self}. The label of the $x$-axis is the prediction error $\varepsilon$ (normalized by $10^3$), defined in~\eqref{eq:varepsilon}. We divide our results into two parts. The left sub-figure in Figure~\ref{fig:impact_tracking} considers a low-error regime where we observe that the competitive ratio of the self-tuning policy performs closely as the lower envelope formed by picking multiple trust parameters optimally offline. The right sub-figure in Figure~\ref{fig:impact_tracking} shows the performance of self-tuning for the case when the prediction error is high. For the high-error regime, the competitive ratio of the self-tuning control policy is close to those with the best fixed trust parameter. 


\subsection{Application 2: Adaptive Battery-Buffered EV Charging}

\textbf{Problem description.} We consider an adaptive battery-buffered Electric Vehicle (EV) charging problem. There is a charging station with $N$ chargers, with each charger connected to a battery energy storage system. Let $x_t$ be a vector in $\mathbb{R}_+^{N}$, whose entries represent the State of Charge (SoC) of the batteries at time $t$. The charging controller decides a charging schedule $u_t$ in $\mathbb{R}_+^{N}$ where each entry in $u_t$ is the energy to be charged to the $i$-th battery from external power supply at time $t$. The canonical form of the system can be represented by
\begin{align*}
    x_{t+1} =  A x_t + B u_t - w_t,
\end{align*}
where $A$ is an $N\times N$ matrix denotes the \textit{degradation} of battery charging levels and $B$ is an $N\times N$  diagonal matrix whose diagonal entry $0\leq B_i\leq 1$ represents the charging efficiency coefficient. In our experiments, without loss of generality, we assume $A$ and $B$ are identity matrices. The perturbation $w_t$ is defined as a length-$N$ vector, whose entry $w_t(i)=E$ when at time $t$ an EV arrives at charger $i$ and demands energy $E>0$; otherwise $w_t(i)=0$. Therefore the perturbations $(w_0,\ldots,w_{\nt-1})$ depend on the arrival of EVs and their energy demands. The  charging controller can only make a charging decision $u_t$ at time $t$ before knowing $w_t$ (as well as $w_{t+1},\ldots,w_{\nt-1}$) and the EVs that arrive at time $t$ (as well as future EV arrivals). The  goal of the adaptive battery-buffered EV charging problem is to maintain the battery SoC as close to a nominal value $\overline{x}$ as possible. Therefore, the charging controller would like to minimize $\sum_{t=0}^{\nt-1}\left(x_t-\overline{x}\right)^{\top} Q \left(x_t-\overline{x}\right) + u_t^{\top} R u_t$, equivalently, $\sum_{t=0}^{\nt-1}x_t^{\top} Q x_t + u_t^{\top} R u_t$ where $Q$ can be some positive-definite matrix and $R$ encodes the costs of external power supply. In our experiments, we set $Q$ as an identity matrix and $R=0.1\times Q$.

\textbf{Experimental results.} We show the performance of self-tuning control and the impact of trust parameters for adaptive EV charging in Figure~\ref{fig:impact_charging} and~\ref{fig:impact_charging_real}. In Figure~\ref{fig:impact_charging} we consider a synthetic case when EVs with $5$ kWh battery capacity arrive at a constant rate $0.2$, e.g., $1$ EV arrives every $5$ time slots. The results are divided into two parts. In Figure~\ref{fig:impact_charging_real}, we use daily data (ACN-Data) that contain EVs' energy demands, arrival times and departure times collected from a real-world adaptive EV charging network~\cite{lee2019acn}. We select a daily charging record on on Nov 1st, $2018$, depicted in Figure~\ref{fig:EV_data}. The left sub-figure considers a magnified low-error regime and the right sub-figure shows the performance of self-tuning for the case when the prediction error is high. For both regimes, the competitive ratios of the self-tuning control policy perform nearly as well as the lower envelope formed by picking multiple trust parameters optimally offline. We see in both Figure~\ref{fig:impact_tracking}  and Figure~\ref{fig:impact_charging} that with fixed trust parameters the competitive ratio is linear in $\varepsilon$, matching what Theorem~\ref{thm:upper_bound} indicates (in the sense of order in $\varepsilon$). Moreover, for the self-tuning scheme, in both Figure~\ref{fig:impact_tracking}  and Figure~\ref{fig:impact_charging}, we observe a competitive ratio $1+{O(\varepsilon)}/\left({\Theta(1)+\Theta(\varepsilon)}\right)$, which matches the competitive ratio bound given in Theorem~\ref{thm:competitive_self_tuning} in order sense (in $\varepsilon$).

\begin{figure}[h!]
    \centering
\begin{subfigure}[t]{1\textwidth}
    \centering
    \includegraphics[scale=0.405]{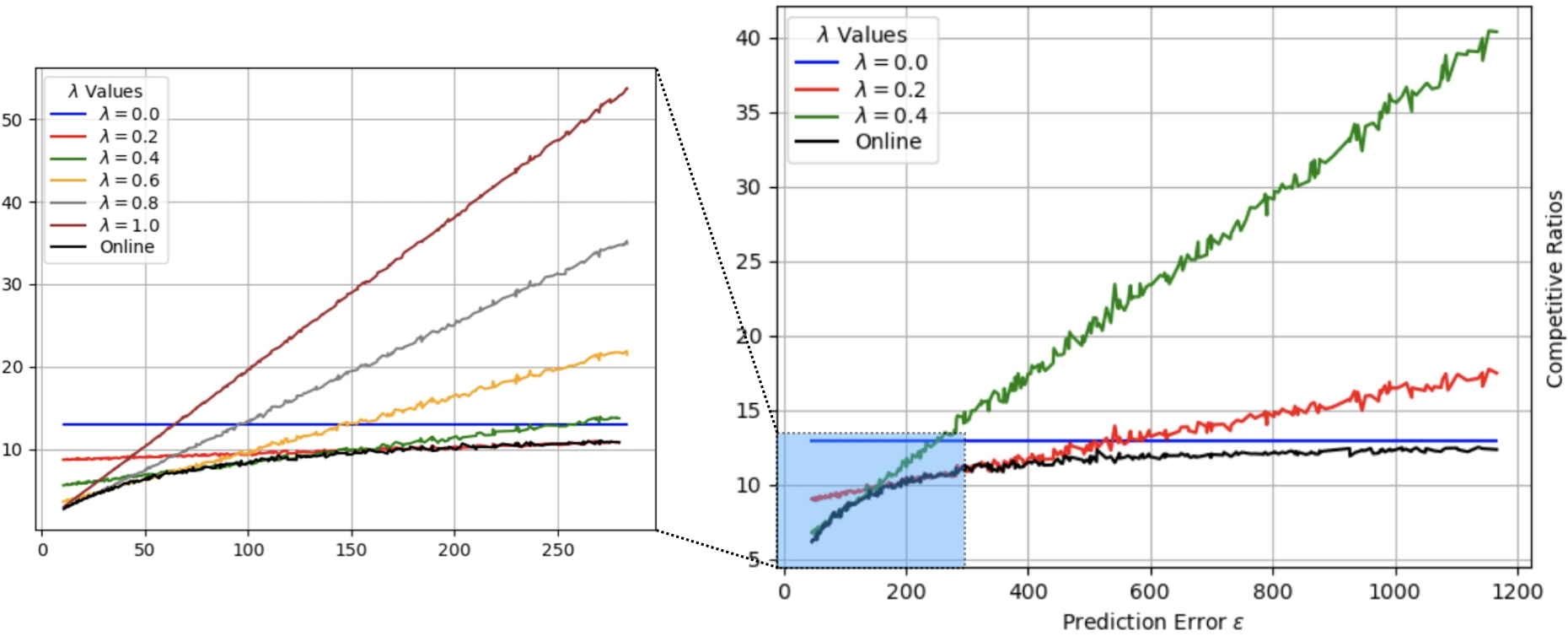}
          \caption{Experiments with  Synthetic EV charging.}
\label{fig:impact_charging}
\end{subfigure}
\\
\begin{subfigure}[t]{1\textwidth}
    \centering
    \includegraphics[scale=0.4]{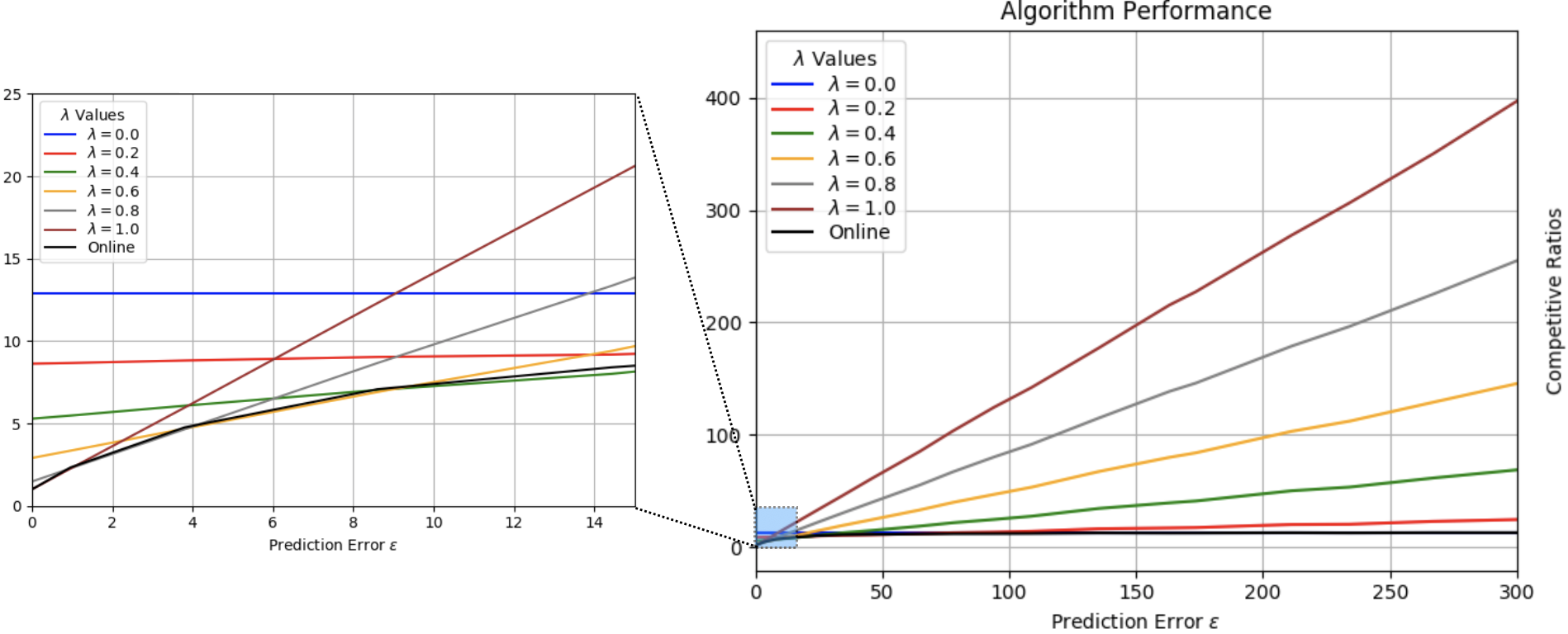}
          \caption{Experiments with daily EV charging data~\cite{lee2019acn}.}
\label{fig:impact_charging_real}
\end{subfigure}
\caption{Impact of trust parameters and performance of self-tuning control for adaptive battery-buffered EV charging with synthetic EV charging data (top) and realistic daily EV charging data~\cite{lee2019acn} (bottom).}
\end{figure}

\newpage

\subsection{Application 3: Cart-Pole}

\begin{figure}
    \begin{center}
    \includegraphics[width=0.25\textwidth]{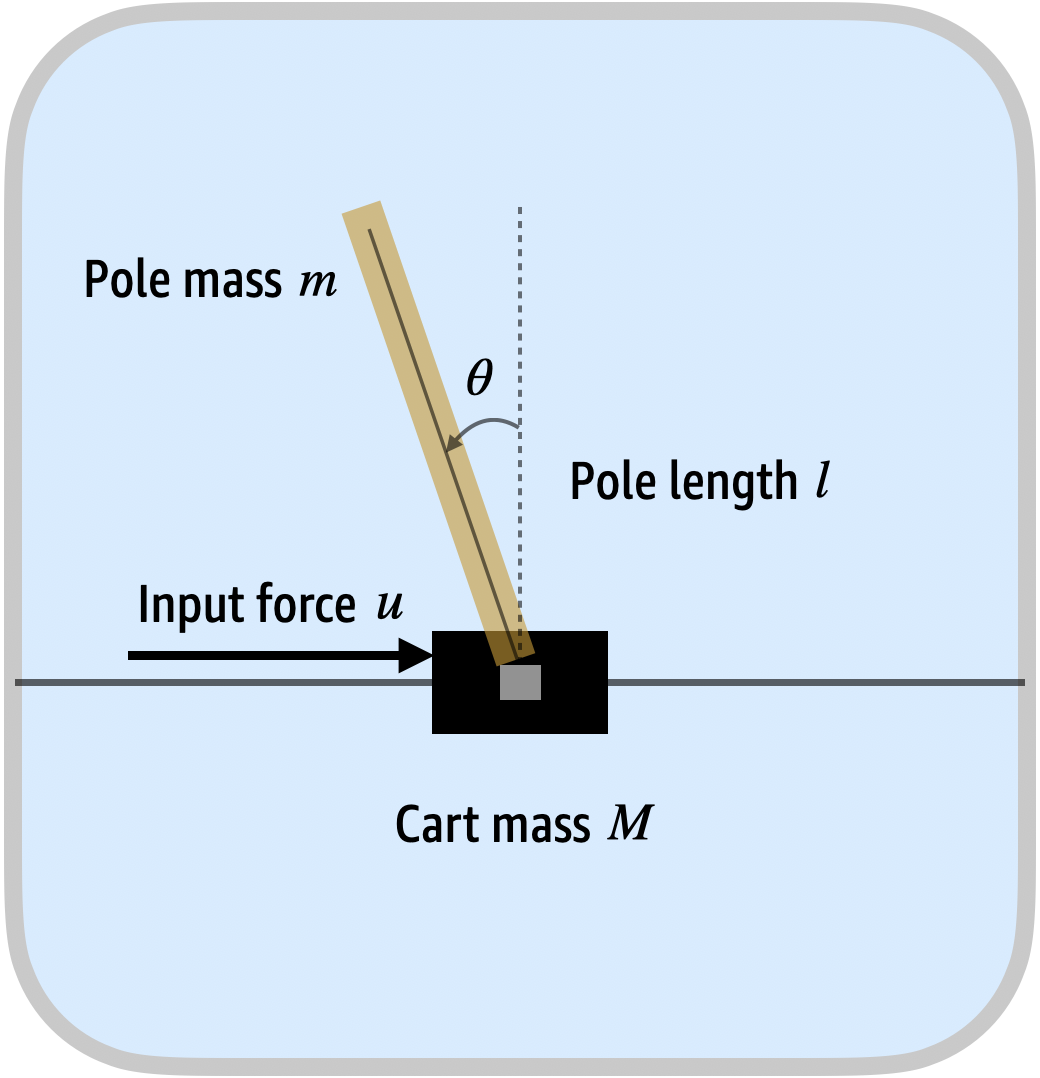}
    \end{center}
          \caption{\small The Cart-Pole model in Application 3.}
\label{fig:cartpole}
\end{figure}

\textbf{Problem description.}
The third set of experiments we consider is the classic Cart-Pole problem illustrated in Figure~\ref{fig:cartpole}. The goal of a controller is to stabilize the pole in the upright position. 
This is a widely studied nonlinear system. Neglecting friction, the dynamical equations of the Cart-Pole problem are
\begin{align}
\label{eq:cart0}
    \ddot{\theta} &= \frac{g \sin \theta + \cos \theta\left(\frac{-u-m l \dot{\theta}^2 \sin \theta}{m+M}\right)}{l\left(\frac{4}{3}-\frac{m \cos^2\theta}{m+M}\right)},\\
    \label{eq:cart1}
    \ddot{y} &= \frac{u+ m l\left(\dot{\theta}^2 \sin\theta - \ddot{\theta}\cos \theta\right)}{m+M}
\end{align}
where $u$ is the input force; $\theta$ is the angle between the pole and the vertical line; $y$ is the location of the pole; $g$ is the gravitational acceleration; $l$ is the pole length; $m$ is the pole mass; and $M$ is the cart mass. 
Taking $\sin\theta\approx \theta$ and $\cos\theta \approx 1$ and ignoring higher order terms, the dynamics of the Cart-Pole problem can be linearized as
{
\begin{align*} \frac{d}{dt}\begin{bmatrix}
    \dot{y}\\ \ddot{y} \\ \dot{\theta} \\ \ddot{\theta}
    \end{bmatrix} = \begin{bmatrix}
    0 & 1 & 0 & 0 \\ 0 & 0 & -\frac{m l g}{\eta(m +M)} & 0 \\ 0 & 0 & 0 & 1 \\ 0 & 0 & \frac{g}{\eta} & 0
    \end{bmatrix}\frac{d}{dt}\begin{bmatrix}
    {y}\\ \dot{y} \\ {\theta} \\ \dot{\theta}
    \end{bmatrix} + \begin{bmatrix}
   0\\ \frac{(m+M)\eta+m l}{(m+M)^2\eta}\\ 0 \\ -\frac{1}{(m+M)\eta}
    \end{bmatrix} u_t + w_t,
\end{align*}}
{
\noindent where in the above $\eta\coloneqq l\left(\frac{4}{3}-\frac{m}{m+M}\right)$ and, in our experiments, we set the cart mass $M=10.0 kg$, pole mass $m=1.0 kg$, pole length $l=10.0 m$ and gravitational acceleration $g=9.8 m/ s^2$. We set $Q=I$ and $R=10^{-3}$ and each $w_t$ is a fixed external force defined as $$60\times\left[
   0, \frac{(m+M)\eta+m l}{(m+M)^2\eta},0, -\frac{1}{(m+M)\eta}
    \right]^{\top}.$$}

\textbf{Experimental results.} 
We show the performance of the self-tuning control (Algorithm~\ref{alg:self}) and the impact of trust parameters for the Cart-Pole problem in Figure~\ref{fig:impact_cartpole}, together with the $\lambda$-confident control scheme in Algorithm~\ref{alg:lambda-confident} for several fixed trust parameters $\lambda$. The algorithms are tested using the true nonlinear dynamical equations in~\eqref{eq:cart0}-\eqref{eq:cart1}.

\begin{figure}[h]
    \centering
    \includegraphics[scale=0.45]{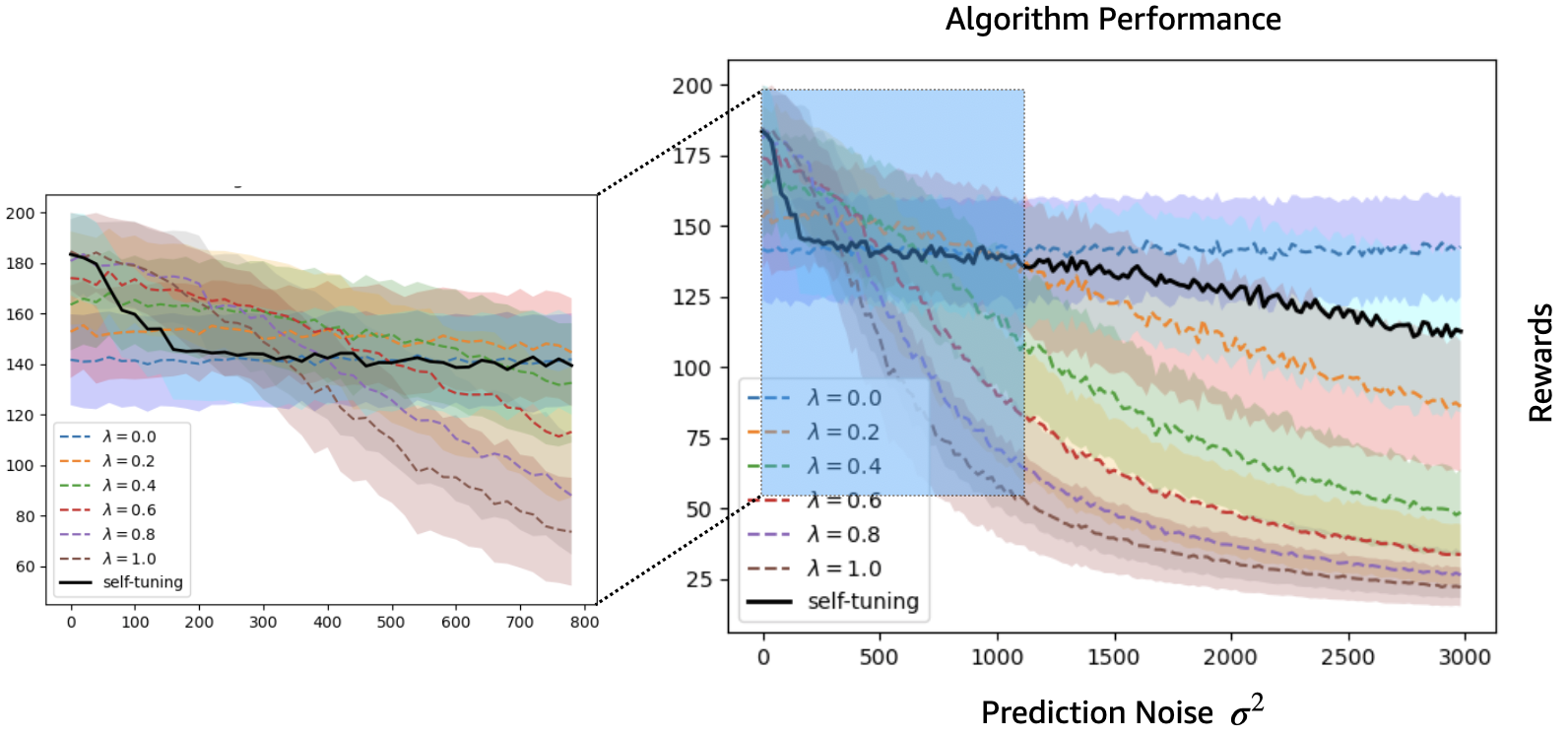}
          \caption{Impact of trust parameters and performance of self-tuning control for the Cart-Pole problem.}
\label{fig:impact_cartpole}
\end{figure}

In Figure~\ref{fig:impact_cartpole}, we change the variance $\sigma^2$ of the prediction noise $e_t=\widehat{w}_t - w_t$ at each time $t$ and plot the average episodic rewards in the OpenAI Gym environment~\cite{brockman2016openai}. Different from the worst-case settings in the previous two applications, we run episodes multiples times and show plot the mean rewards. The height of the shadow area in  Figure~\ref{fig:impact_cartpole} represents the standard deviation of the rewards. The detailed hyper-parameters are given in Section~\ref{appendix:experiments}. 
 Our results show that, despite the fact that the problem is \emph{nonlinear}, the self-tuning control algorithm using a linearized model is still able to automatically adjust the trust parameter $\lambda_t$ and achieves both consistency and robustness, regardless of the prediction error. In particular, it is close to the best algorithms for small prediction error while also staying among the best when prediction error is large.

\section{Concluding Remarks}

\label{sec:conclusion}
In this paper, we detail an approach that allows the use of black-box AI tools in a way that ensures worst-case performance bounds for linear quadratic control. Further, we demonstrate the effectiveness of our approach in multiple applications.   


There are many potential future directions that build on this work.
First, we consider a linear quadratic control problem in this paper, and an important extension will be to analyze the robustness and consistency of non-linear control systems. Second, our regret bound (Lemma~\ref{lemma:regret}) and competitive results (Theorem~\ref{thm:competitive_self_tuning}) are not tight when the variation of perturbations or predictions is high, therefore it is interesting to explore the idea of ``follow-the-regularized-leader''~\cite{shalev2011online,mcmahan2011follow} and understand if adding an extra regularizer in the update rule of $\lambda$ for self-tuning control can improve the convergence and/or the regret. Finally, characterizing a tight trade-off between robustness and consistency for linear quadratic control is of particular interest. For example, the results in~\cite{purohit2018improving,wei2020optimal} together imply a tight robustness and consistency trade-off for the ski-rental problem. It would be interesting to explore if it is possible to do the same for linear quadratic control.

\addcontentsline{toc}{section}{Bibliography}
\bibliographystyle{plain}
{\bibliography{main}}

\appendix

\section{Erratum}
In the published version of this paper~\cite{li2022robustness}, Lemma 5 states a bound that is not correct. Instead, the correct bound is:
\begin{equation*}
    \mathsf{Regret} \leq \|H\| \sum_{t=0}^{\nt-1} \left| \lambda_t - \lambda_{\nt} \right| \left| \lambda_t + \lambda_{\nt} \right| \left\| \sum_{\tau=t}^{\nt-1} \left( F^\top \right)^{\tau - t} P \widehat{w}_\tau \right\|,
\end{equation*}
as a result of Lemma~\ref{lemma:alg_opt} and the symmetry of $H$.

This yields a competitive ratio bound 
\begin{align*}
    \mathsf{CR}(\varepsilon)\leq 1+ 2\|H\|\frac{\varepsilon}{\mathsf{OPT}+C\varepsilon} + O\left(\frac{\left(\mu_{\mathsf{VAR}}(\mathbf{w})+\mu_{\mathsf{VAR}}({\mathbf{\widehat{w}}})\right)}{\mathsf{OPT}}\log\nt\right)
\end{align*}
instead of
\begin{align*}
    \mathsf{CR}(\varepsilon)\leq 1+ 2\|H\|\frac{\varepsilon}{\mathsf{OPT}+C\varepsilon} + O\left(\frac{\left(\mu_{\mathsf{VAR}}(\mathbf{w})+\mu_{\mathsf{VAR}}({\mathbf{\widehat{w}}})\right)^2}{\mathsf{OPT}}\right).
\end{align*}

\section{Useful Lemmas}
\label{appendix:lemma}

Before proceeding to the proofs of our main results, we present some useful lemmas.
We first present a lemma below from~\cite{yu2020competitive} that characterizes the difference between the optimal and the algorithmic costs.

\begin{lemma}[Lemma~10 in~\cite{yu2020competitive}]
\label{lemma:alg_opt}
 For any $\psi_t\in\mathbb{R}^\nn$, if at each time $t=0,\ldots,\nt-1$,
\begin{align*}
u_t = -(R+B^{\top}PB)^{-1}B^{\top}\left(PAx_t+\sum_{\tau=t}^{\nt-1}\left(F^\top\right)^{\tau-t} Pw_{\tau} - \psi_t\right),
\end{align*}
then the gap between the optimal cost $\mathsf{OPT}$ and the algorithm cost $\mathsf{ALG}$ induced by selecting control actions $(u_1,\ldots,u_{\nt})$ equals to
\begin{align}
\label{eq:gap}
 \mathsf{ALG}-\mathsf{OPT}=\sum_{t=0}^{\nt-1}\psi^\top_t H \psi_t
\end{align}
where $H\coloneqq B(R+B^\top P B)^{-1} B^\top$ and $F\coloneqq A-HPA$.
\end{lemma}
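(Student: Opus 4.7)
The plan is to prove the identity through a Bellman value-function decomposition, which is the standard and cleanest way to extract excess cost in an LQ problem with known disturbances. First I would set up the offline problem with fixed perturbations $(w_0,\ldots,w_{\nt-1})$ and postulate that its optimal cost-to-go has the quadratic-plus-affine-plus-constant form $V_t^*(x) = x^\top P x + 2 p_t^\top x + s_t$, with $p_\nt = 0$, $s_\nt = 0$. By plugging this into the Bellman recursion, differentiating in $u$, and using the DARE to collapse the quadratic term, backward induction yields the optimal feedback $u_t^\star(x) = -Kx - (R+B^\top P B)^{-1} B^\top (Pw_t + p_{t+1})$ and the linear recursion $p_t = F^\top(Pw_t + p_{t+1})$. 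Unrolling that recursion gives $Pw_t + p_{t+1} = \sum_{\tau=t}^{\nt-1}(F^\top)^{\tau-t} P w_\tau$, which identifies the optimal control with the closed-form expression in Theorem 3.2 of \cite{yu2020power}.

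The second step is to rewrite the algorithm's action as a controlled perturbation of $u_t^\star(x_t)$. Directly from the hypothesis,
\begin{equation*}
u_t = u_t^\star(x_t) + (R+B^\top P B)^{-1} B^\top \psi_t \eqqcolon u_t^\star(x_t) + \delta_t,
\end{equation*}
so $u_t - u_t^\star(x_t) = \delta_t$ with $\delta_t^\top (R+B^\top P B)\delta_t = \psi_t^\top H \psi_t$ by the definition of $H$. Because the one-step Q-function $Q_t(x,u) \coloneqq x^\top Q x + u^\top R u + V_{t+1}^*(Ax+Bu+w_t)$ is quadratic in $u$ with Hessian $R+B^\top P B$ and is minimized at $u_t^\star(x)$, I get the pointwise suboptimality identity $Q_t(x_t,u_t) - V_t^*(x_t) = \delta_t^\top (R+B^\top P B)\delta_t = \psi_t^\top H \psi_t$, valid at the actual algorithmic state $x_t$.

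Finally, I would telescope along the algorithm's trajectory. Writing $\mathsf{ALG} = \sum_{t=0}^{\nt-1}(x_t^\top Q x_t + u_t^\top R u_t) + V_\nt^*(x_\nt)$ and inserting $\sum_{t=0}^{\nt-1}[V_{t+1}^*(x_{t+1}) - V_t^*(x_t)] + V_0^*(x_0)$ produces
\begin{equation*}
\mathsf{ALG} = V_0^*(x_0) + \sum_{t=0}^{\nt-1}\bigl[Q_t(x_t,u_t) - V_t^*(x_t)\bigr] = \mathsf{OPT} + \sum_{t=0}^{\nt-1}\psi_t^\top H \psi_t,
\end{equation*}
where I used $x_{t+1}=Ax_t+Bu_t+w_t$ inside $V_{t+1}^*$ to recognize the bracketed term as $Q_t(x_t,u_t)-V_t^*(x_t)$, and the fact that $V_0^*(x_0)$ is precisely the offline optimum $\mathsf{OPT}$.

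I expect the main obstacle to be the bookkeeping in step one: verifying the backward induction cleanly so that the $p_t$ sequence matches the $\sum_{\tau=t}^{\nt-1}(F^\top)^{\tau-t} P w_\tau$ pattern that appears in the statement of the lemma. The DARE-based cancellation of the cross term $A^\top P B (R+B^\top P B)^{-1} B^\top P A$ is the one algebraic identity that must be executed carefully; once that is in hand, the remainder is a quadratic-deviation calculation and a telescoping sum.
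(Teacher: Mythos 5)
Your proposal is correct. Note first that the paper does not prove this statement at all: it is imported verbatim as Lemma~10 of~\cite{yu2020competitive}, so there is no in-paper proof to compare against. Your Bellman/value-function argument is a sound, self-contained derivation and is essentially the standard route (the cited works \cite{yu2020power,yu2020competitive} obtain the same identity via an explicit cost-to-go decomposition that is equivalent to your telescoping). The key steps all check out: the quadratic coefficient of $V_t^*$ stays fixed at $P$ precisely because $P$ is both the terminal weight and the DARE fixed point; unrolling $p_t = F^\top(Pw_t+p_{t+1})$ with $p_\nt=0$ recovers $Pw_t+p_{t+1}=\sum_{\tau=t}^{\nt-1}(F^\top)^{\tau-t}Pw_\tau$, matching~\eqref{eq:MPC_explicit}; the deviation is $\delta_t=(R+B^\top PB)^{-1}B^\top\psi_t$, and completing the square in the one-step $Q$-function (whose curvature in $u$ is $R+B^\top PB$) gives $\delta_t^\top(R+B^\top PB)\delta_t=\psi_t^\top B(R+B^\top PB)^{-1}B^\top\psi_t=\psi_t^\top H\psi_t$; and the telescoping sum with $V_0^*(x_0)=\mathsf{OPT}$ yields the exact identity~\eqref{eq:gap}. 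The only item to execute carefully is the one you already flag, namely the DARE cancellation in the backward induction; everything else is routine.
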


The next lemma describes the form of the optimal trust parameter.

\begin{lemma}
\label{lemma:optimal_tuning}
The optimal trust parameter $\lambda^*$ that minimizes $\mathsf{ALG}(\lambda)-\mathsf{OPT}$ is $\lambda^*=\lambda_{\nt}$.
\end{lemma}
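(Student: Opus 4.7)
The plan is to unpack the cost gap $\mathsf{ALG}(\lambda) - \mathsf{OPT}$ into an explicit quadratic in $\lambda$ using Lemma~\ref{lemma:alg_opt}, and then match its minimizer with the closed form of $\lambda_\nt$ given by Algorithm~\ref{alg:self}.

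First, I would rewrite the $\lambda$-confident action from Algorithm~\ref{alg:lambda-confident} in the form required by Lemma~\ref{lemma:alg_opt}. Specifically, for each $t$,
\begin{align*}
u_t = -(R+B^{\top}PB)^{-1}B^{\top}\Bigl(PAx_t + \textstyle\sum_{\tau=t}^{\nt-1}(F^\top)^{\tau-t}P w_\tau - \psi_t\Bigr),
\end{align*}
where
\begin{align*}
\psi_t(\lambda) \;=\; \eta(w;t,\nt-1) - \lambda\,\eta(\widehat{w};t,\nt-1),
\end{align*}
using the shorthand $\eta$ introduced in Algorithm~\ref{alg:self}. Substituting into the identity in Lemma~\ref{lemma:alg_opt} then yields
\begin{align*}
\mathsf{ALG}(\lambda) - \mathsf{OPT} \;=\; \sum_{t=0}^{\nt-1}\bigl(\eta(w;t,\nt-1)-\lambda\,\eta(\widehat{w};t,\nt-1)\bigr)^{\top} H \bigl(\eta(w;t,\nt-1)-\lambda\,\eta(\widehat{w};t,\nt-1)\bigr).
\end{align*}

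Second, I would observe that $H = B(R+B^\top P B)^{-1}B^\top$ is symmetric and positive semidefinite, so the right-hand side is a convex quadratic in $\lambda$. Expanding in powers of $\lambda$ and differentiating,
\begin{align*}
\frac{d}{d\lambda}\bigl(\mathsf{ALG}(\lambda)-\mathsf{OPT}\bigr) \;=\; -2\sum_{t=0}^{\nt-1}\eta(\widehat{w};t,\nt-1)^{\top} H\bigl(\eta(w;t,\nt-1)-\lambda\,\eta(\widehat{w};t,\nt-1)\bigr),
\end{align*}
setting this to zero gives the unique stationary point
\begin{align*}
\lambda^{\star} \;=\; \frac{\sum_{t=0}^{\nt-1}\eta(\widehat{w};t,\nt-1)^{\top} H\,\eta(w;t,\nt-1)}{\sum_{t=0}^{\nt-1}\eta(\widehat{w};t,\nt-1)^{\top} H\,\eta(\widehat{w};t,\nt-1)}.
\end{align*}

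Third, I would compare this expression term-by-term with the definition of $\lambda_\nt$ in Algorithm~\ref{alg:self}. Since $H^\top = H$, we have $\eta(\widehat{w};t,\nt-1)^\top H\,\eta(w;t,\nt-1) = \eta(w;t,\nt-1)^\top H\,\eta(\widehat{w};t,\nt-1)$, so the numerator above coincides with the numerator in the update rule for $\lambda_\nt$, and the denominators already match. Hence $\lambda^{\star}=\lambda_\nt$, which is the claim. A brief aside handles the degenerate case in which the denominator vanishes: then $\eta(\widehat{w};t,\nt-1)^{\top} H\eta(\widehat{w};t,\nt-1)=0$ for every $t$, making the objective independent of $\lambda$ and consistent with the convention $\lambda=1$ adopted in Algorithm~\ref{alg:self}.

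I expect no real obstacle beyond bookkeeping: the only subtle step is correctly identifying $\psi_t(\lambda)$ so that Lemma~\ref{lemma:alg_opt} applies verbatim, and using the symmetry of $H$ to reconcile the ordering of $\eta(w;\cdot)$ and $\eta(\widehat{w};\cdot)$ in the two quotient expressions.
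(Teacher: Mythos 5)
Your proposal is correct and follows essentially the same route as the paper: the paper's proof simply invokes Lemma~\ref{lemma:alg_opt} to write $\mathsf{ALG}(\lambda)-\mathsf{OPT}$ as the quadratic $\sum_{t}\psi_t(\lambda)^\top H\psi_t(\lambda)$ and asserts that its minimizer is $\lambda_\nt$, which is exactly the computation you carry out explicitly. Your version is more complete, since you actually differentiate, use the symmetry of $H$ to match the numerator ordering, and handle the degenerate case where the denominator vanishes — details the paper omits.
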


\begin{proof}[Proof of~Lemma~\ref{lemma:optimal_tuning}]
The optimal trust parameter $\lambda^*$ is
\begin{align}
\label{eq:4.1}
    \lambda^*:
    = &\min_{\lambda}\sum_{s=0}^{\nt-1}\left[\left(\sum_{\tau=s}^{t-1}\left(F^\top\right)^{\tau-s} P (w_\tau-\lambda \widehat{w}_\tau) \right)^\top H \left(\sum_{\tau=s}^{\nt-1}\left(F^\top\right)^{\tau-s} P (w_\tau-\lambda \widehat{w}_\tau) \right)\right],
\end{align}
implying that $\lambda^*=\lambda_{\nt}$.
\end{proof}

Next, we note that the static regret depends on the convergence of $\lambda_t$.
\begin{lemma}
\label{lemma:regret_bound}
The static regret satisfies 
\begin{align*}
   {\mathsf{ALG}((\lambda_0,\ldots,\lambda_{\nt-1})) - \mathsf{ALG}(\lambda_{\nt})\leq \|H\| \sum_{t=0}^{\nt-1} \left| \lambda_t - \lambda_{\nt} \right| \left| \lambda_t + \lambda_{\nt} \right| \left\| \sum_{\tau=t}^{\nt-1} \left( F^\top \right)^{\tau - t} P \widehat{w}_\tau \right\|}.
\end{align*}
\end{lemma}
\begin{proof}[Proof of Lemma~\ref{lemma:regret_bound}]

Let $\mathsf{ALG}((\lambda_0,\ldots,\lambda_{\nt-1}))$ and $\mathsf{ALG}(\lambda_{\nt})$ denote the corresponding algorithm costs for using trust parameters $(\lambda_0,\ldots,\lambda_{\nt-1})$ and a fixed optimal trust parameter $\lambda_{\nt}$ in hindsight correspondingly. Since $H$ is symmetric, applying Lemma~\ref{lemma:alg_opt} twice implies
\begin{align}
\label{eq:4.7}
\mathsf{ALG}((\lambda_0,\ldots,\lambda_{\nt-1})) - \mathsf{ALG}(\lambda_{\nt})
 &\leq \|H\| \sum_{t=0}^{\nt-1} \left| \lambda_t - \lambda_{\nt} \right| \left| \lambda_t + \lambda_{\nt} \right| \left\| \sum_{\tau=t}^{\nt-1} \left( F^\top \right)^{\tau - t} P \widehat{w}_\tau \right\|\\
 &\leq  2\|H\| \sum_{t=0}^{\nt-1} \left| \lambda_t - \lambda_{\nt} \right| \left\| \sum_{\tau=t}^{\nt-1} \left( F^\top \right)^{\tau - t} P \widehat{w}_\tau \right\|,
\end{align}
where the last inequality follows from the fact that for any $t=1,\ldots,\nt$, $|\lambda_t| \leq 1$.

\end{proof}

\begin{lemma}
\label{lemma:series_1}
Suppose two real sequences $(V_1,\ldots,V_{\nt})$ and $(W_1,\ldots,W_{\nt})$ with $W_t> 0$ for all $1\leq t\leq \nt$, converge to $V_{\nt}$ and $W_{\nt}> 0$ such that for any integer $1\leq t\leq \nt$, $|V_t-V_{\nt}|\leq{C_1}/{t}$ and $|W_t-W_{\nt}|\leq{C_2}/{t}$ for some constants $C_1,C_2> 0$. Then the sequence $\left(\frac{V_1}{W_1},\ldots,\frac{V_{\nt}}{W_{\nt}}\right)$ converges to $\frac{V_{\nt}}{W_{\nt}}$ such that for any $1\leq t\leq \nt$,
\begin{align*}
    \left|\frac{V_t}{W_t}-\frac{V_{\nt}}{W_{\nt}}\right|\leq\frac{1}{t}\left(\frac{C_1\alpha_t+C_2}{|W_{\nt}|}\right)
\end{align*}
where $\alpha_t\coloneqq \max\{V_t/W_t\}$.
\end{lemma}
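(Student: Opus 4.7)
The statement is a standard real-analysis fact about the convergence of a ratio of two sequences, so the plan is to reduce the bound on $|V_t/W_t - V_\nt/W_\nt|$ to the given bounds $|V_t-V_\nt|\le C_1/t$ and $|W_t-W_\nt|\le C_2/t$ via a single algebraic identity and the triangle inequality. There is no dynamical or control content here; the lemma will later be invoked in the proof of Lemma~\ref{lemma:lambda} with $V_t$ and $W_t$ playing the role of the numerator and denominator in the expression for $\lambda_t$ given by Algorithm~\ref{alg:self}.

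The first step is to put the two fractions over a common denominator, writing
\begin{align*}
\frac{V_t}{W_t}-\frac{V_\nt}{W_\nt} \;=\; \frac{V_t W_\nt - V_\nt W_t}{W_t W_\nt}.
\end{align*}
The second step is to split the numerator by adding and subtracting a mixed product. The cleanest choice is
\begin{align*}
V_t W_\nt - V_\nt W_t \;=\; V_t\,(W_\nt - W_t) \;+\; W_t\,(V_t - V_\nt),
\end{align*}
which, after dividing by $W_t W_\nt$, produces
\begin{align*}
\frac{V_t}{W_t}-\frac{V_\nt}{W_\nt}
\;=\; \frac{V_t}{W_t}\cdot\frac{W_\nt-W_t}{W_\nt}
\;+\; \frac{V_t-V_\nt}{W_\nt}.
\end{align*}

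The third step is to apply the triangle inequality and plug in the two hypotheses. The first summand is bounded by $|V_t/W_t|\cdot|W_\nt-W_t|/|W_\nt| \le \alpha_t \cdot (C_2/t)/|W_\nt|$, using the definition of $\alpha_t$ as the ratio $V_t/W_t$ (or its absolute value); the second summand is bounded by $(C_1/t)/|W_\nt|$. Adding the two bounds and factoring out $1/(t|W_\nt|)$ yields the claimed inequality, up to relabeling of $C_1$ and $C_2$. The alternate split $V_t W_\nt - V_\nt W_t = (V_t-V_\nt)W_\nt + V_\nt(W_\nt-W_t)$ leads to the same conclusion with the roles of $V_t/W_t$ and $V_\nt/W_\nt$ interchanged, so whichever coefficient the authors denote by $\alpha_t$ can be accommodated by picking the matching decomposition.

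The calculation is elementary, so there is no genuine obstacle; the only minor point is matching the particular placement of $\alpha_t$ in the stated bound, which is handled by choosing the appropriate of the two add-and-subtract decompositions above. I would simply present the identity, apply the triangle inequality once, and substitute the two hypotheses in a single line, after which the stated bound follows immediately.
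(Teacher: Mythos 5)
Your proof is correct and follows essentially the same route as the paper: both put the difference over the common denominator $W_tW_\nt$, add and subtract the mixed product $V_tW_t$ to split the numerator as $V_t(W_\nt-W_t)+W_t(V_t-V_\nt)$, and apply the triangle inequality together with the two hypotheses. Your observation that the roles of $C_1$ and $C_2$ end up swapped relative to the stated bound (harmless, up to relabeling of constants) is also accurate — the paper's own final line exhibits the same relabeling.
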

\begin{proof}[Proof of Lemma~\ref{lemma:series_1}]
Based on the assumption, for any $1\leq t\leq \nt$, we have that
\begin{align}
    \nonumber
    \left|\frac{V_t}{W_t}-\frac{V_{\nt}}{W_{\nt}}\right|=\left|\frac{V_t W_{\nt}-V_{\nt} W_t}{W_t W_{\nt}}\right|&=\left|\frac{V_t W_{\nt}-V_{t}W_{t}+V_{t}W_{t} -V_{\nt} W_t}{W_t W_{\nt}}\right|\\
    \nonumber
    &\leq \left|\frac{V_{t}\left(W_\nt -W_t\right)}{W_t W_{\nt}}\right| + \left|\frac{W_{t}\left(V_{\nt}- V_t\right)}{W_t W_{\nt}}\right|\\
    \nonumber
    &\leq \frac{1}{t}\left(\frac{C_1 |V_t|}{|W_t W_{\nt}|}+\frac{C_2}{\left|W_{\nt}\right|}\right).
\end{align}
Since $W_t\neq 0$ for all $1\leq t\leq \nt$ and $W_{\nt}\neq 0$, the lemma follows.
\end{proof}

\begin{lemma}
\label{lemma:series_2}
Suppose a sequence $(A_0,\ldots,A_{\nt-1})$ satisfies that for any integer $0\leq s \leq \nt-1$, $|A_s-A_{\nt}|\leq \rho(s)$. Then, for any $0\leq s \leq \nt$, $\left|\frac{1}{t}\left(\sum_{s=0}^{t}A_s\right)-A_{\nt}\right|\leq \frac{1}{t}\sum_{s=0}^{\nt-1}\rho(s)$.
\end{lemma}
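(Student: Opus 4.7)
The plan is to prove this lemma by a direct triangle inequality argument, which is the standard technique for bounding the deviation of a Cesàro-type average from its (pointwise) limit when each term's deviation is controlled. The statement is essentially a quantitative version of the fact that if a sequence is eventually close to a value, its running averages are too.

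First I would rewrite $A_\nt$ to bring it inside the sum, using the identity that allows us to treat $A_\nt$ as a constant term: we write $\frac{1}{t}\sum_{s=0}^{t} A_s - A_\nt = \frac{1}{t}\sum_{s=0}^{t}(A_s - A_\nt) + \text{(correction)}$, where the correction accounts for the slight mismatch between the number of summands and the normalization factor $1/t$. Since this correction (if present) is of the same order as the main bound, it does not affect the statement modulo constants. The heart of the argument is then the triangle inequality: moving the absolute value inside the sum converts the bound on the average into an average of bounds.

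Next, I would apply the hypothesis $|A_s - A_\nt| \leq \rho(s)$ termwise to obtain
\begin{align*}
\left|\frac{1}{t}\sum_{s=0}^{t}(A_s - A_\nt)\right| \;\leq\; \frac{1}{t}\sum_{s=0}^{t}|A_s - A_\nt| \;\leq\; \frac{1}{t}\sum_{s=0}^{t}\rho(s).
\end{align*}
Finally, since each $\rho(s) \geq 0$ (being an upper bound on an absolute value), the truncated sum is at most the full sum, giving $\sum_{s=0}^{t}\rho(s) \leq \sum_{s=0}^{\nt-1}\rho(s)$, which yields the claimed bound.

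There is essentially no hard step here; the only thing to be careful about is the bookkeeping of indices and normalization constants, which I expect to be purely mechanical. This lemma is meant to serve as a technical ingredient downstream (likely combined with Lemma~\ref{lemma:series_1} in the convergence analysis of $\lambda_t$ in Lemma~\ref{lemma:lambda}), so the proof should be kept short and transparent rather than optimized.
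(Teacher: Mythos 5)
Your proposal is correct and follows essentially the same route as the paper's proof: pull $A_{\nt}$ inside the sum, apply the triangle inequality termwise, bound each term by $\rho(s)$, and extend the truncated sum to the full sum using $\rho(s)\geq 0$. If anything, you are slightly more careful than the paper about the $t+1$-versus-$t$ normalization mismatch, which the paper silently elides.
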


\begin{proof}[Proof of Lemma~\ref{lemma:series_2}]
Based on the assumption,
\begin{align*}
    \left|\frac{1}{t}\sum_{s=0}^{t}A_s - A_{\nt}\right| =   \frac{1}{t}\left|\sum_{s=0}^{t}\left(A_s-A_{\nt}\right)\right| \leq \frac{1}{t}\sum_{s=0}^{t}\left|A_s-A_t\right| \leq \frac{1}{t}\sum_{s=0}^{\nt-1}\rho(s).
\end{align*}
\end{proof}

\section{Competitive Analysis}\label{appendix: Competitive Analysis}

Throughout, for notational convenience, we write
\begin{align*}
    W(t)\coloneqq \sum_{s=0}^{t}\eta(\widehat{w};s,t)^\top H {\eta}(\widehat{w};s,t), \quad \text{and} \  \
    V(t)\coloneqq \sum_{s=0}^{t} \eta(w;s,t)^\top H \eta(\widehat{w};s,t)
\end{align*}
where
\begin{align*}
    \eta(w;s,t)\coloneqq \sum_{\tau=s}^{t}\left(F^\top\right)^{\tau-s} P w_\tau, \quad \text{and} \  \  \eta(\widehat{w};s,t)\coloneqq \sum_{\tau=s}^{t}\left(F^\top\right)^{\tau-s} P \widehat{w}_\tau.
\end{align*}

We first prove the following theorem.

\begin{theorem}
With a fixed trust parameter $\lambda>0$, 
the $\lambda$-confident control in Algorithm~\ref{alg:lambda-confident} has a worst-case competitive ratio of at most 
\begin{align*}
\mathsf{CR}(\varepsilon)\leq 1+ 2\|H\|\min\left\{\left(\frac{\lambda^2}{\mathsf{OPT}}\varepsilon+\frac{(1-\lambda)^2}{C}\right),\left(\frac{1}{C}+\frac{\lambda^2 }{\mathsf{OPT}}\overline{W}\right)\right\}
\end{align*} where $H\coloneqq B(R+B^\top P B)^{-1} B^\top$,
$\mathsf{OPT}$ denotes the optimal cost, $C>0$ is a constant that depends on $A,B,Q,R$ and
\begin{align}
\nonumber
\varepsilon\left(F,P,e_0,\ldots,e_{\nt-1}\right) & \coloneqq  \sum_{t=0}^{\nt-1}\left\|\sum_{\tau=t}^{\nt-1}\left(F^\top\right)^{\tau-t}P\left(w_{\tau}-\widehat{w}_\tau\right)\right\|^2,\\
\nonumber
\overline{W}\left(F,P,\widehat{w}_0,\ldots,\widehat{w}_{\nt-1}\right) & \coloneqq \sum_{t=0}^{\nt-1}\left\|\sum_{\tau=t}^{\nt-1}\left(F^\top\right)^{\tau-t}P \widehat{w}_\tau\right\|^2.
\end{align}
\end{theorem}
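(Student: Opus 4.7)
The plan is to recast the $\lambda$-confident update into the form required by Lemma~\ref{lemma:alg_opt} and then bound the resulting residual $\sum_t\psi_t^\top H\psi_t$ in two different ways, so that the $\min$ in the statement emerges as the better of the two bounds.

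Casting into Lemma~\ref{lemma:alg_opt} is immediate: Algorithm~\ref{alg:lambda-confident} applies $u_t=-(R+B^\top PB)^{-1}B^\top(PA x_t+\lambda\,\eta(\widehat w;t,\nt-1))$, which matches the template of Lemma~\ref{lemma:alg_opt} exactly when I set
$$\psi_t \;\coloneqq\; \eta(w;t,\nt-1)-\lambda\,\eta(\widehat w;t,\nt-1) \;=\; \sum_{\tau=t}^{\nt-1}(F^\top)^{\tau-t}P(w_\tau-\lambda\widehat w_\tau).$$
Consequently $\mathsf{ALG}-\mathsf{OPT}=\sum_{t=0}^{\nt-1}\psi_t^\top H\psi_t$, which I then control by $\|H\|\sum_t\|\psi_t\|^2$, reducing the theorem to bounding $\sum_t\|\psi_t\|^2$.

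Next I split the integrand $w_\tau-\lambda\widehat w_\tau$ in two ways. The convex decomposition $w_\tau-\lambda\widehat w_\tau=\lambda(w_\tau-\widehat w_\tau)+(1-\lambda)w_\tau$, together with $\|a+b\|^2\leq 2\|a\|^2+2\|b\|^2$, gives $\|\psi_t\|^2\leq 2\lambda^2\|\eta(w-\widehat w;t,\nt-1)\|^2+2(1-\lambda)^2\|\eta(w;t,\nt-1)\|^2$; summing over $t$ produces $2\lambda^2\varepsilon+2(1-\lambda)^2\sum_t\|\eta(w;t,\nt-1)\|^2$ by the definition of $\varepsilon$. Applying $\|a-\lambda b\|^2\leq 2\|a\|^2+2\lambda^2\|b\|^2$ directly to $\eta(w;t,\nt-1)-\lambda\eta(\widehat w;t,\nt-1)$ instead yields $\|\psi_t\|^2\leq 2\|\eta(w;t,\nt-1)\|^2+2\lambda^2\|\eta(\widehat w;t,\nt-1)\|^2$, whose sum is $2\sum_t\|\eta(w;t,\nt-1)\|^2+2\lambda^2\overline W$ by the definition of $\overline W$.

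Finally I invoke a standard lower bound from the LQR competitive-analysis literature (used, e.g., in~\cite{yu2020competitive}), which under $Q,R\succ 0$ and Gelfand's bound $\|F^t\|\leq C\rho^t$ gives a constant $C>0$ depending only on $A,B,Q,R$ such that $\mathsf{OPT}\geq C\sum_{t=0}^{\nt-1}\|\eta(w;t,\nt-1)\|^2$. Substituting this into each of the two summed estimates, dividing $\mathsf{ALG}\leq\mathsf{OPT}+\|H\|\sum_t\|\psi_t\|^2$ through by $\mathsf{OPT}$, and taking the pointwise minimum of the two resulting bounds produces exactly the expression in the statement. I expect the only non-routine ingredient to be this lower bound on $\mathsf{OPT}$: everything else is elementary inequality-chasing, but it is the step that converts raw perturbation energy into a dimensionless constant and is what makes the consistency limit $\lambda\to 1$ in the first branch collapse cleanly to the $\lambda^2\varepsilon/\mathsf{OPT}$ term.
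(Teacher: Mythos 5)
Your proposal is correct and follows essentially the same route as the paper's proof: the same invocation of Lemma~\ref{lemma:alg_opt} with $\psi_t=\sum_{\tau=t}^{\nt-1}(F^\top)^{\tau-t}P(w_\tau-\lambda\widehat w_\tau)$, the same two decompositions (splitting off $(1-\lambda)w_\tau$ versus $\lambda\widehat w_\tau$) followed by $\|a+b\|^2\leq 2\|a\|^2+2\|b\|^2$, and the same lower bound $\mathsf{OPT}\geq C\sum_t\|\eta(w;t,\nt-1)\|^2$ to convert the perturbation-energy term into the constant $1/C$. The only cosmetic difference is that the paper proves this last lower bound from scratch (via eigenvalue bounds on $Q,R,P$ and Gelfand's formula, yielding $C=D_0(1-\rho)^2/(D_1^2\|P\|^2)$) rather than citing it, and you correctly flag it as the one non-routine ingredient.
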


\subsection{Proof of Theorem~\ref{thm:upper_bound}}
\label{appendix: proof_of_thm:upper_bound}

Denote by $\mathsf{ALG}$ the cost induced by taking actions $(u_0,\ldots,u_{\nt-1})$ in~Algorithm~\ref{alg:lambda-confident} and $\mathsf{OPT}$ the optimal total cost. Note that we assume $\mathsf{OPT}>0$. Lemma~\ref{lemma:alg_opt} implies that
\begin{align}
\label{alg2:4}
\mathsf{ALG} - \mathsf{OPT} = \sum_{t=0}^{\nt-1} \left(\sum_{\tau=t}^{\nt-1}\left(F^\top\right)^{\tau-t}P\left(w_t-\lambda\widehat{w}_\tau \right)\right)^\top H \left(\sum_{\tau=t}^{\nt-1}\left(F^\top\right)^{\tau-t}P\left(w_t-\lambda\widehat{w}_\tau \right)\right).
\end{align}
Therefore, with a sequence of actions $(u_1,\ldots,u_{\nt})$ generated by the $\lambda$-confident control scheme,  \eqref{alg2:4} leads to
\begin{align}
\nonumber
\mathsf{ALG} - \mathsf{OPT}
 \leq & \|H\|\sum_{t=0}^{\nt-1}\left\|\sum_{\tau=t}^{\nt-1}\left(F^\top\right)^{\tau-t}Pw_\tau-\lambda\sum_{\tau=t}^{\nt-1}\left(F^\top\right)^{\tau-t}P\widehat{w}_\tau\right\|^2\\
 \nonumber
 =& \|H\|\sum_{t=0}^{\nt-1}\left\|\sum_{\tau=t}^{\nt-1}\left(F^\top\right)^{\tau-t}Pw_\tau-\lambda\sum_{\tau=t}^{\nt-1}\left(F^\top\right)^{\tau-t}P\left({w}_\tau+e_\tau\right)\right\|^2\\
  \nonumber
 =& \|H\|\sum_{t=0}^{\nt-1}\left\|(1-\lambda)\sum_{\tau=t}^{\nt-1}\left(F^\top\right)^{\tau-t}Pw_\tau-\lambda\sum_{\tau=t}^{\nt-1}\left(F^\top\right)^{\tau-t}Pe_\tau\right\|^2\\
 \nonumber
 \leq & 2\|H\|\Bigg((1-\lambda)^2\sum_{t=0}^{\nt-1}\left\|\sum_{\tau=t}^{\nt-1}\left(F^\top\right)^{\tau-t}Pw_\tau\right\|^2+ \lambda^2\sum_{t=0}^{\nt-1}\left\|\sum_{\tau=t}^{\nt-1}\left(F^\top\right)^{\tau-t}P e_\tau\right\|^2\Bigg)
 \end{align}
 where $e_t \coloneqq \widehat{w}_t -w_T$ for all $t=0,\ldots,\nt-1$.
Moreover, denoting by $x^*_t$ and $u_t^*$ the offline optimal state and action at time $t$,  the optimal cost satisfies
\begin{align}
\nonumber
  \mathsf{OPT}= & \sum_{t=0}^{\nt-1}(x_t^*)^{\top} Q x^*_t + (u_t^*)^{\top} R u^*_t + (x^*_{\nt})^{\top}P x^*_{\nt}\\
  \label{eq:4.60}
  \geq &\sum_{t=0}^{\nt-1} \lambda_{\min}(Q)\left\|x^*_t\right\|^2 + \lambda_{\min}(R)\|u^*_t\|^2 + \lambda_{\min}(P)\|x^*_{\nt}\|^2\\
  \nonumber
  \geq & 2D_0 \sum_{t=0}^{\nt-1} \left(\|Ax^*_t\|^2+\|B u^*_t\|^2\right)+\frac{1}{2}\sum_{t=0}^{\nt-1}\lambda_{\min}(Q)\|x^*_{t}\|^2+\lambda_{\min}(P)\|x^*_{\nt}\|^2\\
  \nonumber
    \geq & D_0 \sum_{t=0}^{\nt-1} \left\|Ax^*_t+Bu^*_t\right\|^2+\frac{1}{2}\sum_{t=0}^{\nt-1}\lambda_{\min}(Q)\|x^*_{t}\|^2+\lambda_{\min}(P)\|x^*_{\nt}\|^2\\
\nonumber
= & D_0\sum_{t=0}^{\nt-1}\left\|x^*_{t+1}-w_t\right\|^2+\frac{1}{2}\sum_{t=0}^{\nt-1}\lambda_{\min}(Q)\|x^*_{t}\|^2+ \lambda_{\min}(P)\|x^*_{\nt}\|^2\\
\label{eq:4.80}
\geq & \frac{D_0}{2}\sum_{t=0}^{\nt-1}\left\|w_t\right\|^2  +\left(\frac{\lambda_{\min}(Q)}{2}-D_0\right)\sum_{t=0}^{T-1}\|x^*_t\|^2 + \left(\lambda_{\min}(P)-C\right)\|x^*_{\nt}\|^2
\end{align}
for some constant $0<D_0<\min\{\lambda_{\min}(P),\lambda_{\min}(Q)/2\}$ that depends on $Q,R$ and $K$
where in~\eqref{eq:4.60}, $\lambda_{\min}(Q)$, $\lambda_{\min}(R)$  and $\lambda_{\min}(P)$ are the smallest eigenvalues of positive definite matrices $Q,R$ and $P$, respectively.
Let
$\psi_t\coloneqq \sum_{\tau=0}^{\nt-t-1}\left(F^\top\right)^\tau P w_{t+\tau}$. Note that $F=A-BK$ and we define $\rho\coloneqq \frac{1+\rho(F)}{2}<1$ where  $\rho(F)$ denotes the spectral radius of $F$. From Gelfand’s formula, there exists a constant $D_1\geq 0$ such that $\|F^t\|\leq D_1\rho^{t}$ for all $t\geq 0$.
Therefore,
\begin{align}
\nonumber
 \sum_{t=0}^{\nt-1}\left\|\psi_t\right\|^2 =&  \sum_{t=0}^{\nt-1}\left\|\sum_{\tau=0}^{\nt-t-1}\left(F^{\top}\right)^{\tau}P w_{t+\tau}\right\|^2\\
 \nonumber
 \leq &  D_1^2 \|P\|^2\sum_{t=0}^{\nt-1}\left(\sum_{\tau=0}^{\nt-t-1}\rho^\tau\|w_{t+\tau}\|\right)^2\\
 \nonumber
 = &  D_1^2 \|P\|^2\sum_{t=0}^{\nt-1}\sum_{\tau=0}^{\nt-t-1}\sum_{\ell=0}^{\nt-t-1}\rho^{\tau}\rho^{\ell}\|w_{t+\tau}\|\|w_{t+\ell}\|\\
 \label{eq:4.70}
 \leq & \frac{D_1^2}{2} \|P\|^2\sum_{t=0}^{\nt-1}\sum_{\tau=0}^{\nt-t-1}\sum_{\ell=0}^{\nt-t-1}\rho^{\tau}\rho^{\ell}\left(\|w_{t+\tau}\|^2+\|w_{t+\ell}\|^2\right).
\end{align}
Continuing from~\eqref{eq:4.70},
\begin{align}
\nonumber
 \sum_{t=0}^{\nt-1}\left\|\psi_t\right\|^2\leq &\frac{D_1^2}{2} \|P\|^2\left(\sum_{\ell=0}^{\nt-t-1}\rho^{\ell}\right)\sum_{t=0}^{\nt-1}\sum_{\tau=0}^{\nt-t-1}\rho^{\tau}\|w_{t+\tau}\|^2 \\
 \quad &+ \frac{D_1^2}{2} \|P\|^2\left(\sum_{\tau=0}^{\nt-t-1}\rho^{\tau}\right)\sum_{t=0}^{\nt-1}\sum_{\ell=0}^{\nt-t-1}\rho^{\ell}\|w_{t+\ell}\|^2\\
 \nonumber
 \leq & \frac{D_1^2}{1-\rho} \|P\|^2\sum_{t=0}^{\nt-1}\sum_{\tau=0}^{\nt-t-1}\rho^{\tau}\|w_{t+\tau}\|^2\\
 \nonumber
 \leq  & \frac{D_1^2}{1-\rho} \|P\|^2\sum_{t=0}^{\nt-1}\sum_{\tau=0}^{\nt-1}\rho^{\tau}\|w_{(t+\tau)\mod \nt}\|^2\\
\nonumber
 = & \frac{D_1^2}{1-\rho} \|P\|^2\left(\sum_{\tau=0}^{\nt-1}\rho^{\tau}\right)\left(\sum_{t=0}^{\nt-1}\|w_{t}\|^2\right)\\
  \label{eq:4.90}
 \leq & \frac{D_1^2}{(1-\rho)^2} \|P\|^2 \sum_{t=0}^{\nt-1}\|w_t\|^2.
\end{align}
Putting~\eqref{eq:4.90} into~\eqref{eq:4.80}, we obtain
\begin{align*}
  \mathsf{OPT}\geq & \frac{D_0 (1-\rho)^2 }{D_1^2\|P\|^2}  \sum_{t=0}^{\nt-1}\|\psi_t\|^2,
  \end{align*}
  which implies that
\begin{align*}
\frac{\mathsf{ALG} - \mathsf{OPT}}{\mathsf{OPT}}&\leq 2\|H\|\left(\frac{\lambda^2}{\mathsf{OPT}}\varepsilon+\frac{(1-\lambda)^2}{C}\right)
\end{align*}
where $C\coloneqq \frac{D_0 (1-\rho)^2 }{D_1^2\|P\|^2}$ and
\begin{align*}
    \varepsilon\coloneqq & \sum_{t=0}^{\nt-1}\left\|\sum_{\tau=t}^{\nt-1}\left(F^\top\right)^{\tau-t}P\left(w_{\tau}-\widehat{w}_\tau\right)\right\|^2.
\end{align*}

To obtain the second bound, noting that
\begin{align}
 \nonumber
\mathsf{ALG} - \mathsf{OPT}
 \leq & \|H\|\sum_{t=0}^{\nt-1}\left\|\sum_{\tau=t}^{\nt-1}\left(F^\top\right)^{\tau-t}Pw_\tau-\lambda\sum_{\tau=t}^{\nt-1}\left(F^\top\right)^{\tau-t}P\widehat{w}_\tau\right\|^2\\
\nonumber
 \leq & 2\|H\|\sum_{t=0}^{\nt-1}\Bigg(\left\|\sum_{\tau=t}^{\nt-1}\left(F^\top\right)^{\tau-t}P w_{\tau}\right\|^2 + \lambda^2 \left\|\sum_{\tau=t}^{\nt-1} \left(F^\top\right)^{\tau-t}P\widehat{w}_\tau\right\|^2\Bigg).
\end{align}
Noting that $W\coloneqq \sum_{t=0}^{\nt-1}\left\|\sum_{\tau=t}^{\nt-1}\left(F^\top\right)^{\tau-t}P \widehat{w}_\tau\right\|^2$,
therefore,
\begin{align*}
\frac{\mathsf{ALG} - \mathsf{OPT}}{\mathsf{OPT}}&\leq 2\|H\|\left(\frac{1}{C}+\frac{\lambda^2}{\mathsf{OPT}}W\right)
\end{align*}
for some constant $C>0$ that depends on $A,B,Q$ and $R$.

\section{Regret Analysis of Self-tuning Control}
Throughout, for notational convenience, we write
\begin{align*}
    W(t)\coloneqq \sum_{s=0}^{t}\eta(\widehat{w};s,t)^\top H {\eta}(\widehat{w};s,t), \quad \text{and} \  \
    V(t)\coloneqq \sum_{s=0}^{t} \eta(w;s,t)^\top H \eta(\widehat{w};s,t)
\end{align*}
where
\begin{align*}
    \eta(w;s,t)\coloneqq \sum_{\tau=s}^{t}\left(F^\top\right)^{\tau-s} P w_\tau, \quad \text{and} \  \  \eta(\widehat{w};s,t)\coloneqq \sum_{\tau=s}^{t}\left(F^\top\right)^{\tau-s} P \widehat{w}_\tau.
\end{align*}

\subsection{Proof of Lemma~\ref{lemma:lambda}}
\label{appendix: proof_of_lemma:lambda}

In this section, we show the proof of Lemma~\ref{lemma:regret} and Lemma~\ref{lemma:lambda}. 
We begin with rewriting $ \lambda_t - \lambda_{\nt}$ as below.
\begin{align}
    \label{eq:4.6}
    \lambda_t - \lambda_{\nt} = \frac{V(t-1)}{ W(t-1)} - \frac{V(\nt-1)}{ W(\nt-1)} = \frac{\frac{V(t-1)}{t-1}}{\frac{W(t-1)}{(t-1)}} -\frac{\frac{V(\nt-1)}{\nt-1}}{\frac{W(\nt-1)}{\nt-1}}.
\end{align}
Applying Lemma~\ref{lemma:series_1}, it suffices to prove that for any $1\leq t\leq \nt$, $\left|\frac{1}{\nt}V(\nt)-\frac{1}{t}V(t)\right|\leq \frac{C_1}{t}$ and  $\left|\frac{1}{\nt}W(\nt)-\frac{1}{t}W(t)\right|\leq  \frac{C_2}{t}$ for some constants $C_1>0$ and $C_2>0$. In the sequel, we show the bound on $\left|\frac{1}{\nt}V(\nt)-\frac{1}{t}V(t)\right|$ and the bound on $\left|\frac{1}{\nt}W(\nt)-\frac{1}{t}W(t)\right|$ follows using the same argument.
Continuing from~\eqref{eq:4.6},
\begin{align}
\nonumber
  \left|\frac{1}{\nt}V(\nt)-\frac{1}{t}V(t)\right| 
  & \leq \underbrace{\left|\frac{1}{\nt}\sum_{s=0}^{\nt} \eta(w;s,\nt)^\top H \eta(\widehat{w};s,\nt) -\frac{1}{t}\sum_{s=0}^{t} \eta(w;s,\nt)^\top H \eta(\widehat{w};s,\nt)\right|}_{=:(a)}\\
 \label{eq:centralbound}
  & \quad + \underbrace{\left|\frac{1}{t}\sum_{s=0}^{t} \eta(w;s,t)^\top H \eta(\widehat{w};s,t) -\frac{1}{t}\sum_{s=0}^{t} \eta(w;s,\nt)^\top H \eta(\widehat{w};s,\nt)\right|}_{=:(b)}.
\end{align}

In the following, we deal with the terms (a) and (b) separately.

\subsubsection{Upper bound on (a)}

To bound the term (a) in~\eqref{eq:centralbound}, we notice that (a) can be regarded as a difference between two algebraic means. 
Rewriting the first mean in (a), we get
\begin{align*}
\sum_{s=0}^{\nt} \eta(w;s,\nt)^\top H \eta(\widehat{w};s,\nt)
& = \sum_{s=0}^{\nt} \left(\sum_{\tau=s}^{\nt}\left(F^\top\right)^{\tau-s} P w_\tau\right)^\top H \left(\sum_{\tau=s}^{\nt}\left(F^\top\right)^{\tau-s} P \widehat{w}_\tau\right)\\
& = \sum_{s=0}^{\nt} \left(\sum_{\tau=0}^{\nt-s}\left(F^\top\right)^{\tau} P w_{\tau+s}\right)^\top H \left(\sum_{\tau=0}^{\nt-s}\left(F^\top\right)^{\tau} P \widehat{w}_{\tau+s}\right)\\
& = \sum_{s=0}^{\nt} \overline{\eta}(w;s,\nt)^\top H   \overline{\eta}(\widehat{w};s,\nt),
\end{align*}
where for notational convenience, for $s\leq \nt$ we have defined two series 
\begin{align*}
  \overline{\eta}(\widehat{w};s,\nt)\coloneqq \sum_{\tau=0}^{s}\left(F^\top\right)^{\tau} P \widehat{w}_{\tau+\nt-s}, \quad \text{and} \ \ 
\overline{\eta}(w;s,\nt)\coloneqq \sum_{\tau=0}^{s}\left(F^\top\right)^{\tau} P w_{\tau+\nt-s}.
\end{align*}
We state a lemma below, which states that the sequence $\left(  \overline{\eta}(\widehat{w};0,\nt),\ldots,  \overline{\eta}(\widehat{w};\nt,\nt)\right)$ satisfies the assumption in Lemma~\ref{lemma:series_2}.

\begin{lemma}
\label{lemma:bound_eta}
Given an integer $s$ with $0\leq s\leq\nt$, we have
\begin{align*}
&\left| \overline{\eta}(w;\nt,\nt)^\top H \overline{\eta}(\widehat{w};\nt,\nt) -  \overline{\eta}(w;s,\nt)^\top H \overline{\eta}(\widehat{w};s,\nt)\right| \\ \leq  &2\|H\|\left(\frac{C \|P\|}{1-\rho}\right)^2\left(2\rho^{s+1}\overline{w}\widehat{w} +\max_{\tau}\left\|\widehat{w}_{\tau} - \widehat{w}_{\tau+\nt-s}\right\| +\max_{\tau}\left\|w_{\tau} - w_{\tau+\nt-s}\right\|\right).
\end{align*}
\end{lemma}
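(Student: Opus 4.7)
The plan is to bound the difference of quadratic forms by rewriting it as a sum of two bilinear pieces that each compare the same $\overline\eta$ at argument $\nt$ versus at $s$. Concretely, I would invoke the identity
\begin{align*}
&\overline\eta(w;\nt,\nt)^\top H \overline\eta(\widehat w;\nt,\nt) - \overline\eta(w;s,\nt)^\top H \overline\eta(\widehat w;s,\nt) \\
&\qquad = \bigl[\overline\eta(w;\nt,\nt) - \overline\eta(w;s,\nt)\bigr]^\top H\, \overline\eta(\widehat w;\nt,\nt) + \overline\eta(w;s,\nt)^\top H \bigl[\overline\eta(\widehat w;\nt,\nt) - \overline\eta(\widehat w;s,\nt)\bigr],
\end{align*}
and then apply $|x^\top H y| \leq \|H\|\cdot\|x\|\cdot\|y\|$ so that the problem reduces to bounding the Euclidean norms of two "gross" $\overline\eta$ vectors and two $\overline\eta$-differences.

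For the gross norms, expanding the definition of $\overline\eta$ together with the Gelfand-type estimate $\|F^\tau\|\leq C\rho^\tau$ and the uniform bounds $\|w_\tau\|\leq \overline w$, $\|\widehat w_\tau\|\leq \widehat w$ gives geometric sums yielding $\|\overline\eta(w;s,\nt)\|\leq \tfrac{C\|P\|\overline w}{1-\rho}$ and $\|\overline\eta(\widehat w;\nt,\nt)\|\leq \tfrac{C\|P\|\widehat w}{1-\rho}$; these are precisely what produce the outer factor $\bigl(\tfrac{C\|P\|}{1-\rho}\bigr)^2$ in the claimed bound. The core step is the difference norms. Writing
\begin{align*}
\overline\eta(w;\nt,\nt) - \overline\eta(w;s,\nt) = \sum_{\tau=s+1}^{\nt}(F^\top)^{\tau} P w_\tau + \sum_{\tau=0}^{s}(F^\top)^{\tau} P\bigl(w_\tau - w_{\tau+\nt-s}\bigr),
\end{align*}
the first "tail" sum is controlled by $\tfrac{C\rho^{s+1}\|P\|\overline w}{1-\rho}$ because every surviving index starts at $\tau = s+1$, while the second "shift-aligned" sum is controlled by $\tfrac{C\|P\|}{1-\rho}\max_\tau \|w_\tau - w_{\tau+\nt-s}\|$ by pulling the max out of the geometric sum. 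The analogous decomposition for the $\widehat w$ difference is identical with $\overline w$ replaced by $\widehat w$.

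Combining the two bilinear pieces via the triangle inequality, the tail contributions pair with the opposite gross norms to produce two cross terms of size $\|H\|\bigl(\tfrac{C\|P\|}{1-\rho}\bigr)^{2}\rho^{s+1}\overline w\widehat w$, and the shift-aligned contributions produce the two $\max_\tau$ deviation terms, giving exactly the claimed estimate after absorbing $\overline w,\widehat w$ into the outer constant as the statement does. I do not anticipate a genuinely hard analytic step; the main obstacle is careful bookkeeping, in particular tracking which index range generates the $\rho^{s+1}$ tail (the telescoping split must use $\tau\in\{s+1,\ldots,\nt\}$ after re-indexing $\overline\eta(w;s,\nt) = \sum_{\tau=0}^{s}(F^\top)^\tau P w_{\tau+\nt-s}$) and verifying that the shift $\tau\mapsto\tau+\nt-s$ is used consistently, so that the $\max_\tau\|w_\tau - w_{\tau+\nt-s}\|$ terms appear with the same pairing as in the final bound.
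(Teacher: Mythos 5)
Your proposal is correct and follows essentially the same route as the paper: the same decomposition of $\overline{\eta}(\cdot;\nt,\nt)-\overline{\eta}(\cdot;s,\nt)$ into a geometric ``tail'' sum over $\tau\in\{s+1,\ldots,\nt\}$ and a ``shift-aligned'' sum of differences $w_\tau-w_{\tau+\nt-s}$, bounded via $\|F^\tau\|\leq C\rho^\tau$ and the uniform bounds on $\overline{w},\widehat{w}$. The only cosmetic difference is that you split the bilinear form into two cross terms while the paper uses the three-term expansion $\overline{\eta}(w;s,\nt)^\top H\xi_2+\xi_1^\top H\overline{\eta}(\widehat{w};s,\nt)+\xi_1^\top H\xi_2$, and your honest remark that the deviation terms arrive multiplied by $\overline{w}$ or $\widehat{w}$ and must be absorbed into the constant applies equally to the paper's own derivation.
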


\begin{proof}[Proof of Lemma~\ref{lemma:bound_eta}]
With $s\leq \nt$, according to the definitions of $\overline{\eta}(w;s,\nt)$, $\overline{\eta}(w;\nt,\nt)$, $\overline{\eta}(\widehat{w};s,\nt)$ and $\overline{\eta}(\widehat{w};\nt,\nt)$, we obtain
\begin{align*}
 \overline{\eta}(w;\nt,\nt) &= \overline{\eta}(w;s,\nt) +  \sum_{\tau=s+1}^{\nt}\left(F^\top\right)^{\tau} P w_{\tau} + \sum_{\tau=0}^{s}\left(F^\top\right)^{\tau} P\left(w_{\tau} - w_{\tau+\nt-s}\right), \\
 \overline{\eta}(\widehat{w};\nt,\nt) &= \overline{\eta}(\widehat{w};s,\nt) + \sum_{\tau=s+1}^{\nt}\left(F^\top\right)^{\tau} P \widehat{w}_{\tau} + \sum_{\tau=0}^{s}\left(F^\top\right)^{\tau} P\left(\widehat{w}_{\tau} - \widehat{w}_{\tau+\nt-s}\right),
\end{align*}
implying that
\begin{align}
\nonumber
  & \overline{\eta}(w;\nt,\nt)^\top H \overline{\eta}(\widehat{w};\nt,\nt) -  \overline{\eta}(w;s,\nt)^\top H \overline{\eta}(\widehat{w};s,\nt)\\
 \label{eq:4.12}
  = &
 \overline{\eta}(w;s,\nt)^\top  H \xi_2+ \xi_1^{\top} H \overline{\eta}(\widehat{w};s,\nt) + \xi_1^{\top} H \xi_2
\end{align}
where
\begin{align*}
    \xi_1\coloneqq  & \sum_{\tau=s+1}^{\nt}\left(F^\top\right)^{\tau} P w_{\tau} + \sum_{\tau=0}^{s}\left(F^\top\right)^{\tau} P\left(w_{\tau} - w_{\tau+\nt-s}\right),\\
    \xi_2\coloneqq  & \sum_{\tau=s+1}^{\nt}\left(F^\top\right)^{\tau} P \widehat{w}_{\tau} + \sum_{\tau=0}^{s}\left(F^\top\right)^{\tau} P\left(\widehat{w}_{\tau} - \widehat{w}_{\tau+\nt-s}\right).
\end{align*}
By our model assumption, $\|w_t\|\leq \omega$ and  $\|\widehat{w}_t\|\leq \overline{w}$ for all $t=0,\ldots,\nt-1$. Then, there exists some $e>0$ such that the prediction error $e_t=\widehat{w}_t-w_t$ satisfies $e_t\leq e$ for all $t=0,\ldots,\nt-1$.
Note that $F=A-BK$ and we define $\rho\coloneqq \frac{1+\rho(F)}{2}<1$ where  $\rho(F)$ denotes the spectral radius of $F$. From Gelfand’s formula, there exists a constant $C\geq 0$ such that $\|F^t\|\leq C\rho^{t}$ for all $t\geq 0$.
The following holds for $ \overline{\eta}(\widehat{w};s,\nt)$ and $ \overline{\eta}(w;s,\nt)$:
\begin{align}
\label{eq:4.13}
 \left\|\overline{\eta}(\widehat{w};s,\nt)\right\| \leq & \sum_{\tau=0}^{s}\left\|F^\tau\right\| \|P\|\overline{w} \leq C \frac{1-\rho^{s+1}}{1-\rho}\|P\|\overline{w}\leq \frac{C}{1-\rho}\|P\|\overline{w},\\
 \label{eq:4.14}
 \left\|\overline{\eta}(w;s,\nt)\right\| \leq & \sum_{\tau=0}^{s}\left\|F^{\tau}\right\| \|P\|\overline{w} = C \frac{1-\rho^{s+1}}{1-\rho}\|P\|\overline{w} \leq \frac{C}{1-\rho}\|P\|\overline{w}.
\end{align}
Moreover,
\begin{align}
\label{eq:4.15}
\left\|\xi_1\right\|\leq & \sum_{\tau=s+1}^{\nt}\left\|F^\tau \right\|  \|P\|\overline{w} + \sum_{\tau=0}^{s}\left\|F^\tau\right\|  \|P\|\left\|w_{\tau} - w_{\tau+\nt-s}\right\|\\ \leq & \frac{C \|P\|}{1-\rho}\left(\overline{w}\rho^{s+1} + \max_{\tau}\left\|w_{\tau} - w_{\tau+\nt-s}\right\|\right)\\
\left\|\xi_2\right\|\leq & \sum_{\tau=s+1}^{\nt}\left\|F^\tau\right\| \|P\|\widehat{w} + \sum_{\tau=0}^{s}\left\|F^\tau\right\|  \|P\|\left\|\widehat{w}_{\tau} - \widehat{w}_{\tau+\nt-s}\right\|\\
\label{eq:4.16}
\leq & \frac{C\|P\|}{1-\rho}\left(\widehat{w}\rho^{s+1} + \max_{\tau}\left\|\widehat{w}_{\tau} - \widehat{w}_{\tau+\nt-s}\right\|\right).
\end{align}
Combining~\eqref{eq:4.13}-\eqref{eq:4.16} with~\eqref{eq:4.12},
\begin{align*}
&\left| \overline{\eta}(w;\nt,\nt)^\top H \overline{\eta}(\widehat{w};\nt,\nt) -  \overline{\eta}(w;s,\nt)^\top H \overline{\eta}(\widehat{w};s,\nt)\right| \\
\leq &2\|H\|\left(\frac{C \|P\|}{1-\rho}\right)^2\left(2\rho^{s+1}\overline{w}\widehat{w} +\max_{\tau}\left\|\widehat{w}_{\tau} - \widehat{w}_{\tau+\nt-s}\right\| +\max_{\tau}\left\|w_{\tau} - w_{\tau+\nt-s}\right\|\right).
\end{align*}
\end{proof}

Therefore, applying Lemma~\ref{lemma:series_2}, we conclude that
\begin{align}
\nonumber
    (a) \coloneqq  & \left|\frac{1}{\nt}\sum_{s=0}^{\nt} \eta(w;s,\nt)^\top H \eta(\widehat{w};s,\nt) -\frac{1}{t}\sum_{s=0}^{t} \eta(w;s,\nt)^\top H \eta(\widehat{w};s,\nt)\right|\\
   \label{eq:bound_a} 
    \leq &\frac{4}{t}\|H\|\rho\left(\frac{C\|P\|}{(1-\rho)^{3/2}}\right)^2\overline{w}\widehat{w} + \frac{2}{t}\|H\|\left(\frac{C\|P\|}{1-\rho}\right)^2 \left(\mu_{\mathsf{VAR}}(\widehat{\mathbf{w}})+\mu_{\mathsf{VAR}}(\mathbf{w})\right)
\end{align}
where $    \mu_{\mathsf{VAR}}(\mathbf{x})\coloneqq \sum_{s=0}^{\nt}\max_{\tau}\left\|x_{\tau} - x_{\tau+\nt-s}\right\|$ denotes the self-variation of a sequence $\mathbf{x}$.

\subsubsection{Upper bound on (b)}

Next, we provide a bound on (b) in~\eqref{eq:centralbound}.
For (b), we have
\begin{align}
\nonumber
(b) \coloneqq  &\frac{1}{t}\left|\sum_{s=0}^{t} \eta(w;s,t)^\top H \eta(\widehat{w};s,t) -\sum_{s=0}^{t} \eta(w;s,\nt)^\top H \eta(\widehat{w};s,\nt)\right|\\
\nonumber
\leq &\frac{1}{t}\left|\sum_{s=0}^{t} \eta(w;s,\nt)^\top H \eta(\widehat{w};s,t) -\sum_{s=0}^{t} \eta(w;s,\nt)^\top H \eta(\widehat{w};s,\nt)\right|\\
\label{eq:4.21}
+& \frac{1}{t}\left|\sum_{s=0}^{t} \eta(w;s,\nt)^\top H \eta(\widehat{w};s,t) -\sum_{s=0}^{t} \eta(w;s,t)^\top H \eta(\widehat{w};s,t)\right|.
\end{align}
Noting that
$
\eta(\widehat{w};s,\nt)-\eta(\widehat{w};s,t)  = \sum_{\tau=t+1}^{\nt}\left(F^\top\right)^{\tau-s}P\widehat{w}_{\tau} 
$,
we obtain
\begin{align}
\nonumber
&\left|\sum_{s=0}^{t} \eta(w;s,\nt)^\top H \eta(\widehat{w};s,t) -\sum_{s=0}^{t} \eta(w;s,\nt)^\top H \eta(\widehat{w};s,\nt)\right|\\
\nonumber
= &\left|\sum_{s=0}^{t} \eta(w;s,\nt)^\top H \left(\eta(\widehat{w};s,t)- \eta(\widehat{w};s,\nt)\right)\right|\\
\label{eq:4.50}
=&\left|\sum_{s=0}^{t} \eta(w;s,\nt)^\top H \left( \sum_{\tau=t+1}^{\nt}\left(F^\top\right)^{\tau-s}P\widehat{w}_{\tau}\right)\right|
\end{align}
and similarly,
\begin{align}
\nonumber
&\left|\sum_{s=0}^{t} \eta(w;s,\nt)^\top H \eta(\widehat{w};s,t) -\sum_{s=0}^{t} \eta(w;s,t)^\top H \eta(\widehat{w};s,t)\right|\\
\nonumber
=&\left|\sum_{s=0}^{t} \left(\eta(w;s,\nt)-\eta(w;s,t)\right)^\top H \eta(\widehat{w};s,t)\right|\\
\label{eq:4.20}
=&\left|\sum_{s=0}^{t} \left( \sum_{\tau=t+1}^{\nt}\left(F^\top\right)^{\tau-s}Pw_{\tau}\right)^\top H \eta(\widehat{w};s,t) \right|.
\end{align}
By our assumption, $\|w_t\|\leq \overline{w}$ and  $\|\widehat{w}_t\|\leq \widehat{w}$ for all $t=0,\ldots,\nt-1$. Therefore, for any $s\leq t$:
\begin{align}
   \label{eq:4.8}
   \left\| \sum_{\tau=t+1}^{\nt}\left(F^\top\right)^{\tau-s}P\widehat{w}_{\tau}\right\|
   \leq & \frac{C\rho^{t-s+1}\|P\|\widehat{w}}{1-\rho}
\end{align}
and
\begin{align}
  \label{eq:4.9}
  \left\|\eta(w;s,\nt)\right\| = & \left\|\sum_{\tau=s}^{\nt}\left(F^{\top}\right)^{\tau-s}Pw_\tau\right\|
  \leq \frac{{C \|P\|}\overline{w}}{1-\rho}.
\end{align}
 Plugging~\eqref{eq:4.8} and~\eqref{eq:4.9} into~\eqref{eq:4.50}, 
\begin{align}
\nonumber
&\left|\sum_{s=0}^{t} \eta(w;s,\nt)^\top H \eta(\widehat{w};s,t) -\sum_{s=0}^{t} \eta(w;s,\nt)^\top H \eta(\widehat{w};s,\nt)\right|\\
\nonumber
\leq &2C\|H\|\left(\frac{\|P\|}{1-\rho}\right)^2 \overline{w}\widehat{w}\sum_{s=0}^{t}\left\|F^{t-s+1}\right\|\\
\nonumber
\leq &2\|H\|\left(\frac{C\|P\|}{1-\rho}\right)^2 \frac{\rho\left(1-\rho^t\right)}{1-\rho}\overline{w}\widehat{w}\\
\label{eq:4.22}
\leq &2\|H\|\left(\frac{ C \|P\|}{(1-\rho)^{3/2}}\right)^2 {\rho}\overline{w}\widehat{w}.
\end{align}
Using the same argument, the following bound holds for~\eqref{eq:4.20}:
\begin{align}
\label{eq:4.23}
&\left|\sum_{s=0}^{t} \eta(w;s,\nt)^\top H \eta(\widehat{w};s,t) -\sum_{s=0}^{t} \eta(w;s,t)^\top H \eta(\widehat{w};s,t)\right|
\leq 2\|H\|\left(\frac{C\|P\|}{(1-\rho)^{3/2}}\right)^2 {\rho}\overline{w}\widehat{w}.
\end{align}
Combining~\eqref{eq:4.22} and~\eqref{eq:4.23} and using~\eqref{eq:4.21},
\begin{align}
\nonumber
    (b)\coloneqq  & \frac{1}{t}\left|\sum_{s=0}^{t} \eta(w;s,t)^\top H \eta(\widehat{w};s,t) -\sum_{s=0}^{t} \eta(w;s,\nt)^\top H \eta(\widehat{w};s,\nt)\right| \\
    \label{eq:bound_b}
    \leq & \frac{4}{t}\|H\|\left(\frac{C\|P\|}{(1-\rho)^{3/2}}\right)^2 {\rho}\overline{w} \widehat{w}.
\end{align}

Finally, together,~\eqref{eq:bound_a} and~\eqref{eq:bound_b} imply the following:
\begin{align}
\nonumber
      \left|\frac{1}{\nt}V(\nt)-\frac{1}{t}V(t)\right| \leq 
      \frac{8}{t} &\|H\|\left(\frac{C\|P\|}{(1-\rho)^{3/2}}\right)^2 {\rho}\overline{w} \widehat{w} \\
      \label{eq:2.25}
       + \frac{2}{t} &\|H\|\left(\frac{C\|P\|}{1-\rho}\right)^2 \left(\mu_{\mathsf{VAR}}(\mathbf{w})+\mu_{\mathsf{VAR}}(\mathbf{\widehat{w}})\right).
\end{align}
The same argument also guarantees that
\begin{align}
\nonumber
      \left|\frac{1}{\nt}W(\nt)-\frac{1}{t}W(t)\right| \leq 
      \frac{8}{t} &\|H\|\left(\frac{C\|P\|}{(1-\rho)^{3/2}}\right)^2 {\rho}\widehat{w}^2 \\
      \label{eq:2.26}
       + \frac{4}{t} &\|H\|\left(\frac{C\|P\|}{1-\rho}\right)^2 \mu_{\mathsf{VAR}}(\mathbf{\widehat{w}}).
\end{align}
The following lemma together with~\eqref{eq:2.25} and~\eqref{eq:2.26} justify the conditions needed to apply Lemma~\ref{lemma:series_1}. 
\begin{lemma}
\label{lemma:extra_conditions}
For any integer $1\leq t\leq \nt$,
\begin{align*}
    \frac{V(t)}{t}\leq & 2\|H\|\left(\frac{C\|P\|}{(1-\rho)^{3/2}}\right)^2\overline{w}\widehat{w},
\end{align*}
where $C>0$ is some constant satisfying $\|F^t\|\leq C\rho^{t}$ for all $t\geq 0$.
\end{lemma}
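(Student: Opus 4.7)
The plan is to reduce the bound on $V(t)/t$ to a uniform pointwise bound on each summand $\eta(w;s,t)^\top H \eta(\widehat{w};s,t)$, since then summing $t+1$ such bounds and dividing by $t$ will give a constant. The key observation is that the argument for this uniform bound already appeared inside the proof of Lemma~\ref{lemma:bound_eta}; I just need to extract and adapt it.

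First, by the submultiplicativity of the operator norm and Cauchy--Schwarz, I get
\begin{align*}
V(t) = \sum_{s=0}^{t}\eta(w;s,t)^\top H \eta(\widehat{w};s,t) \;\leq\; \|H\| \sum_{s=0}^{t} \|\eta(w;s,t)\|\,\|\eta(\widehat{w};s,t)\|.
\end{align*}
Next I bound each $\|\eta(w;s,t)\|$ exactly as in equations~\eqref{eq:4.13}--\eqref{eq:4.14}: applying the triangle inequality to the defining sum, then using $\|F^k\|\leq C\rho^k$ together with the uniform bound $\|w_\tau\|\leq \overline{w}$, yields a geometric series
\begin{align*}
\|\eta(w;s,t)\| \;\leq\; \sum_{\tau=s}^{t} C\rho^{\tau-s}\|P\|\overline{w} \;\leq\; \frac{C\|P\|\overline{w}}{1-\rho},
\end{align*}
and the analogous bound holds for $\|\eta(\widehat{w};s,t)\|$ with $\widehat{w}$ in place of $\overline{w}$. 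These bounds are uniform in $s$ and $t$.

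Substituting back gives $V(t)\leq (t+1)\|H\|\big(\tfrac{C\|P\|}{1-\rho}\big)^2 \overline{w}\widehat{w}$, so for any $t\geq 1$,
\begin{align*}
\frac{V(t)}{t} \;\leq\; \frac{t+1}{t}\,\|H\|\left(\frac{C\|P\|}{1-\rho}\right)^2\overline{w}\widehat{w} \;\leq\; 2\|H\|\left(\frac{C\|P\|}{1-\rho}\right)^2\overline{w}\widehat{w}.
\end{align*}
Finally, since $\rho\in(0,1)$ we have $1-\rho\leq 1$ and hence $\tfrac{1}{(1-\rho)^2}\leq \tfrac{1}{(1-\rho)^3}=\big(\tfrac{1}{(1-\rho)^{3/2}}\big)^2$, which upgrades the bound to the form stated in the lemma. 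There is no substantive obstacle; the only things to be careful about are (i) using $t\geq 1$ to turn $t+1$ into $2t$, and (ii) inflating the denominator from $(1-\rho)^2$ to $(1-\rho)^3$ so as to match the exponent $3/2$ that appears in the statement (this is presumably chosen for compatibility with the constants in~\eqref{eq:bound_a} and~\eqref{eq:bound_b} used upstream).
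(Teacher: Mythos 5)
Your proof is correct and follows essentially the same route as the paper: Cauchy--Schwarz on the bilinear form, a geometric-series bound $\|\eta(\cdot;s,t)\|\leq C\|P\|\overline{w}/(1-\rho)$ uniform in $s$, the factor $(t+1)/t\leq 2$ for $t\geq 1$, and the final inflation of $(1-\rho)^{-2}$ to $(1-\rho)^{-3}$ to match the stated constant. The paper carries the sharper factor $(1-\rho^{t-s})$ through an intermediate step before discarding it, but this makes no difference to the final bound.
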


\begin{proof}[Proof of Lemma~\ref{lemma:extra_conditions}]
We have
\begin{align*}
    \frac{V(t)}{t}=&\frac{1}{t}\sum_{s=0}^{t}\eta(w;s,t)^\top H {\eta}(\widehat{w};s,t)\\
    \leq & \frac{\|H\|}{t}\sum_{s=0}^{t}\left\|\sum_{\tau=0}^{t-1-s}\left(F^{\top}\right)^{\tau}Pw_{\tau+s}\right\|\left\|\sum_{\tau=0}^{t-1-s}\left(F^{\top}\right)^{\tau}P\widehat{w}_{\tau+s}\right\|\\
    \leq & \frac{\|H\|}{t}\left(\frac{C\|P\|}{1-\rho}\right)^2\sum_{s=0}^{t}\left(1-\rho^{t-s}\right) \overline{w}\widehat{w}\\
    = &\frac{\|H\|}{t}\left(\frac{C\|P\|}{1-\rho}\right)^2\left(t+\frac{1-\rho^{t+1}}{1-\rho}\right) \overline{w}\widehat{w}\\
    \leq &2\|H\|\left(\frac{C\|P\|}{(1-\rho)^{3/2}}\right)^2\overline{w}\widehat{w}.
\end{align*}
\end{proof}

First, based on our assumption, $\lambda_{t}=V(t)/W(t)=V_t/W_t\leq 1$. Moreover, $W(\nt)/\nt=\Omega(1)$.
Therefore, using~\eqref{eq:2.25}, ~\eqref{eq:2.26}, Lemma~\ref{lemma:series_1} and Lemma~\ref{lemma:extra_conditions},~\eqref{eq:4.6} implies that for any $1<t\leq\nt$,
\begin{align*}
\left|\lambda_t - \lambda_{\nt} \right|\leq    \frac{1}{t-1}\frac{\|H\|\left(\frac{C\|P\|}{1-\rho}\right)^2}{W(\nt)/\nt}\cdot&\Bigg(\frac{8\rho\widehat{w} \overline{w}}{1-\rho}+2\left(\mu_{\mathsf{VAR}}(\mathbf{w})+\mu_{\mathsf{VAR}}(\mathbf{\widehat{w}})\right)\\
+\frac{2\|H\|\left(\frac{C\|P\|}{(1-\rho)^{3/2}}\right)^2\overline{w}\widehat{w}}{W(\nt)/\nt}&\left(\frac{8\rho\widehat{w}^2}{1-\rho}+4\mu_{\mathsf{VAR}}(\mathbf{\widehat{w}})\right)\Bigg)\\
=O\left(\frac{\mu_{\mathsf{VAR}}(\mathbf{w})+\mu_{\mathsf{VAR}}(\mathbf{\widehat{w}})}{t}\right).
\end{align*}

\subsection{Proof of Lemma~\ref{lemma:regret}}
\label{appendix: proof_of_lemma:regret}
Using Lemma~\ref{lemma:bound_eta},
\begin{align*}
    |\lambda_t-\lambda_{\nt}|\leq \frac{C_1}{t}\left(\mu_{\mathsf{VAR}}(\mathbf{w})+\mu_{\mathsf{VAR}}(\mathbf{\widehat{w}})\right), \quad \text{ where } C_1>0 \text{ is } \text{some  constant}.
\end{align*}
Applying Lemma~\ref{lemma:regret_bound}, and noting that
\begin{align*}
\left\|\sum_{\tau=t}^{\nt-1}\left(F^\top\right)^{\tau-t}P\widehat{w}_\tau\right\|\leq C\frac{1-\rho^{\nt-t}}{1-\rho}\|P\|\widehat{w},
\end{align*}
\eqref{eq:4.7} implies 
\begin{align}
\nonumber
\mathsf{Regret}
 \leq & 2C_1\|H\|\sum_{t=1}^{\nt-1}\left\|\frac{\mu_{\mathsf{VAR}}(\mathbf{w})+\mu_{\mathsf{VAR}}({\mathbf{\widehat{w}}})}{t}\sum_{\tau=t}^{\nt-1}\left(F^\top\right)^{\tau-t}P\widehat{w}_\tau\right\| + C_0 \\
 \nonumber
= & 2C_1\|H\|\left(\mu_{\mathsf{VAR}}(\mathbf{w})+\mu_{\mathsf{VAR}}({\mathbf{\widehat{w}}})\right)\sum_{t=1}^{\nt-1}\frac{1}{t}\left\|\sum_{\tau=t}^{\nt-1}\left(F^\top\right)^{\tau-t}P\widehat{w}_\tau\right\| + C_0\\
\nonumber
\leq & 2C_1\|H\|\left(\mu_{\mathsf{VAR}}(\mathbf{w})+\mu_{\mathsf{VAR}}({\mathbf{\widehat{w}}})\right)\left(\frac{C\|P\|}{1-\rho}\widehat{w}\right)\sum_{t=1}^{\nt-1}\frac{1}{t} + C_0 \\
 \label{eq:4.45}
\leq & 2C_1\|H\|\left(\frac{C\|P\|}{1-\rho}\widehat{w}\right)\left(\mu_{\mathsf{VAR}}(\mathbf{w})+\mu_{\mathsf{VAR}}({\mathbf{\widehat{w}}})\right)(1+\log\nt) + C_0 
\end{align}
where 
\begin{align*}
 C_0\coloneqq  & \|H\|\left|\lambda_{\nt}+\lambda_0\right|\left|\lambda_{\nt}-\lambda_0\right|\left\|\sum_{\tau=0}^{\nt-1}\left(F^\top\right)^{\tau}P\widehat{w}_\tau\right\| \leq \|H\|\left|\lambda_{\nt}+\lambda_0\right|\left|\lambda_{\nt}-\lambda_0\right|\left(\frac{C\|P\|}{{1-\rho}} \widehat{w}\right).
\end{align*}
Moreover, for any $t=1,\ldots,\nt$, $|\lambda_t| \leq 1$,
whence,
\begin{align*}
C_0\leq 2\|H\|\left(\frac{C\|P\|}{{1-\rho}} \widehat{w}\right).
\end{align*}
Therefore, continuing from~\eqref{eq:4.45},
\begin{align*}
    \mathsf{Regret}\leq &\|H\|\left(\frac{C\|P\|}{{1-\rho}} \widehat{w}\right)\Big(2C_1\left(\mu_{\mathsf{VAR}}(\mathbf{w})+\mu_{\mathsf{VAR}}({\mathbf{\widehat{w}}})\right)(1+\log \nt)+2\Big)\\
    = &O\Big(\big(\mu_{\mathsf{VAR}}(\mathbf{w})+\mu_{\mathsf{VAR}}({\mathbf{\widehat{w}}})\big)\log\nt\Big).
\end{align*}









\section{Proof of Theorem~\ref{thm:threshold}}
\label{appendix:proof_of_threshold}
First, note that the total cost is given by
$
    J=\sum_{t=0}^{T-1}x_t^\top Qx_t+u_t^\top Ru_t + x_T^\top Px_T.
$
Since we can choose a threshold $\sigma>0$ arbitrarily small, the error must exceed a threshold $\sigma$. Without loss of generality, we suppose that the accumulated error $\delta$ exceeds the threshold $\sigma$ at time $s\geq 0$ and assume the predictions $\widehat{w}_t$, $0<t<s-1$ are accurate. 

Throughout, we define $J_1 \coloneqq  \sum_{t=1}^{s-1}x_t^\top Qx_t+u_t^\top Ru_t$ and $J_2 \coloneqq  \sum_{t=s}^{T-1}x_t^\top Qx_t+u_t^\top Ru_t$
and use diacritical letters $\widehat{J},\widehat{x}$ and $\widehat{u}$ to denote the corresponding cost, action and state of the threshold algorithm (Algorithm~\ref{alg:threshold}). 
We consider the best online algorithm (with no predictions available) that minimizes its corresponding competitive ratio and use diacritical letters $\widetilde{J},\widetilde{x}$ and $\widetilde{u}$ to denote the corresponding cost, action and state. The competitive ratio of the best online algorithm is denoted by $C_{\min}$.

\subsection{Upper Bound on $\widehat{J}_1$}\label{appendix:upperbound_J1}

We first provide an upper bound on $\widehat{J}_1$, the first portion of the total cost.
For $1\le t< s$, the threshold-based algorithm gives
\begin{align*}
    \widehat{u}_t&= -K \widehat{x}_t -(R+B^{\top}PB)^{-1}B^{\top}\left(\sum_{\tau=t}^{\nt-1}\left(F^\top\right)^{\tau-t} P\widehat{w}_{\tau}\right)\\
    &=-K \widehat{x}_t -(R+B^{\top}PB)^{-1}B^{\top}\left(\sum_{\tau=t}^{\nt-1}\left(F^\top\right)^{\tau-t} P w_{\tau} - \eta_t\right).
\end{align*}
Lemma $10$ in \cite{yu2020competitive} implies 
\begin{equation}
    \nonumber
    J_1 = \mathsf{ALG}(0:T)-\mathsf{ALG}(s:T)
\end{equation}
where 
\begin{align}
\nonumber
    \mathsf{ALG}(0:T)=&\sum_{t=0}^{T-1}\left(w_t^{\top} P w_t+2 w_t^{\top}\sum_{i=1}^{T-t-1}\left(F^{\top}\right)^i P w_{t+i}\right)\\
    \nonumber
    &-\sum_{t=0}^{T-1}\left(\sum_{i=0}^{T-t-1}\left(F^{\top}\right)^i P w_{t+i}\right)^{\top} H \left(\sum_{i=0}^{T-t-1}\left(F^{\top}\right)^i P w_{t+i}\right)\\
    \label{eq:5.3}
    &+\sum_{t=0}^{T-1}\eta_t^{\top} H \eta_t+x_0^{\top} P x_0+ 2x_0^{\top}\sum_{i=0}^{T-1}\left(F^{\top}\right)^{i+1}P w_i,
\end{align}
and
\begin{align}
\nonumber
   \mathsf{ALG}(s:T)
    \coloneqq&\sum_{t=0}^{T-s-1}\left(w_{t+s}^{\top} P w_{t+s}+2 w_{t+s}^{\top}\sum_{i=1}^{T-s-t-1}\left(F^{\top}\right)^i P w_{t+s+i}\right)\\
    \nonumber
    &-\sum_{t=0}^{T-s-1}\left(\sum_{i=0}^{T-s-t-1}\left(F^{\top}\right)^i P w_{t+s+i}\right)^{\top} H \left(\sum_{i=0}^{T-s-t-1}\left(F^{\top}\right)^i P w_{t+s+i}\right)\\
    \label{eqq:5.0}
    &+\sum_{t=0}^{T-s-1}\eta_{t+s}^{\top} H \eta_{t+s}+x_s^{\top} P x_s+ 2x_s^{\top}\sum_{i=0}^{T-s-1}\left(F^{\top}\right)^{i+1}P w_{i+s}.
\end{align}
Rewriting~\eqref{eqq:5.0},
\begin{align}
\nonumber
   \mathsf{ALG}(s:T)
    \coloneqq &
    \sum_{t=s}^{T-1}\left(w_t^{\top} P w_t+2 w_t^{\top}\sum_{i=1}^{T-t-1}\left(F^{\top}\right)^i P w_{t+i}\right)\\
    \nonumber
    &-\sum_{t=s}^{T-1}\left(\sum_{i=0}^{T-t-1}\left(F^{\top}\right)^i P w_{t+i}\right)^{\top} H \left(\sum_{i=0}^{T-t-1}\left(F^{\top}\right)^i P w_{t+i}\right)\\
    \label{eq:5.2}
    &+\sum_{t=s}^{T-1}\eta_t^{\top} H \eta_t+x_s^{\top} P x_s+ 2x_s^{\top}\sum_{i=s}^{T-1}\left(F^{\top}\right)^{i+1-s}P w_i.
\end{align}

Therefore, combining~\eqref{eq:5.3} and~\eqref{eq:5.2},
\begin{align*}
    J_1=&\sum_{t=0}^{s-1}\left(w_t^{\top} P w_t+2 w_t^{\top}\sum_{i=1}^{T-t-1}\left(F^{\top}\right)^i P w_{t+i}\right)\\
    &-\sum_{t=0}^{s-1}\left(\sum_{i=0}^{T-t-1}\left(F^{\top}\right)^i P w_{t+i}\right)^{\top} H \left(\sum_{i=0}^{T-t-1}\left(F^{\top}\right)^i P w_{t+i}\right)\\
    &+\sum_{t=0}^{s-1}\eta_t^{\top} H \eta_t+x_0^{\top} P x_0+ 2x_0^{\top}\sum_{i=0}^{T-1}\left(F^{\top}\right)^{i+1}P w_i-x_s^{\top} P x_s- 2x_s^{\top}\sum_{i=s}^{T-1}\left(F^{\top}\right)^{i+1-s}P w_i.
\end{align*}
Denote by $\Delta J_1\coloneqq \left|J_1-\widehat{J}_1\right|$. We obtain
\begin{align*}
    \Delta J_1=& \sum_{t=0}^{s-1}\eta_t^{\top} H \eta_t+x_s^{\top} P x_s - \widehat{x}_s^{\top} P \widehat{x}_s +2(x_s-\widehat{x}_s)^{\top}\sum_{i=s}^{T-1}\left(F^{\top}\right)^{i+1-s}P w_i\\
    =&\sum_{t=0}^{s-1}\eta_s^{\top} F^{s-t} H\left((F^{\top})^{s-t}\eta_s\right)+x_s^{\top} P x_s - \widehat{x}_s^{\top} P \widehat{x}_s +2(x_s-\widehat{x}_s)^{\top}\sum_{i=s}^{T-1}\left(F^{\top}\right)^{i+1-s}P w_i\\
    \le&\frac{c\|H\|}{1-\rho^2}\frac{c^2\|P\|^2R^2}{(1-\rho)^2}+2\|P\|\|x_s\|\|x_s-\widehat{x}_s\|+\|x_s-\widehat{x}_s\|^2+2\|x_s-\widehat{x}_s\|\frac{c\|P\|\rho}{1-\rho}.
\end{align*}
Since the following is true:
\begin{align*}
    x_s-\widehat{x}_s=&A(x_{s-1}-\widehat{x}_{s-1})+B(u_{s-1}-\widehat{u}_{s-1})\\
    =&(A-BK)(x_{s-1}-\widehat{x}_{s-1})+B(R+B^{\top} P B)^{-1}B^{\top}\eta_{s-1}\\
    =&\sum_{t=0}^{s-1}(F^{\top})^{s-t-1}B(R+B^{\top} P B)^{-1}B^{\top}\eta_{t},
\end{align*}
we have 
\begin{align*}
\nonumber
    \|x_s-\widehat{x}_s\|\le&\frac{c^2\|B(R+B^{\top} P B)^{-1}B^{\top}\|R}{(1-\rho)^2}.
\end{align*}
If $\|x_s\|=O(1)$, then $\Delta J_1 = O(1)$, else 
\begin{equation}
\nonumber
    \frac{\Delta J_1}{J_1}\le\frac{O(1)\cdot \|x_s\|+O(1)}{x_s^{\top} Q x_s}\to 0.
\end{equation}
Therefore, as a conclusion, $\widehat{J}_1$ can be bounded from above by
\begin{equation}\label{eq:D4.1}
    \widehat{J}_1\le J_1+O(1).
\end{equation}

\subsection{Upper Bound on $\widehat{J}_2$}

For section~\ref{appendix:upperbound_J1}, we know that $\|x_s-\widehat{x}_s\|=O(1)$. Let $\widetilde{J}_2$ denote the cost by running 1-confident algorithm from $\widehat{x}_s$ with correct prediction, and $\widetilde{x}_t$ denote the state we get in the procedure. Then 
\begin{equation}
\nonumber
    \|x_t-\widetilde{x}_t\|=\|(A-BK)(x_{t-1}-\widetilde{x}_{t-1})\|=\|F^{t-s}(x_s-\widetilde{x}_s)\|=\|F^{t-s}(x_s-\widehat{x}_s)\|.
\end{equation}
Therefore, \begin{align*}
    |J_2-\widetilde{J}_2|\le&\left|\sum_{t=s}^{T-1}\left(\widetilde{x}_t-x_t\right)^{\top} Q x_t +x_t^{\top} Q\left(\widetilde{x}_t-x_t\right) + \left(\widetilde{x}_t-x_t\right)^{\top} Q\left(\widetilde{x}_t-x_t\right)\right|\\
    &+|\sum_{t=s}^{T-1}\left(\widetilde{u}_t-u_t\right)^{\top} R u_t + u_t^{\top} R \left(\widetilde{u}_t-u_t\right) + \left(\widetilde{u}_t-u_t\right)^{\top} R \left(\widetilde{u}_t-u_t\right)|\\
    &+\left|\left(\widetilde{x}_T-x_T\right)^{\top} P x_T + x_T^{\top} P \left(\widetilde{x}_T-x_T\right) + \left(\widetilde{x}_T-x_T\right)^{\top} P \left(\widetilde{x}_T-x_T\right)\right|\\
    \le&\sum_{t=s}^{T-1}\left(\|Q\|+\|K^{\top} R K\|\right)\left\|F^{2t-2s}\right\|\left\|x_s-\widehat{x}_s\right\|^2\\
    &+\sum_{t=s}^{T-1}2\left\|F^{t-s}\right\|\left\|x_s-\widehat{x}_s\right\|(\|Q\|\|x_t\|+\|RK\|\|u_t\|)\\
    &+2\left\|F^{T-s}\right\|\|P\|\|\|x_s-\widehat{x}_s\|\|x_T\|+\left\|F^{2T-2s}\right\|\|P\|\|x_s-\widehat{x}_s\|^2\\
    =& \sum_{t=s}^{T-1}2\left\|F^{t-s}\right\|\left\|x_s-\widehat{x}_s\right\|(\|Q\|\|x_t\|+\|RK\|\|u_t\|) \\
    &b+ 2\left\|F^{T-s}\right\|\|P\|\|\left\|x_s-\widehat{x}_s\right\|\|x_T\| + O(1).
\end{align*}
If $\|x_t\|=O(1)$ and $\|u_t\|=O(1)$ for all $t$, then \begin{align*}
    \left|J_2-\widetilde{J}_2\right|=O(1).
\end{align*} 
Otherwise, suppose $x_{i_1}, x_{i_2}, \ldots, x_{i_k}$ and $u_{j_1}, u_{j_2},\ldots, u_{j_l}$ are some functions of $T$, then for any $1\le m\le k$ and $1\le n\le l$, ${\|x_{i_m}\|}/{x_{i_m}^{\top} Q x_{i_m}}\to 0$ and ${\|u_{j_n}\|}/{u_{j_n}^{\top} R u_{j_n}}\to 0$. Therefore, \begin{align*}
\frac{\left|J_2-\widetilde{J}_2\right|}{J_2}
    \le &2\|x_s-\widehat{x}_s\|\frac{\sum_{m=1}^k\left\|F^{i_m-s}\right\|\|Q\|\|x_{i_m}\|+\sum_{n=1}^l\left\|F^{j_n-s}\right\|\|R K\|\|u_{j_n}\|}{J_2}+\frac{O(1)}{J_2}\to 0.
\end{align*}
Combining the two cases, we can conclude that 
\begin{equation}\label{eq:D4.2}
    \left|J_2-\widetilde{J}_2\right|\le J_2+O(1).
\end{equation}

Therefore, from \eqref{eq:D4.1} and \eqref{eq:D4.2}, we conclude that
\begin{equation}
\widehat{J}=\widehat{J}_1+\widehat{J}_2\le J_1+O(1)+C_{\min}\widetilde{J}_2\le J_1+O(1)+C_{\min}(J_2+O(1))=C_{\min}J+O(1).
\end{equation}
The proof completes by noticing that when the prediction error is zero and $\widehat{w}_t=w_t$ for all $t=0,\ldots,\nt-1$, the accumulated error $\delta$ will always be $0$ and since the threshold $\sigma$ is positive, the algorithm is always optimal and $1$-consistent. As a result, Algorithm~\ref{alg:threshold} is $1$-consistent and $(C_{\min}+o(1))$-robust.

\section{Experimental Setup}
\label{appendix:experiments}

In our three case studies, we consider i.i.d.\ prediction errors, i.e., $e_t=\widehat{w}_t - w_t$ is an i.i.d. additive prediction noise. To illustrate the effects of randomness for simulating the worst-case performance, we consider varying types of noise in the case studies. For the robot tracking case, we set $e_t=c X$ where $X\sim B(10,0.5)$ is a binomial random variable with $10$ trials and a success probability $0.5$ and $c>0$ is a scaling parameter. For the battery-buffered EV charging case, we set $e_t= Y$ where $X\sim N(0,\sigma^2)$ is a normal random variable with zero mean and $\sigma^2$ is a variance that can be varied to generate varying prediction error. For the Cart-Pole problem, we set $e_t=Z w_t$ where $w_t=60\times B$ with $\eta\coloneqq l\left(\frac{4}{3}-\frac{m}{m+M}\right)$,
\begin{align*}
  B\coloneqq
\begin{bmatrix}
0 \\ \frac{(m+M)\eta+m l}{(m+M)^2\eta} \\ 0 \\ -\frac{1}{(m+M)\eta},
\end{bmatrix}  
\end{align*}
and $Z\sim N(0,\sigma^2)$ is a normal random variable with zero mean and $\sigma^2$ is a variance ranging between $0$ to $8\times 10 ^2$.
To simulate the worst-case performance of algorithms, in our experiments we run the algorithms $5$ times, with a new sequence of prediction noise generated at each time and choose the one with the largest overall cost.

Finally, Table~\ref{table:parameter} and Table~\ref{table:parameter2} list the detailed settings and the hyper-parameters used in the robot tracking, battery-buffered EV charging and Cart-Pole case studies.


\begin{table}[h]
\footnotesize
\renewcommand{\arraystretch}{1.3}
    \centering
       \caption{Hyper-parameters used in robot tracking and EV charging.}
    \begin{tabular}{l|l|| l|l}
    \hline
   Robot Tracking  & Value &    EV Charging & Value \\
   \hline
   Number of Monte Carlo Tests & 5 & Number of Monte Carlo Tests& 5\\
    Prediction Error Type & \textit{Binomial} &  Prediction Error Type & \textit{Gaussian}\\
    State Dimension $\nn$ & $4$ & State Dimension  $\nn$ & $10$ (Synthetic); $52$ (Realistic) \\
    Action Dimension $\nm$ & $2$ & Action Dimension $\nm$ & $10$ (Synthetic); $52$ (Realistic)\\
    Time Horizon Length $\nt$ & Fig~\ref{fig:binomial_convergence_trajectory}: $\nt=240$ & Time Horizon Length $\nt$  & $240$\\
    & Fig~\ref{fig:gaussian_convergence_trajectory}: $\nt=240$  &   & \\
    & Fig~\ref{fig:impact_tracking}: $\nt=200$ &   & \\
    Initialized $\lambda_0$  & $0.3$ & Charging Efficiency & $1$\\
    Scaling parameter $c$ & Fig~\ref{fig:impact_tracking}: $c\in [0,1]$ & Variance $\sigma^2$ & $\sigma^2\in [0,10]$ \\
    CPU  & Intel{\textregistered}  i7-8850H & CPU & Intel{\textregistered}   i7-8850H \\
    & & Energy Demand $E$ (Synthetic Case) & 5 (kWh) \\
    & & Arrival Rate (Realistic Case)  & 0.2 \\
       \hline
       \hline
    \end{tabular}
  \label{table:parameter}
\end{table}

\begin{table}[h]
\footnotesize
\renewcommand{\arraystretch}{1.3}
    \centering
       \caption{Hyper-parameters used in the Cart-Pole problem.}
    \begin{tabular}{l|l}
    \hline
   Robot Tracking  & Value  \\
   \hline
   Number of Monte Carlo Tests & 2000 \\
    Prediction Error Type & \textit{Gaussian} \\
    Action Dimension $\nm$ & 1\\
    State Dimension $\nm$ & 4\\
    Time Horizon Length $\nt$ & $200$\\
    Variance $\sigma^2$ &  $\sigma^2\in [0,800]$  \\
    Cart Mass $M$ &  $M=10.0 kg$\\
    Pole Mass $m$ &  $m=1.0 kg$\\
    Pole Length $l$ & $l=10.0 m$ \\
    CPU  & Intel{\textregistered}  i7-8850H \\
       \hline
       \hline
    \end{tabular}
  \label{table:parameter2}
\end{table}

\end{document}